\newtheorem{theorem}{Theorem}
\newtheorem{lemma}{Lemma}
\newtheorem{claim}{Claim}
\newtheorem{definition}{Definition}
\newtheorem{proposition}{Proposition}
\newcommand{\bea}{\begin{eqnarray*}}
\newcommand{\eea}{\end{eqnarray*}}
\newcommand{\ed}{\end{document}}
\newcommand{\btab}{\begin{tabular}}
\newcommand{\etab}{\end{tabular}}
\newcommand{\bc}{\begin{center}}
\newcommand{\ec}{\end{center}}
\newcommand{\bi}{\begin{itemize}}
\newcommand{\ei}{\end{itemize}}
\newcommand{\bfi}{\begin{figure}}
\newcommand{\efi}{\end{figure}}
\newcommand{\ben}{\begin{enumerate}}
\newcommand{\een}{\end{enumerate}}
\newcommand{\bdes}{\begin{description}}
\newcommand{\edes}{\end{description}}
\newcommand{\bay}{\begin{array}}
\newcommand{\eay}{\end{array}}
\def\Ver{1}
\def\LongVer{0}
\newcommand{\be}{\begin{eqnarray}}
\newcommand{\ee}{\end{eqnarray}}
\begin{document}

\thispagestyle{empty}

\begin{center}
{\Large {\bf A general spline representation for nonparametric and semiparametric density estimates using diffeomorphisms.}}

\vspace{.5in}

Ethan Anderes$^{1}$  and Marc Coram$^{2}$\\

\vspace{.25in}

${}^1$Department of Statistics, University of California, Davis, CA 95616,
USA\\e-mail: {\tt anderes@stat.ucdavis.edu}

\vspace{.25in}

${}^2$Health Research and Policy Department, Stanford University, Palo Alto, CA 94304,
USA

\end{center}

\begin{footnotetext}[1]
{Supported in part by National Science Foundation grant DMS-1007480}
\end{footnotetext}
\begin{footnotetext}[2]
{Supported in part by National Institute of Health grant 5UL1 RR02574404}
\end{footnotetext}

\newpage

\begin{abstract}
A theorem of McCann \cite{mcc:95} shows that for any two absolutely continuous probability measures on $\Bbb R^d$ there exists a monotone transformation sending one probability measure to the other. 
A consequence of this theorem, relevant to statistics,  is that density estimation can be recast in terms of transformations. In particular, one can fix any absolutely continuous probability measure, call it $\Bbb P$, and then reparameterize the whole class of absolutely continuous probability measures as monotone transformations from $\Bbb P$.
In this paper we utilize this reparameterization of densities, as monotone transformations from some $\Bbb P$, to construct semiparametric and nonparametric density estimates. We focus our attention on classes of transformations, developed in the image processing and computational anatomy literature,  which are smooth,  invertible and which have attractive computational properties. 
 The techniques developed for this class of transformations allow us to show that a penalized maximum likelihood estimate (PMLE) of a smooth transformation from $\Bbb P$ exists and has a finite dimensional characterization, similar to those results found in the spline literature. 
 These results are derived utilizing  an Euler-Lagrange characterization of the PMLE which also establishes a surprising connection to a generalization of Stein's lemma for characterizing the normal distribution.
\end{abstract}

\noindent{\it Key words and phrases}: Density estimation, Euler-Lagrange, penalized maximum likelihood,  diffeomorphism. 

\vspace{.2in}

\noindent {\bf MSC 2010 Subject Classification: 62G07}

\thispagestyle{empty}


\newpage

\section{Introduction}

\setcounter{page}{1}

A theorem of McCann \cite{mcc:95} shows that for any two absolutely continuous probability measures on $\Bbb R^d$ there exists a monotone transformation sending one probability measure to the other. 
A consequence of this theorem, relevant to statistics,  is that density estimation can be recast in terms of transformations. In particular, one can fix any absolutely continuous probability measure, call it $\Bbb P$, and then reparameterize the whole class of absolutely continuous probability measures as monotone transformations from $\Bbb P$.  The advantage of this new viewpoint is the flexibility in choosing the target measure $\Bbb P$ which can allow  prior information on the shape of true sampling distribution. 
 For example, if it is known that the data is nearly Gaussian then choosing a  Gaussian $\Bbb P$  along with a strong penalty on transformations that are far from the identity allows one to construct penalized maximum likelihood estimates which effectively shrink the resulting nonparametric estimate in the direction of the Gaussian target $\Bbb P$. Moreover, when there is no knowledge about the true sampling measure, one can simply choose any absolutely continuous $\Bbb P$ and still construct a completely nonparametric density estimate. 

In this paper we utilize this reparameterization of densities, as monotone transformations from some $\Bbb P$, to construct semiparametric and nonparametric density estimates. We focus our attention on classes transformations which are smooth,  invertible and which have attractive computational properties.  Formally, we model our data   $X_1,\ldots, X_n$ as being generated by some diffeomorphism $\phi\colon \Bbb R^d \rightarrow\Bbb R^d$ of some fixed but known probably distribution $\Bbb P$ on $\Bbb R^d$:
\begin{equation}
\label{model1} 
X_1,\ldots,X_n\overset{iid}\sim \Bbb P\circ \phi .
\end{equation} 
The notation $\Bbb P\circ \phi$ is taken to mean that the probability of $X\in A$ is given by   $ \Bbb P(\phi(A))$ where $\phi(A)=\{ \phi(x)\colon x\in A \}$. An important observation is that the model (\ref{model1}) implies that $\phi(X)\sim \Bbb P$. Therefore, one can imagine estimating $\phi$ by attempting to ``deform'' the data $X_1,\ldots,X_n$ by a transformation which satisfies
\[ \phi(X_1),\ldots,\phi(X_n)\overset{iid}\sim \Bbb P. \]
One of the main difficulties when working with such a model is the invertibility condition on the maps $\phi$. The nonlinearity of this condition in $\Bbb R^d$ when $d>1$ makes constructing rich classes and optimizing over such classes difficult. 
One of the early attempts at circumventing such difficulties, found in \cite{and:11}, utilized  the class of quasi-conformal maps to generate penalized maximum likelihood estimates of $\phi$. However, these tools were only developed for $\Bbb R^2$ with no clear generalization for higher dimension. 
In this paper we adapt the powerful tools developed by Grenander, Miller,  Younes, Trouv\'e and co-authors in the image processing and computational anatomy literature   (see \cite{you:10} and the references therein) and 
apply them to the estimation of $\phi$ from data $X_1,\ldots, X_n$. Indeed, these techniques allow us to show that a penalized maximum likelihood estimate of $\phi$ exists and that the solution has a finite dimensional characterization, similar to those results found in the spline literature (see \cite{wahba:90}). 

We start the paper in Section \ref{rich} with an overview of using the  dynamics of time varying vector field flows to generate rich classes of diffeomorphisms. Then in sections \ref{pmle} and \ref{EL} we define our penalized maximum likelihood estimate (PMLE) of $\phi$ and prove not  only existence, but also establish a  finite dimensional characterization which is key for numerically computing the resulting density estimate. In Section \ref{steinSection} we notice a surprising connection with the Euler-Lagrange equation for the PMLE of $\phi$ and a generalization of Stein's lemma for characterizing the normal distribution (see \cite{stein:04}). In sections \ref{npe} and \ref{spe} we give examples of our new density estimate, first as a nonparametric density estimate and second as a semiparametric density estimate where a finite dimensional model is used for the target probability measure $\Bbb P$. We finish the paper with an appendix which contains some technical details used for the proofs of the existence of the PMLE and for the finite dimensional characterization.

\section{A rich class of diffeomorphisms}
\label{rich}

 In this section we give a brief overview of the technique of using the  dynamics of time varying vector field flows to generate rich classes of diffeomorphisms. These time varying flows have been utilized  with spectacular success in the fields of computational anatomy and image processing
 (see  \cite{alla:07, 
 Beg:2006ly,  
 conf/miccai/BegMTY03,
  Beg:2005qf,  
   cao:05, 
   dup:98, 
 Grenander:1998:CAE:309082.309089,
 Miller:1999bh, 
 Miller:2006zr, 
 Miller:2001ve, 
 Trouve:1998ys,  
 ty:dq,  
 vaillant:04, 
you:10,
Younes:2008}, and references therein).
  The tools developed in this body of work have been shown to be very powerful  for developing algorithms which optimize a diverse range of objective functions defined over classes of diffeomorphism.  It is these tools that we adapt for density estimation and statistics.

A map $\phi\colon \Omega \rightarrow\Bbb R^d$ is said to be a $C^k(\Omega,\Bbb R^d)$ diffeomorphism of the open set $\Omega \subset \Bbb R^d$ if  $\phi$ is one-to-one, maps onto $\Omega$, and  $\phi, \phi^{-1}\in C^{k}(\Omega,\Bbb R^d)$.  In what follows we generate classes of diffeomorphisms by time varying vector field flows. In particular, let  $\{ v_t\}_{t\in [0,1]}$  be a time varying vector field in $\Bbb R^d$, where $t$ denotes  `time' so that for each $t$, $v_t$ is a function mapping $\Omega$ into $\Bbb R^d$. Under mild smoothness conditions there exists a unique class of diffeomorphisms of $\Omega$, denoted $\{ \phi_t^v\}_{t\in [0,1]}$, which satisfy the following ordinary differential equation
\begin{align}
\label{ood}
 \partial_t \phi_t^v(x) &= v_t(\phi_t^v(x))
 \end{align}
 with boundary condition $\phi_0^v(x)=x$, for all $x\in \Omega$ (see Theorem \ref{ExistFlow} below). The interpretation of these flows is that $\phi_t(x)$ represents the position of a particle at time $t$, which originated from location $x$ at time $t=0$, and flowed according the the instantaneous velocity given by $v_t$. It will be convenient to consider the diffeomorphism that maps time $t$ to some other time $s$, this will be denoted $\phi_{ts}^{ v}(x)\equiv \phi_{s}^{v}( {\phi_{t}^{ v}}^{-1}(x)) $.

For the remainder of the paper we will assume that at each time $t$, $v_t$ will be a member of a  Hilbert space of vector fields mapping  $\Omega$ into $\Bbb R^d$ with inner product denoted by $\langle\cdot, \cdot \rangle_V$ and norm by $\|  \cdot \|_V$. 
Indeed, how one chooses the Hilbert space $V$ will determine the smoothness properties of the resulting class of deformations $\{\phi_t\}_{t\in[0,1]}$.  Once the Hilbert space $V$ is fixed we can define the following set of time varying vector fields.

\begin{definition}  
\label{defV01}
 Let  $V^{[0,1]}$ denote the space of measurable functions $v_t(x)\colon [0,1]\times \Omega\rightarrow \Bbb R^d$ such that $v_t\in V$ for all $t\in [0,1]$ and $\int_0^1 \| v_t\|^2_V dt <\infty $. 
\end{definition}
One clear advantage of this class  is that it can be endowed with a Hilbert space inner product if $V$ is a Hilbert space. Indeed,
 $V^{[0,1]}$ is a Hilbert space with inner product  defined by $\langle v,h \rangle_{V^{[0,1]}} \equiv\int_0^1 \langle  v_t, h_t\rangle_V dt$ (see Proposition \ref{Hspace} in the Appendix or Proposition 8.17 in \cite{you:10}).
For the remainder of the paper we typically use $v$ or $w$ to denote  elements of  $ V^{[0,1]}$ and  $v_t$ or $w_t$ to denote the corresponding elements of $V$ at any fixed time $t$.
An important theorem found in \cite{you:10} relates the smoothness of $V$ to the smoothness of the resulting diffeomorphism. 
Before we state the theorem, some definitions will be prudent. The Hilbert space $V$ is said to be continuously embedded in another normed space $H$ (denoted $V\hookrightarrow H$) if $V\subset H$ and there exists a constant $c$ such that
\[ \| v \|_H \leq c \| v\|_V \]
for all $v\in V$ where $\|\cdot \|_H$ denotes the norm in $H$. 
 Also we let $C_0^k(\Omega,\Bbb R^d)$ denote the subset of $C^k(\Omega,\Bbb R^d)$ functions whose partial derivatives of order $k$ or less all have continuous extensions to zero at the boundary $\partial \Omega$.
\begin{theorem}[\cite{you:10}, \cite{dup:98}]
\label{ExistFlow}
If $V\hookrightarrow C_0^k(\Omega,\Bbb R^d)$, then for any $v\in V^{[0,1]}$ there exists a unique class of $C^k(\Omega,\Bbb R^d)$ diffeomorphisms $\{\phi^v_t\}_{t\in[0,1]}$ which satisfy (\ref{ood}) and  $\phi_0^v(x) = x$ for all $x\in \Omega$. 
\end{theorem}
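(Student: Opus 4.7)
The plan is to reduce the claim to a Carathéodory-type existence and uniqueness theorem for ODEs with a time-dependent right-hand side, and then to upgrade the pointwise flow to a $C^k$ diffeomorphism of $\Omega$ by standard arguments on smooth dependence on initial data together with a time-reversal argument. The hypothesis $V\hookrightarrow C_0^k(\Omega,\Bbb R^d)$ together with $v\in V^{[0,1]}$ is what makes everything integrable in $t$: by Cauchy--Schwarz,
\[
\int_0^1 \|v_t\|_{C^k}\,dt \;\le\; c \int_0^1 \|v_t\|_V\,dt \;\le\; c\left(\int_0^1 \|v_t\|_V^2\,dt\right)^{1/2} < \infty,
\]
so $v_t$ is Lipschitz in the spatial variable with a Lipschitz constant $L(t)\in L^1([0,1])$.

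For existence and uniqueness of the trajectory starting at a fixed $x\in\Omega$, I would apply the Picard iteration to the integral equation $\phi_t(x)=x+\int_0^t v_s(\phi_s(x))\,ds$ on an interval where the trajectory stays in $\Omega$. The contraction estimate uses $\int_0^t L(s)\,ds$, which is finite and absolutely continuous, and Gr\"onwall's inequality gives uniqueness and continuous dependence on $x$. To rule out finite-time exit from $\Omega$, I would use the $C_0^k$ hypothesis: since $v_t$ extends continuously to $0$ on $\partial\Omega$ uniformly in a suitable time-integrable sense, any approach to the boundary forces the velocity to vanish, so the flow is defined on all of $[0,1]$.

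To obtain $C^k$ regularity in $x$, I would differentiate the integral equation formally with respect to the initial condition. The Jacobian $D\phi_t(x)$ satisfies the linear variational equation
\[
D\phi_t(x) \;=\; I + \int_0^t Dv_s(\phi_s(x))\,D\phi_s(x)\,ds,
\]
whose coefficients lie in $L^1([0,1];C^{k-1})$ by the embedding. Standard arguments on linear ODEs with integrable coefficients give existence, uniqueness, and continuity of $D\phi_t$; iterating the differentiation up to order $k$ and invoking the $C_0^k$ extension at the boundary yields $\phi_t^v\in C^k(\Omega,\Bbb R^d)$.

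Finally, to show $\phi_t^v$ is a bijection with $C^k$ inverse, I would use time reversal. The vector field $\tilde v_s \equiv -v_{t-s}$ lies in $V^{[0,1]}$ and its flow $\tilde\phi$ satisfies $\tilde\phi_t\circ\phi_t^v = \mathrm{id}$ by the uniqueness already established; by the same regularity argument, $\tilde\phi_t\in C^k$, so $\phi_t^v$ is a $C^k$ diffeomorphism and the $\phi_{ts}^v$ notation introduced above is consistent. The main obstacle is not the ODE theory itself, which is classical, but rather the careful handling of the boundary behavior of $\Omega$: one must verify that the $C_0^k$ vanishing at $\partial\Omega$ actually prevents escape of trajectories and preserves regularity all the way up to the boundary, and it is precisely for this bookkeeping that the formulation with $C_0^k$ (rather than $C^k$) is essential.
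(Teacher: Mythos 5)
Your argument is the standard Picard--Gr\"onwall proof and is essentially the route taken in the references the paper cites for this theorem (the paper itself offers no proof; the appendix's Proposition \ref{PropDef} handles the $k=1$ case by deferring to Theorem 8.7 of \cite{you:10}, whose proof is exactly your contraction-mapping scheme: extend $v_s$ by zero outside $\Omega$ and show $\Gamma(\varphi)=x+\int_0^t v_s(\varphi_s(x))\,ds$ is a contraction on intervals where $c\int\|v_s\|_V\,ds<1$). The one step you should tighten is non-escape from $\Omega$: mere continuous vanishing of $v_t$ at $\partial\Omega$ does not by itself preclude a trajectory reaching the boundary in finite time (compare $\dot x=-\sqrt{x}$), but the $C_0^1$ embedding gives the quantitative bound $|v_t(y)|\le c\|v_t\|_V\, d(y,\partial\Omega)$, whence Gr\"onwall yields $d(\phi_t^v(x),\partial\Omega)\ge d(x,\partial\Omega)\exp\bigl(-c\int_0^t\|v_s\|_V\,ds\bigr)>0$ --- equivalently, after zero-extension every boundary point is an equilibrium, so by uniqueness no interior trajectory can meet one.
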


To derive our finite dimensional characterization we will make the additional assumption that $V$ is a reproducing kernel Hilbert space of vector fields. This will guarantee the existence of a reproducing kernel
 $K( x,y)\colon \Omega \times \Omega \rightarrow \Bbb R^{n\times n}$ which can be used to compute the evaluation functional. In particular, $K$ has the property that for any $x\in \Omega$ and $f\in V$ the following identity holds $\langle  K(x,\cdot)  p, f\rangle _V=  f(x)\cdot p$ for all $p\in \Bbb R^d$.
 To simplify the following computations we will only work with kernels of the form $K(x,y) = R(x,y)I_{d\times d}$ where $R:\Omega\times\Omega\rightarrow \Bbb R$ is a positive definite function and $I_{d\times d}$ is the $d$-by-$d$ identity matrix. 
 
 Now the class of diffeomorphisms we consider in this paper  corresponds to the set of all time varying vector field flows evaluated at $t=1$: $\phi^v_1$, where $v$ ranges through $V^{[0,1]}$. The class $V^{[0,1]}$ will be completely specified by the reproducing kernel $R(x,y)I_{d\times d}$ which has the flexibility to control the smoothness of the resulting maps $\phi_1^v$ through Theorem \ref{ExistFlow}.
%
\section{Penalized maximum likelihood estimation}
\label{pmle}

In this section we construct a penalized maximum likelihood estimate (PMLE) of $\phi^v_1$ given the data $X_1,\ldots, X_n\overset{iid}\sim \Bbb P\circ \phi^v_1$ where $\{\phi^v_t\}_{t\in[0,1]}$ satisfies (\ref{ood}) and $v\in V^{[0,1]}$.  Under mild assumptions on $V$ and the density of $\Bbb P$ we prove the existence of a PMLE estimate of $\hat v$, whereby obtaining an estimate $ \Bbb P\circ \phi^{\hat v}_1$ of the true sampling  distribution.

The target probability measure $\Bbb P$ is assumed known with a bounded density with respect to Lebesque measure on $\Bbb R^d$. 
Therefore, by writing the density of $\Bbb P$ as $\exp H$ for some  function $H\colon\mathbb R^d\rightarrow \mathbb R\cup \{-\infty\}$, the probability measure $\Bbb P\circ \phi^v_1$ has density given by
 \[ d\,\Bbb P\circ \phi_1^v(x)  =\det (D\phi_1^v(x))  \exp H \circ \phi_1^v(x) dx  \]
 where $\det (D\phi^v_1(x))$ is defined as the  determinant of the Jacobian of $\phi^v_1$ evaluated at $x\in \Omega$ (always positive by the orientation preserving nature of $\phi^v_1$). 
Since $\phi_1^v$ ranges over an infinite dimensional space of diffeomorphisms, the  likelihood for $v$ given the data  will typically be unbounded as $v$ ranges in $V^{[0,1]}$. The natural solution is to regularize the log likelihood using the corresponding  Hilbert space norm on $V^{[0,1]}$ with a  multiplicative tuning factor $\lambda/2$. 
The penalized log-likelihood (scaled by $1/n$) for the unknown vector field $v$ flow given data $X_1,\ldots, X_n\overset{iid}\sim \Bbb P\circ \phi^v_1$ is then given by
 \begin{equation}
 \label{energy1}
 E_\lambda(v)\equiv  \frac{1}{n} \sum_{k=1}^n \log\text{det}[D\phi(X_k) ]+H\circ\phi(X_k)  - \frac{\lambda}{2}  \int_0^1 \| v_t \|_V^2 dt.
 \end{equation}
 The estimated vector field $\hat v$ is chosen to be any element of $V^{[0,1]}$ which maximizes $E_\lambda$ over $V^{[0,1]}$.
 The following theorem establishes that such a $\hat v$ exists.
\begin{claim}
\label{claim1}
Let $V$ be a  Hilbert space which is  continuously  embedded in $C_0^2(\Omega, \Bbb R^d)$ where $\Omega$ is a bounded open subset of $\Bbb R^d$. Suppose $e^{H(\cdot)}$ is a bounded and continuous density  on $\Omega$. Then there exists a time varying vector field $\hat v \in V^{[0,1]}$ such that 
\begin{equation}
\label{exist}
E_\lambda(\hat v)=\sup_{v\in V^{[0,1]}}E_\lambda(v).
\end{equation}
\end{claim}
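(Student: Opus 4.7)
The plan is to apply the direct method of the calculus of variations. Concretely, I will show that $E_\lambda$ is bounded above and coercive on $V^{[0,1]}$, extract a weakly convergent maximizing sequence $v^{(n)}\rightharpoonup \hat v$, and then pass to the limit using (i) weak-to-strong continuity of the flow map $v\mapsto \phi^v_1$ together with its Jacobian at the data points, and (ii) weak lower semicontinuity of the quadratic penalty $\|\cdot\|_{V^{[0,1]}}^2$.

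\emph{A priori bounds.} Since $V\hookrightarrow C_0^2(\Omega,\Bbb R^d)$, there is a constant $c$ with $\|Dv_t\|_\infty\le c\|v_t\|_V$. Differentiating (\ref{ood}) in $x$ yields $\partial_t D\phi^v_t=Dv_t(\phi^v_t)\,D\phi^v_t$, so Gronwall's inequality and Cauchy--Schwarz give
\begin{equation*}
\|D\phi^v_1\|_\infty\;\le\;\exp\Bigl(c\int_0^1\|v_t\|_V\,dt\Bigr)\;\le\;\exp\bigl(c\|v\|_{V^{[0,1]}}\bigr).
\end{equation*}
Combined with the bound $e^{H}\le M$ this yields
\begin{equation*}
E_\lambda(v)\;\le\;dc\,\|v\|_{V^{[0,1]}}\;+\;\log M\;-\;\tfrac{\lambda}{2}\|v\|_{V^{[0,1]}}^2,
\end{equation*}
so $\sup_v E_\lambda(v)<\infty$ and every maximizing sequence $\{v^{(n)}\}$ is norm-bounded. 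By reflexivity, a subsequence (not relabeled) satisfies $v^{(n)}\rightharpoonup \hat v$ in $V^{[0,1]}$ for some $\hat v$.

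\emph{Continuity of the flow.} The crux is to show that weak convergence $v^{(n)}\rightharpoonup\hat v$ implies $\phi^{v^{(n)}}_1(X_k)\to \phi^{\hat v}_1(X_k)$ together with $D\phi^{v^{(n)}}_1(X_k)\to D\phi^{\hat v}_1(X_k)$ for each $k$. Writing the integral form of (\ref{ood}) and using the reproducing identity $v_t(y)\cdot p=\langle v_t,K(y,\cdot)p\rangle_V$, the error $v^{(n)}_t(\phi^{v^{(n)}}_t(x))-\hat v_t(\phi^{\hat v}_t(x))$ splits into a pairing of $v^{(n)}-\hat v$ against a bounded family of test fields in $V$ and a Lipschitz residual controlled by $|\phi^{v^{(n)}}_t(x)-\phi^{\hat v}_t(x)|$. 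The $C^2$-embedding plus Gronwall then propagate weak convergence to uniform convergence of trajectories and their spatial derivatives; this is the standard flow-continuity theorem underlying Theorem~\ref{ExistFlow} (see~\cite{you:10,dup:98}). This is the main technical hurdle; once it is in hand the rest is routine.

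\emph{Conclusion.} By continuity of $H$ and of the determinant on invertible matrices,
\begin{equation*}
\log\det D\phi^{v^{(n)}}_1(X_k)+H\!\circ\!\phi^{v^{(n)}}_1(X_k)\;\longrightarrow\;\log\det D\phi^{\hat v}_1(X_k)+H\!\circ\!\phi^{\hat v}_1(X_k)
\end{equation*}
for every $k$. Weak lower semicontinuity of the Hilbert norm on $V^{[0,1]}$ yields $\|\hat v\|_{V^{[0,1]}}^2\le \liminf_n\|v^{(n)}\|_{V^{[0,1]}}^2$, whence
\begin{equation*}
E_\lambda(\hat v)\;\ge\;\limsup_n E_\lambda(v^{(n)})\;=\;\sup_{v\in V^{[0,1]}}E_\lambda(v),
\end{equation*}
establishing (\ref{exist}). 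The only nontrivial ingredient is the flow-continuity step in the previous paragraph; the a priori bounds, weak compactness, and lower semicontinuity of the penalty are the standard ingredients of the direct method.
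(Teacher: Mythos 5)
Your overall strategy is the same direct method the paper uses: an a priori upper bound making maximizing sequences norm-bounded, weak compactness of balls in the Hilbert space $V^{[0,1]}$, pointwise convergence of the log-likelihood terms along the weakly convergent subsequence, and weak lower semicontinuity of the quadratic penalty. The a priori bound via Gronwall on $\partial_t D\phi^v_t = Dv_t(\phi^v_t)D\phi^v_t$ is fine and essentially equivalent to the paper's bound, which instead uses $\log\det D\phi_1^v(x)=\int_0^1 \operatorname{div} v_t(\phi^v_t(x))\,dt$ to get $E_1(v)+E_3(v)\le \int_0^1\bigl(c\|v_t\|_V-\tfrac{\lambda}{2}\|v_t\|_V^2\bigr)dt\le c^2/(2\lambda)$.

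The gap is in your ``continuity of the flow'' step. You assert that weak convergence $v^{(n)}\rightharpoonup\hat v$ yields not only $\phi^{v^{(n)}}_1(X_k)\to\phi^{\hat v}_1(X_k)$ but also $D\phi^{v^{(n)}}_1(X_k)\to D\phi^{\hat v}_1(X_k)$, and you attribute both to ``the standard flow-continuity theorem underlying Theorem~\ref{ExistFlow}.'' The result actually available (Theorem 3.1 of \cite{dup:98}, used in the paper) gives only uniform convergence of the \emph{trajectories} $\phi^{v^{(n)}}_t(x)\to\phi^{\hat v}_t(x)$ under weak convergence; convergence of the spatial Jacobians is a strictly stronger statement that is neither cited nor proved in your proposal, and it is precisely the nontrivial content of the step (the forcing term $Dv^{(n)}_u(\phi^{\hat v}_u)-D\hat v_u(\phi^{\hat v}_u)$ in the linearized Jacobian equation converges to zero only weakly, so one must argue via equicontinuity in $t$ plus a further Gronwall estimate before invoking continuity of $\det$). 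The paper avoids this entirely: it writes $\log\det D\phi_1^{v^{(n)}}(x)-\log\det D\phi_1^{\hat v}(x)=\int_0^1\bigl[\operatorname{div} v^{(n)}_t(\phi^{v^{(n)}}_t(x))-\operatorname{div}\hat v_t(\phi^{\hat v}_t(x))\bigr]dt$ and splits this into a term controlled by the uniform convergence of trajectories together with the bound $\|\operatorname{div} v^{(n)}_t\|_{1,\infty}\le c\|v^{(n)}_t\|_V$, and a term of the form $\langle v^{(n)}-\hat v, w^{\hat v}\rangle_{V^{[0,1]}}$ obtained from the Riesz representer of $v\mapsto\int_0^1\operatorname{div} v_t(\phi^{\hat v}_t(x))\,dt$, which vanishes by weak convergence. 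You should either carry out the Gronwall/equicontinuity argument for the Jacobians in full, or replace that step by the divergence identity (\ref{div}), which needs only trajectory convergence.
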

\begin{proof}

We first establish $\sup_{v\in V^{[0,1]}}E_\lambda(v)<\infty$ by splitting the energy $E_\lambda$ into three parts
\begin{equation}
\label{decomp}
E_\lambda(v)=\underbrace{\frac{1}{n}\sum_{k=1}^n   \log\text{det} D\phi_1^v(X_k) }_{:=E_1(v)}+ \underbrace{\frac{1}{n}\sum_{k=1}^n H\circ\phi_1^v(X_k)}_{:=E_2(v)}  \underbrace{-\frac{\lambda}{2} \int_0^1 \| v_t \|_V^2 dt.}_{:=E_3(v)}  
\end{equation}
Notice that each term is well defined and finite whenever $v\in V^{[0,1]}$, since
the assumption $V\hookrightarrow C_0^2(\Omega, \Bbb R^d)$ is sufficient for  Theorem 8.7   in \cite{you:10} to apply to the class  $ V^{[0,1]}$. In particular, for any  $v\in V^{[0,1]}$  there exists a unique class of $C^1$ diffeomorphisms of $\Omega$, $\{\phi_t^v\}_{t\in [0,1]}$, which satisfies (\ref{ood}) (also see Theorem 2.5 in \cite{dup:98}). 
The term $E_2(v)$ is clearly bounded from above since  $\sup_{x\in\Omega}H(x)<\infty$ by assumption.
For the remaining two terms notice that  the determinant of the Jacobian is given by   $ \log\det D\phi_1^v(x)  = \int_0^1  \text{div}\,  v_t (\phi_t^v(x))dt$ (by equation (\ref{div}) in the Appendix).
Therefore
\begin{align}
E_1(v) + E_3(v) & = \frac{1}{n}\sum_{k=1}^n   \int_0^1 \left(  \text{div}\,  v_t (\phi_t^v(X_k))  - \frac{\lambda}{2}   \| v_t \|_V^2
\right)dt \nonumber\\
 & \leq \frac{1}{n}\sum_{k=1}^n   \int_0^1 \left( \sup_{x\in \Omega} |\text{div}\,  v_t (x)|  - \frac{\lambda}{2}   \| v_t \|_V^2
\right)dt \nonumber \\
 & \leq    \int_0^1 \left( c\| v_t \|_V  - \frac{\lambda}{2}   \| v_t \|_V^2
\right)dt, \,\,\text{by the assumption $V\hookrightarrow C_0^2(\Omega, \Bbb R^d)$} \nonumber \\
& \leq \frac{c^2}{2\lambda}<\infty. \nonumber
\end{align}

Now let $v^1,v^2,\ldots$ be any maximizing sequence that satisfies $\lim_{m\rightarrow \infty} E(v^m) = \sup_{v\in V^{[0,1]}}E_\lambda(v)$. Since $\sup_{v\in V^{[0,1]}}E_\lambda(v)<\infty $  we can construct the sequence $v^m$ so that there exists an $M<\infty$ such that $\| v^m \|_{V^{[0,1]}}\leq M$ for all $m$. Since $\Omega$ is bounded, closed finite balls in $V^{[0,1]}=L^2([0,1],V)$ are weakly compact (by \cite{dup:98}). Therefore we may  extract a subsequence from $v^m$ (relabeled by $m$) which weakly converges to  a $\hat v \in V^{[0,1]}$. In particular, $\langle v^m, w \rangle_{V^{[0,1]}}\rightarrow \langle \hat v, w \rangle_{V^{[0,1]}}$ for all $w\in V^{[0,1]}$. Furthermore we have lower semicontinuity of the norm
\begin{equation}
\label{liminf}
 \liminf_{m\rightarrow \infty} \| v^m \|^2_{V^{[0,1]}} \geq  \| \hat v \|^2_{V^{[0,1]}}.
\end{equation}
Now by Theorem 3.1 in \cite{dup:98} 
\if\Ver\LongVer{ 
{\flushleft\textcolor{blue}{$\downarrow$---------begin long version---------}}\newline
\cite{dup:98}  applies since the assumption $[W_0^{3,2}(\Omega)]^3$ (where $\Omega\subset \Bbb R^3$ in their paper) is only used to establish that $[W_0^{3,2}(G)]^3\hookrightarrow C_0^1(G)$. Since we are assuming $V\hookrightarrow C_0^2(\Omega, \Bbb R^d)\hookrightarrow C_0^1(\Omega, \Bbb R^d)$ we are free to use the results in \cite{dup:98} . 
{\flushleft\textcolor{blue}{$\uparrow$------------end long version---------}}\newline
} \fi
 we have that  $\phi^{v^m}_t(x) \rightarrow \phi^{\hat v}_t(x)$ uniformly in $t\in [0,1]$ as $m\rightarrow \infty$. 
This allows us to show that $\log \det D\phi_1^{v^m}(x)\overset{m\rightarrow\infty}\longrightarrow \log \det D\phi_1^{\hat v}(x)$ for every $x\in \Omega$. To see why, one can use similar reasoning as in \cite{cao:05}. First write
\begin{align*}
|\log \det D\phi_1^{v^m}(x)- \log \det D\phi_1^{\hat v}(x)| &= \left| \int_{0}^1 \text{div}\, v_t^m(\phi_t^{v^m}(x)) -  \text{div}\, \hat v_t(\phi_t^{\hat v}(x)) dt \right| 
=I + I\!I 
\end{align*}
where the first term $I$ satisfies 
\begin{align*}
I &\equiv \left| \int_{0}^1 \text{div}\, v^m_t(\phi_t^{v^m}(x)) -  \text{div}\, v^m_t(\phi_t^{\hat v}(x)) dt \right| \\
&\leq \int_{0}^1 \|\text{div}\, v^m_t\|_{1,\infty} \bigl|\phi_t^{v^m}(x) -  \phi_t^{\hat v}(x)\bigr| dt  \\
&\leq   \int_{0}^1 c\| v^m_t\|_{V} \bigl|\phi_t^{v^m}(x) -  \phi_t^{\hat v}(x)\bigr| dt,\,\,\text{since $V\hookrightarrow C_0^2(\Omega,\Bbb R^d)$} \\
&\leq  c\|  v^m\|_{V^{[0,1]}} \Bigl[\int_{0}^1 \underbrace{\bigl|\phi_t^{v^m}(x) -  \phi_t^{\hat v}(x)\bigr|^2}_\text{ $= o(1)$ uniformly in $t$} dt\Bigr]^{1/2},\,\,\text{by H\"older.} \\
&\rightarrow 0,\,\text{ since $\| v^m\|_{V^{[0,1]}}\leq M$ for all $m$.}
\end{align*}
For the second term $I\!I$ notice that  the map sending $v\mapsto \int_0^1 \text{div}\, v_t(y_t) dt$ is a bounded linear functional on $V^{[0,1]}$ (using the fact that $V\hookrightarrow C_0^1(\Omega, \Bbb R^d)$) where $y_t \equiv \phi_t^{\hat v}(x)$. By the Riesz representation theorem there exists a $w^{\hat v}\in V^{[0,1]}$ such that $\int_0^1 \text{div}\, v_t(y_t) dt = \langle v,w^{\hat v} \rangle_\text{\tiny $V^{[0,1]}$}$. Therefore  
 \begin{align*} 
I\!I 
&\equiv \left| \int_{0}^1 \text{div}\, v_t^m(\phi_t^{\hat v}(x)) -  \text{div}\, \hat v_t(\phi_t^{\hat v}(x)) dt \right|\\
&=   \Bigl| \langle v^m -  \hat v, w^{\hat v}\rangle_{V^{[0,1]}}\Bigr|\rightarrow 0, \,\,\text{by weak convergence.}
\end{align*}
Combining the results for $I$ and $I\!I$ we can conclude that $\log \det D\phi_1^{v^m}(x)\overset{m\rightarrow\infty}\longrightarrow \log \det D\phi_1^{\hat v}(x)$ for every $x\in \Omega$.

To finish the proof notice that
\begin{align}
\sup_{v\in V^{[0,1]}} E_\lambda(v) &= \lim_{m\rightarrow \infty} E(v^m)= \limsup_{m\rightarrow \infty} E(v^m)\nonumber\\
&= \frac{1}{n}\sum_{k=1}^n   \log\text{det} D\phi_1^{\hat v}(X_k)+ \frac{1}{n}\sum_{k=1}^n H\circ\phi_1^{\hat v}(X_k) -\frac{\lambda}{2} \liminf_{m\rightarrow \infty} \int_0^1 \| v^m_t \|_V^2 dt \nonumber\\
&\leq \frac{1}{n}\sum_{k=1}^n   \log\text{det} D\phi_1^{\hat v}(X_k)+ \frac{1}{n}\sum_{k=1}^n H\circ\phi_1^{\hat v}(X_k) -\frac{\lambda}{2}  \int_0^1 \| \hat v_t \|_V^2 dt,\,\,\text{ by (\ref{liminf})}\nonumber \\
&= E_\lambda(\hat v) \nonumber
\end{align}

  \end{proof}

One of the important facts about any vector field flow $\hat v\in V^{[0,1]}$ which maximizes $E_\lambda$  is that the resulting estimated transformation  $\phi_1^{\hat v}$ is a geodesic (or minimum energy) flow with respect to the vector field norm $\int_0^1 \| \hat v_t\|^2_V dt$. To see this is first notice that the parameterization of time $t=1$ maps, $\phi_1^v$, by vector fields $v\in V^{[0,1]}$ is a many-to-one  parameterization. 
In other words there exist multiple pairs of vector fields $v,w\in V^{[0,1]}$ such that $\phi_1^v =\phi^w_1$ but $v\neq w$. Notice, however, that the log-likelihood term in $E_\lambda$ only depends on   $\phi_1^{\hat v}$. This implies that any maximizer $\hat v$ of $E_\lambda$ must simultaneously minimize  the penalty $\int_0^1 \| \hat v_t\|^2_V dt$ over the class of all $w\in V^{[0,1]}$ which has the same terminal value, i.e.\! $\phi^{\hat v}_1=\phi^w_1$.
 Consequently, the PMLE estimate $\hat v$ must be a geodesic flow. An important consequence  is that  geodesic flows $\{ \hat v_t \}_{t\in [0,1]}$ are completely determined by the initial vector field $\hat v_0$.  This will become particularly important in the next section where the initial velocity field will be completely parameterized by $n$ coefficient vectors.

%
%
%
%
\section{Spline representation from Euler-Lagrange}
\label{EL}

In this section we work under the additional assumption that $V$ is a reproducing kernel Hilbert space. This assumption allows one to derive the Euler-Lagrange equation for any maximizer $\hat v$ of which satisfies (\ref{exist}). This leads to a finite dimensional characterization of  $\hat v$  which  parallel those results found in the spline literature for function estimation.

\begin{claim}
\label{claim2}
Let $V$ be a  reproducing kernel Hilbert space, with  kernel $R(x,y)I_{d\times d}$, continuously  embedded in $C_0^3(\Omega, \Bbb R^d)$  where $\Omega$ is bounded open subset of $\Bbb R^d$. Suppose $e^{H(\cdot)}$ is a $C^1(\bar \Omega,\Bbb R)$ density  on $\Omega$. Then any time varying vector field $\hat v \in V^{[0,1]}$ which satisfies (\ref{exist}) also satisfies the following
 Euler-Lagrange equation: 
 \begin{align}
 \label{ELeq}
 \hat v_t(x)&=  \frac{1}{\lambda n}\sum_{k=1}^n \beta_{k,t}^T R(x,X_{k,t})  +  \frac{1}{\lambda n}\sum_{k=1}^n   \nabla_{y} R(x,y)\Bigr|_{y= X_{k,t}}
\end{align}
where $X_{k,t}\equiv \phi_t^{\hat v} (X_k)$ and 
$ \beta_{k,t}\equiv   \nabla H(X_{k,1}) D\phi^{\hat v}_{t1}(X_{k,t})  +\nabla \log\det D\phi^{\hat v}_{t1} (X_{k,t})$. 
\end{claim}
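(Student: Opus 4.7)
The strategy is to compute the G\^ateaux derivative of $E_\lambda$ at $\hat v$ along an arbitrary direction $h\in V^{[0,1]}$, set it to zero (since $\hat v$ is an interior maximizer of a differentiable functional on a Hilbert space), and then identify the Riesz representer in $V$ at each time $t$ via the reproducing kernel. First I linearize the flow: the sensitivity $\psi_{k,t}:=\partial_\epsilon\phi_t^{\hat v+\epsilon h}(X_k)|_{\epsilon=0}$ solves the linear ODE
$$\partial_t\psi_{k,t}=D\hat v_t(X_{k,t})\psi_{k,t}+h_t(X_{k,t}),\qquad\psi_{k,0}=0,$$
with $X_{k,t}=\phi_t^{\hat v}(X_k)$; the embedding $V\hookrightarrow C_0^3(\Omega,\Bbb R^d)$ justifies all derivatives used below. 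Using this together with the formula $\log\det D\phi_1^v(x)=\int_0^1\text{div}\, v_t(\phi_t^v(x))\,dt$ from the proof of Claim~\ref{claim1}, the three first variations of $E_\lambda$ are: the penalty gives $-\lambda\int_0^1\langle\hat v_t,h_t\rangle_V\,dt$; the term $E_2$ gives $\tfrac1n\sum_k\nabla H(X_{k,1})\psi_{k,1}$; and $E_1$ gives $\tfrac1n\sum_k\int_0^1\!\bigl[\text{div}\,h_t(X_{k,t})+\nabla[\text{div}\,\hat v_t](X_{k,t})\,\psi_{k,t}\bigr]\,dt$.

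To eliminate the nonlocal-in-$t$ dependence of $\psi_{k,t}$ on $h$, I would introduce a co-state $\eta_{k,t}\in\Bbb R^d$ solving the backward terminal-value problem
$$-\partial_t\eta_{k,t}=D\hat v_t(X_{k,t})^T\eta_{k,t}+\nabla[\text{div}\,\hat v_t](X_{k,t}),\qquad\eta_{k,1}=\nabla H(X_{k,1})^T.$$
Differentiating $t\mapsto\eta_{k,t}^T\psi_{k,t}$ along the two ODEs (the $D\hat v_t\psi_{k,t}$ terms cancel) and integrating on $[0,1]$ converts the $\psi_{k,t}$-dependent contributions into $\int_0^1\eta_{k,t}^T h_t(X_{k,t})\,dt$. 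The crux of the argument, which I expect to be the main obstacle, is identifying $\eta_{k,t}=\beta_{k,t}^T$. For this I differentiate the semigroup identity $\phi_{t1}^{\hat v}\circ\phi_t^{\hat v}=\phi_1^{\hat v}$ in $t$ to get $\partial_t\phi_{t1}^{\hat v}(y)=-D\phi_{t1}^{\hat v}(y)\hat v_t(y)$, which along the trajectory $X_{k,t}$ yields $\tfrac{d}{dt}D\phi_{t1}^{\hat v}(X_{k,t})=-D\phi_{t1}^{\hat v}(X_{k,t})D\hat v_t(X_{k,t})$; an analogous computation for $\nabla[\log\det D\phi_{t1}^{\hat v}](X_{k,t})$ supplies the inhomogeneous term $\nabla[\text{div}\,\hat v_t](X_{k,t})$. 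Since $D\phi_{11}^{\hat v}=I$ and $\log\det D\phi_{11}^{\hat v}=0$, both $\beta_{k,t}^T$ and $\eta_{k,t}$ satisfy the same terminal-value problem, so uniqueness delivers the identification.

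Finally, the vanishing of the G\^ateaux derivative for every $h\in V^{[0,1]}$ becomes
$$\int_0^1\Bigl\{\tfrac1n\sum_{k=1}^n\bigl[\text{div}\,h_t(X_{k,t})+\beta_{k,t}\,h_t(X_{k,t})\bigr]-\lambda\langle\hat v_t,h_t\rangle_V\Bigr\}\,dt=0.$$
The reproducing identity $p^T h_t(x)=\langle R(\cdot,x)p,h_t\rangle_V$ rewrites the momentum term as $\langle R(\cdot,X_{k,t})\beta_{k,t},h_t\rangle_V$. Differentiating the reproducing identity componentwise in $x$ and summing the diagonal yields $\text{div}\,h_t(x)=\langle\nabla_y R(\cdot,y)|_{y=x},h_t\rangle_V$, which handles the divergence term. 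Equating Riesz representers in $V$ pointwise in $t$ then produces (\ref{ELeq}) for a.e.\ $t\in[0,1]$, and continuity of $t\mapsto X_{k,t}$ and of the Jacobians $D\phi_{t1}^{\hat v}$ upgrades the identity to every $t\in[0,1]$.
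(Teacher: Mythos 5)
Your proposal is correct and arrives at the same Euler--Lagrange equation, but the central technical step is organized differently from the paper. Both arguments compute the G\^ateaux derivative of $E_\lambda$ term by term and then read off the Riesz representer in $V$ at each time $t$ using the reproducing property (including the differentiated identity $\text{div}\, h_t(x)=\langle \nabla_y R(\cdot,y)|_{y=x},h_t\rangle_V$, which the paper also invokes). Where you diverge is in how the time-nonlocal dependence on the perturbation $h$ is collapsed. The paper quotes the closed-form sensitivity formula $\partial_\epsilon \phi^{\hat v+\epsilon h}_1(x)|_{\epsilon=0}=\int_0^1 \{D\phi^{\hat v}_{u1}h_u\}\circ \phi^{\hat v}_u(x)\,du$ (equation (\ref{111}), from Younes) and proves separately, in Proposition \ref{proo}, the identity (\ref{888}) for $\partial_\epsilon \log\det D\phi_1^{\hat v+\epsilon h}$; the coefficients $\beta_{k,t}$ then fall out directly from these two formulas. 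You instead work with the forward variational ODE for $\psi_{k,t}$ and introduce a backward co-state $\eta_{k,t}$ (a Pontryagin-style adjoint argument), which converts the $\psi$-dependent contributions of both likelihood terms at once into $\int_0^1 \eta_{k,t}^T h_t(X_{k,t})\,dt$; you then must separately verify that $\beta_{k,t}^T$ solves the same backward terminal-value problem, which your differentiation of the semigroup identity and of $\nabla\log\det D\phi^{\hat v}_{t1}$ along the trajectory does correctly. In effect your adjoint computation is an alternative proof of (\ref{888}) that bypasses the paper's trace identity $\text{trace}[\{D(D\phi\, h)\}(D\phi)^{-1}]=h\cdot\nabla\log\det D\phi+\text{div}\,h$ and the pass-derivatives-under-the-integral bookkeeping of the appendix, at the cost of the uniqueness argument for the linear backward ODE. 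The regularity demands are comparable (both need $\nabla[\text{div}\,\hat v_t]$ to make sense, hence the $C_0^3$ embedding), and your a.e.-in-$t$ conclusion with the continuity upgrade is, if anything, slightly more careful than the paper's. The one place to tighten is the opening assertion that the flow is differentiable in $\epsilon$ with $\psi_{k,t}$ solving the stated linear ODE: this is exactly what the paper outsources to Theorem 8.10 of Younes, and it deserves a citation or proof rather than the blanket appeal to the embedding.
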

\begin{proof} 
Let $E_1, E_2$ and $E_3$ decompose $E_\lambda$ as in (\ref{decomp}). Notice first that if  $h\in V^{[0,1]}$  and $\epsilon \in \Bbb R$ then $2 E_3(\hat v+\epsilon h) = {\lambda}\| \hat v \|^2_{V^{[0,1]}} + \epsilon 2 \lambda\langle v,h \rangle_{V^{[0,1]}}  +\epsilon^2{\lambda}\| h \|^2_{V^{[0,1]}} $. Therefore $E_3(\hat v+\epsilon h)$ is differentiable with respect to $\epsilon$ with derivative given by
\begin{equation}
 \label{dderE1}
{\partial_\epsilon}  E_3(\hat v+\epsilon h)\bigr|_{\epsilon=0}
=  \int_0^1 \langle h_t,\lambda \hat v_t \rangle_V dt.
\end{equation}
In addition, Theorem 8.10 of \cite{you:10}
 implies that  $\phi^{\hat v+\epsilon h}_1(x)$  is differentiable at $\epsilon = 0$. 
Now, the assumption  $H\in C^1(\bar\Omega)$ combined with equation (\ref{111}), in the Appendix, gives 
\begin{align} 
\partial_\epsilon E_2(\hat v+\epsilon h)\bigr|_{\epsilon=0} \nonumber
&= -\frac{1}{n}\sum_{k=1}^n  \nabla H(\phi_1^{\hat v }(X_k)) \cdot {\partial_\epsilon}  \phi^{\hat v+\epsilon h}_{1}(X_k)\bigr|_{\epsilon = 0} \nonumber \\
&= -\frac{1}{n}\sum_{k=1}^n  \nabla H(\phi_1^{\hat v }(X_k)) \cdot \int_0^1  \bigl\{D\phi^{ \hat v}_{u1} h_u \bigr\}\circ{\phi^{ \hat v}_{u}(X_k)}\,   du \nonumber \\
&= -\frac{1}{n}\sum_{k=1}^n \int_0^1  \bigl\{  \nabla H( X_{k,1}) D\phi^{\hat v}_{u1}(X_{k,u}) \bigr\} \cdot  h_u (X_{k,u})\,   du \nonumber \\
&=  \int_0^1\Bigl\langle  h_u(\cdot),  -\frac{1}{n}\sum_{k=1}^n \bigl\{ \nabla H( X_{k,1}) D\phi^{\hat v}_{u1}(X_{k,u}) \bigr\}^T R(\cdot,X_{k,u})\Bigr\rangle_V \,du  \label{derE2}
\end{align}
Finally, Proposition \ref{proo}, from the Appendix, implies $E_3(\hat v+\epsilon h )$ is differentiable at $\epsilon = 0$ with derivative given by  
\begin{align}
\partial_\epsilon  E_1(\hat v+\epsilon h)\bigr|_{\epsilon=0} \nonumber
&  =- \frac{1}{n} \sum_{k=1}^n  \partial_\epsilon  \log \det D\phi_{1}^{\hat v+\epsilon h }(X_k)\bigr|_{\epsilon = 0} \nonumber\\
&=- \frac{1}{n} \sum_{k=1}^n   
 \int_0^1  \Bigl[ h_u \cdot \nabla \log\det D\phi_{u1}^{\hat v}  + \text{\rm div}\, h_u\Bigr] \circ  \phi^{\hat v}_{u}(X_k)  \, du \nonumber\\
&= \int_0^1 \Bigl\langle h_u(\cdot), - \frac{1}{n} \sum_{k=1}^n \left\{ \nabla \log\det D\phi_{u1}^{\hat v}(X_{k,u})\right\}^T  R(\cdot,X_{k,u})+\nabla_{y} R(\cdot,y)\bigr|_{y=X_{k,u}}\Bigr\rangle_V \,du \label{derE3} 
\end{align}
{\em Remark:} the above equation requires $\partial_{x_i } ( e_i\cdot h_u(x))  = \partial_{x_i}   \langle e_i R(\cdot, x),  h_u\rangle_V=  \langle e_i  \partial_{x_i} R(\cdot, x),  h_u\rangle_V$ which follows since $\text{div}\, h_u \in V$ by the assumption $V\hookrightarrow C_0^2(\Omega,\Bbb R^d)$ (see \cite{aro:50}).
Now from (\ref{dderE1}),  (\ref{derE2}) and (\ref{derE3}), the energy $ E_\lambda(\hat v+\epsilon h)$ is differentiable with respect to $\epsilon$ at $0$ and
\begin{equation}
\label{EEEl}
0={\partial_\epsilon}  E_\lambda(\hat v+\epsilon h)\bigr|_{\epsilon=0}= \langle \mathcal E ^{\hat v},h \rangle_{V^{[0,1]}}
\end{equation}
where
\begin{equation}
\label{CalE}
 \mathcal E ^{\hat v}_t =\lambda \hat v_t   - \frac{1}{n}\sum_{k=1}^n \beta_{k,t}^T  R(\cdot,X_{k,t})  - \frac{1}{n}\sum_{k=1}^n \nabla_{y} R(\cdot,y)\bigr|_{y=X_{k,t}}
 \end{equation}
with $\beta_{k,t}\equiv  \nabla H( X_{k,1}) D\phi^{\hat v}_{t1}(X_{k,t})  + \nabla \log\det D\phi_{t1}^{\hat v}(X_{k,t}) $.
Since $h\in V^{[0,1]}$ was arbitrary, equation (\ref{EEEl}) implies $\mathcal E^{\hat v}=0$, which then gives (\ref{ELeq}). {\em Remark:} we are using the fact that the zero function in a reproducing kernel space is point-wise zero since the evaluation functionals are bounded. 
\end{proof}

   There are a few things things to note here. First, the Euler-Lagrange equation  (\ref{ELeq}) only implicitly characterizes $\hat v$ since it appears on both sides of the equality ($\beta_{k,t}$ and $X_{k,t}$ also  depend on $\hat v$). Regardless,  (\ref{ELeq}) is useful since it implies that $\hat v$ must lie within a known $n\times d$ dimensional sub-space of $V^{[0,1]}$. In particular, as discussed at the end of Section \ref{pmle}, the estimate $\{ \hat v_t\}_{t\in [0,1]}$ is completely characterized by it's value at time $t=0$, i.e.\! $\hat v_0$ (by the geodesic nature of $\hat v$). Restricting equation (\ref{ELeq}) to  $t=0$ one obtains
   \begin{align}
 \label{ELeq0}
 \hat v_0(x)&=  \frac{1}{\lambda n}\sum_{k=1}^n \beta_{k,0}^T R(x,X_{k})  +  \frac{1}{\lambda n}\sum_{k=1}^n   \nabla_{y} R(x,y)\Bigr|_{y= X_{k}}.
\end{align}
  Simply stated, $\hat v_0$ has a finite dimensional spline characterization with spline knots set at the observations $X_1,\ldots, X_n$. Therefore to recover $\{ \hat v_t \}_{t\in[0,1]}$ one simply needs to find the $n$ vectors $\beta_{1,0},\dots, \beta_{n,0}$ which satisfy the following fixed point equation
        \begin{equation}
 \label{InitialMom}
  \beta_{k,0}= \nabla H(\phi^{\hat v}_1(X_k))  D\phi^{\hat v}_{1}(X_k)+ \nabla \log\det D\phi^{\hat v}_{1} (X_k)
  \end{equation}
 for all $k=1,\ldots,n$.

\section{Connection to Stein's Method}
\label{steinSection}

The Euler-Lagrange equation given in (\ref{ELeq}) has a surprising connection with a generalization of Stein's lemma for characterizing the normal distribution (see \cite{stein:04}). The main connection is that the Euler Lagrange equation for the PMLE estimate $\hat v_t$, simplified at initial time $t=0$ and terminal time $t=1$, can be reinterpreted as an empirical version of a generalization of Stein's lemma. This is interesting in it's own right, however, the connection may also bear theoretical fruit for deriving asymptotic estimation bounds on the nonparametric and semiparametric estimates derived from $\hat v$. In this section we make this connection explicit  with the goal of of motivating and explaining the Euler-Lagrange equation for $\hat v$ derived above.

To relate $\hat v_t$ at $t=0$ with Stein's lemma, and more generally Stein's method for distributional approximation,  first notice that (\ref{InitialMom}) implies the coefficients $\beta_{k,0}$,  from the implicit equation  (\ref{ELeq0}) for  $\hat v$, satisfy
$\beta_{k,0}
 = \nabla \log \hat f(X_k)
$ where  $\hat f= e^{ H\circ \phi^{\hat v}_1}  |D\phi_1^{\hat v}| $ is the estimated density of $X$ using the pullback of the target measure with the estimated diffeomorphisms $\phi^{\hat v}_1$.
 Now by computing the inner product of  both sides of the Euler-Lagrange equation (\ref{ELeq0}) with any vector field $u\in V$ and applying the reproducing property of the kernel $R(\cdot,\cdot)$ one derives
 \begin{align}
\lambda \langle \hat v_0, u\rangle_V
&=\Bbb E_n\bigl\{ \nabla \log \hat f(X)\cdot  u(X) + \text{div}\, u(X)\bigr\}. \label{stein2} 
\end{align}
where $\Bbb E_n$ denotes  expectation with respect to the empirical measure generated by the data: $\frac{1}{n}\sum_{k=1}^n \delta_{X_k}$. 
To relate with Stein first let $\Bbb E$ denote expectation with respect to the population density $f =    e^{ H\circ \phi} |D\phi| $ given in our basic model (\ref{model1}). Notice that a generalization of Stein's lemma shows that if the densities $f$ and $\hat f$ give rise to the same probability measure then
\begin{equation}
\label{ste}
0= \Bbb E\bigl\{ \nabla \log \hat f(X) \cdot u(X) + \text{div}\, u(X)\bigr\} 
 \end{equation}
for all $u$ in a large class of test functions $\mathcal U$ (see Proposition 4 in \cite{stein:04}).
For example, a simple consequence of Lemma 2 in \cite{stein:81} implies that when $\hat f$ is the density of a  $d$ dimensional Gaussian  distribution $\mathcal N_d(\hat\mu,1)$  and $X\sim \mathcal N_d(\mu,1)$ then $\hat\mu = \mu$ implies $ \Bbb E\bigl\{ -  (X-\hat\mu) \cdot u(X) + \text{div}\, u(X) \bigr\} =0$ 
for any bounded function $u:\Bbb R^d \rightarrow \Bbb R^d$ with bounded gradient. Stein's method, on the other hand, generally refers to  a technique for  bounding the distance between two probability measures $f$ and $\hat f$ using bounds on departures from a characterizing equation, such as  (\ref{ste}) for example (see \cite{chen:05} for an exposition). The bounds typically take the form
\begin{align} \sup_{h\in \mathcal H} \left| \int ( h f - h \hat f ) \right| &\leq \sup_{u\in \mathcal U} \bigl| \Bbb E\bigl\{ \nabla \log \hat f(X)\cdot u(X) + \text{div}\, u(X)\bigr\} \bigr| \label{ssmethod} 
 \end{align}
where $\mathcal H$ and $\mathcal U$ are two class of functions related through a set of differential equations.
  In our case, applying a H\"older's inequality to the Euler-Lagrange equation (\ref{stein2}) gives a bound on  right hand side of (\ref{ssmethod}) in terms of a regularization measurement on the PMLE $\hat v$ and an empirical process error:
    \begin{align}
    \label{reg1}
     \sup_{u\in \mathcal U} \bigl| \Bbb E\bigl\{ \nabla \log \hat f(X) \cdot u(X) + \text{div}\, u(X)\bigr\} \bigr|   &\leq \underbrace{ \lambda \| \hat v_0\|_V   \sup_{u\in \mathcal U} \|  u \|_V}_\text{regularization at $t=0$} + \underbrace{\sup_{u\in \mathcal U} \bigl| (\Bbb E -\Bbb E_n) \nu_{\hat f,u}\bigr|}_\text{ empirical process error}
 \end{align}
 where $\nu_{\hat f,u} = \nabla \log \hat f(X) u(X) + \text{div}\, u(X)$.  
 This makes it clear that  theoretical control of the PMLE estimate $\hat v_t$ at time $t=0$, using the Euler-Lagrange equation characterization (\ref{ELeq0}), allows asymptotic control of the distance between the estimated density $\hat f$ and the true density $f$.
 
 At terminal time $t=1$, there is a similar connection with Stein's lemma. In contrast to time $t=0$, which quantifies the distance between the estimated  and population densities $\hat f$ and $f$,  time $t=1$ quantifies the distance between $\phi(X)$ (the target measure) with $\phi_1^{\hat v}(X)$ (the push forward of the true population distribution though the estimated map).  To make the connection, one follows  the same line of argument as above to  find that  for any $u\in V$
\begin{align}
\lambda \langle \hat v_1, u\rangle_V &=  {\Bbb E}^{\hat v}_n\bigl\{ \nabla H(X)\cdot u(X ) + \text{div}\, u(X)\bigr\} \label{stein}
\end{align}
where ${\Bbb E}^{\hat v}_n$ denotes  expectation with respect to the empirical measure $\frac{1}{n}\sum_{i=1}^n \delta_{\phi^{\hat v}_1(X_k)}$, which is simply the push forward of the empirical measure $\frac{1}{n}\sum_{i=1}^n \delta_{X_k}$ through the estimated map $\phi^{\hat v}_1$.
Now the analog to (\ref{reg1})  becomes
   \begin{align}
   \label{reg2}
    \sup_{u\in \mathcal U} \bigl| {\Bbb E^{\hat v}}\bigl\{ \nabla H(X)\cdot  u(X) + \text{div}\, u(X)\bigr\} \bigr|   &\leq \underbrace{ \lambda \| \hat v_1\|_V  \sup_{u\in \mathcal U} \|  u \|_V }_\text{regularization at $t=1$} +\, {\sup_{u\in \mathcal U} \bigl| (\Bbb E^{\hat v} -\Bbb E_n^{\hat v}) \gamma_{u}\bigr|}
 \end{align}
 where $\gamma_{u} = \nabla H(X) u(X) + \text{div}\, u(X)$ and $\Bbb E^{\hat v}$ denotes expectation with respect to the push forward of the population density $f =    e^{ H\circ \phi} |D\phi| $ though the estimated map $\phi_1^{\hat v}$.  
Since the target measure $\Bbb P$ is assumed to have density $e^H$, this bounds the distributional distance between $\phi^{\hat v}_1(X)$ and $\Bbb P$ when $X\sim \Bbb P\circ \phi$.


\section{Nonparametric example}
\label{npe}

In this section we utilize the finite dimensional characterization of the PMLE $\hat v$ at time $t=0$, given in (\ref{ELeq0}), to construct nonparametric density estimates of the form $\hat f= e^{ H\circ \phi^{\hat v}_1}  |D\phi_1^{\hat v}| $ from {\em iid} samples $X_1,\ldots, X_n$.
As was discussed in the introduction, so long as the target measure $\Bbb P$ is absolutely continuous, the assumption that   $X_1,\ldots, X_n \overset{iid}\sim \Bbb  P\circ \phi$ encompasses all absolutely continuous measures. Since the class of diffeomorphisms $\{\phi_1^v\colon v\in V^{[0,1]}\}$ is nonparametric, the estimate $\hat f= e^{ H\circ \phi^{\hat v}_1}  |D\phi_1^{\hat v}| $ is inherently nonparametric regardless of the choice of target probability measure $\Bbb P$ (with density $e^H$). In effect, the choice of target $\Bbb P$ specifies a shrinkage direction for the nonparametric estimate: larger values of $\lambda$ shrink $\hat f$ further toward the target $\Bbb P$.
In this section we illustrate the nonparametric nature of the density estimate $\hat f$, whereas the next section explores semiparametric estimation with parametric models on the target $\Bbb P$.     One  key feature of our methodology is the use of the Euler-Lagrange equation (\ref{ELeq}) as a stopping criterion for a gradient based optimization algorithm for constructing $\hat v$. In fact, to avoid computational challenges associated with generating geodesics  with initial velocities given by (\ref{ELeq0}), we consider a finite dimensional subclass of $V^{[0,1]}$ which have geodesics that are amenable to computation (and for which gradients are easy to compute). The key is that we use the Euler-Lagrange identity (\ref{ELeq}) to measure of the richness of the subclass, within the larger infinite dimensional Hilbert space $V^{[0,1]}$, whereby allowing a dynamic choice  of the approximating dimension for a target resolution level.

Claim~\ref{claim2} shows that the PMLE vector field $\hat{v}\in V^{[0,1]}$ obeys a parametric form determined up to the identification of the $n$ functions $t\mapsto \beta_{k,t}$ as $t$ ranges in $[0,1]$. Moreover, the whole path of coefficients $\beta_{k,t}$ is determined from the initial values $\beta_{k,0}$, by the geodesic nature of $\hat v$. 
In this way, we are free to optimize, over the vectors $\{\beta_{1,0}, \ldots, \beta_{n,0}\}\subset \Bbb R^d$ using equation (\ref{ELeq0}) 
 and are guaranteed that the global maximum, over the full infinite dimensional space $\{ \phi_1^v\colon v\in V^{[0,1]}\}$, has this form.  Unfortunately, deriving geodesic maps with this type of initial velocity field is challenging.  
To circumvent this difficulty we choose an approximating  subclass of vector fields at time $t=0$ which are parametrized by the selection of $N$ knots $\{\kappa_{1},\ldots, \kappa_N\} \subset \Omega$ and $N$ initial momentum row vectors $\{\eta_{1},\ldots, \eta_N\} \subset  \mathbb{R}^d$ and have the form: 
\begin{equation}
\label{knots}
 v_0(x)= \sum_{k=1}^N \eta_{k}^T R(x,\kappa_{k}).
 \end{equation}
The knots $\{ \kappa_1,\ldots, \kappa_N\}$ need not be located at the data points $\{X_1,\ldots, X_n \}$. Indeed, we will see that alternative configurations of knots can be numerically beneficial. The key point is that vector fields at time $t=0$, which satisfy (\ref{knots}), generate geodesics with respect to norm $\bigl[\int_0^1 \| v^t  \|^2_V dt \bigr]^{1/2}$ that are easy to compute. Moreover, the variational derivatives of the terminal map $\phi^v_1$ with respect to the initial $\eta$ coefficients and the knots $\kappa$ are easily computed when utilizing  similar techniques  as those developed in \cite{vaillant:04} and \cite{alla:07}. This enables efficient gradient based algorithms for optimizing the PMLE criterion over the class generated by (\ref{knots}).

\begin{figure}[t]
\centering
\includegraphics[height = 2.2in]{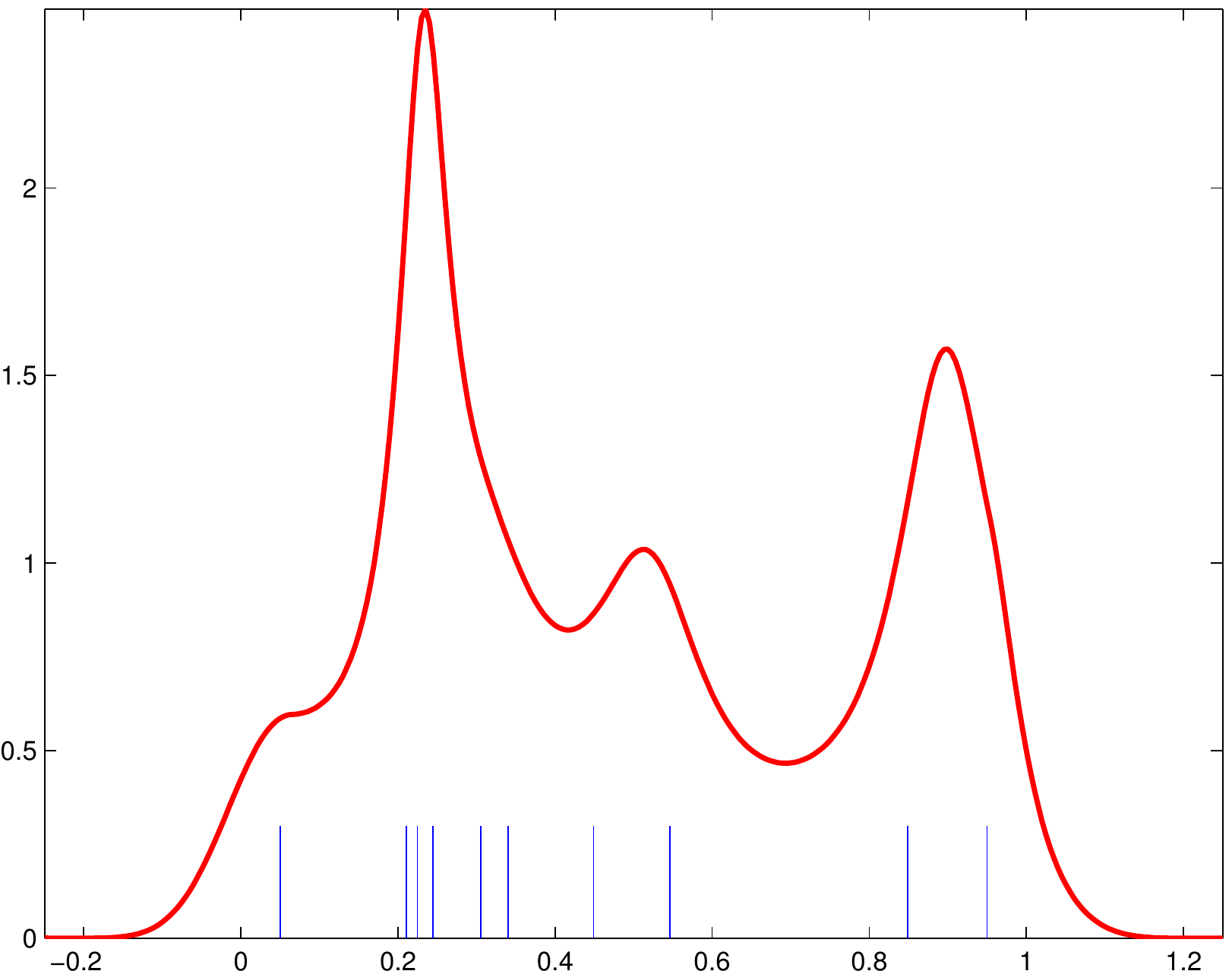}
\includegraphics[height = 2.2in]{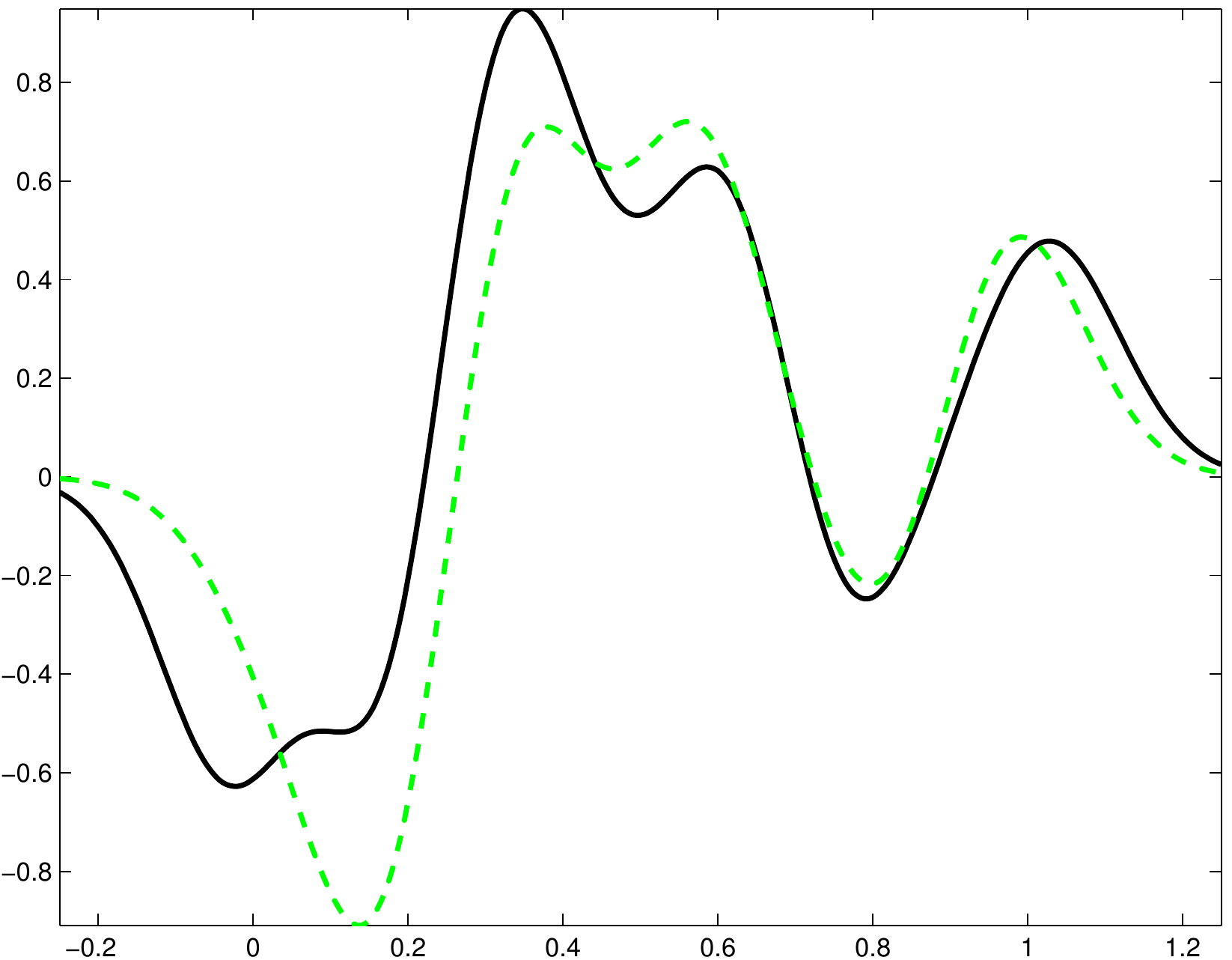}\\
\includegraphics[height = 2.2in]{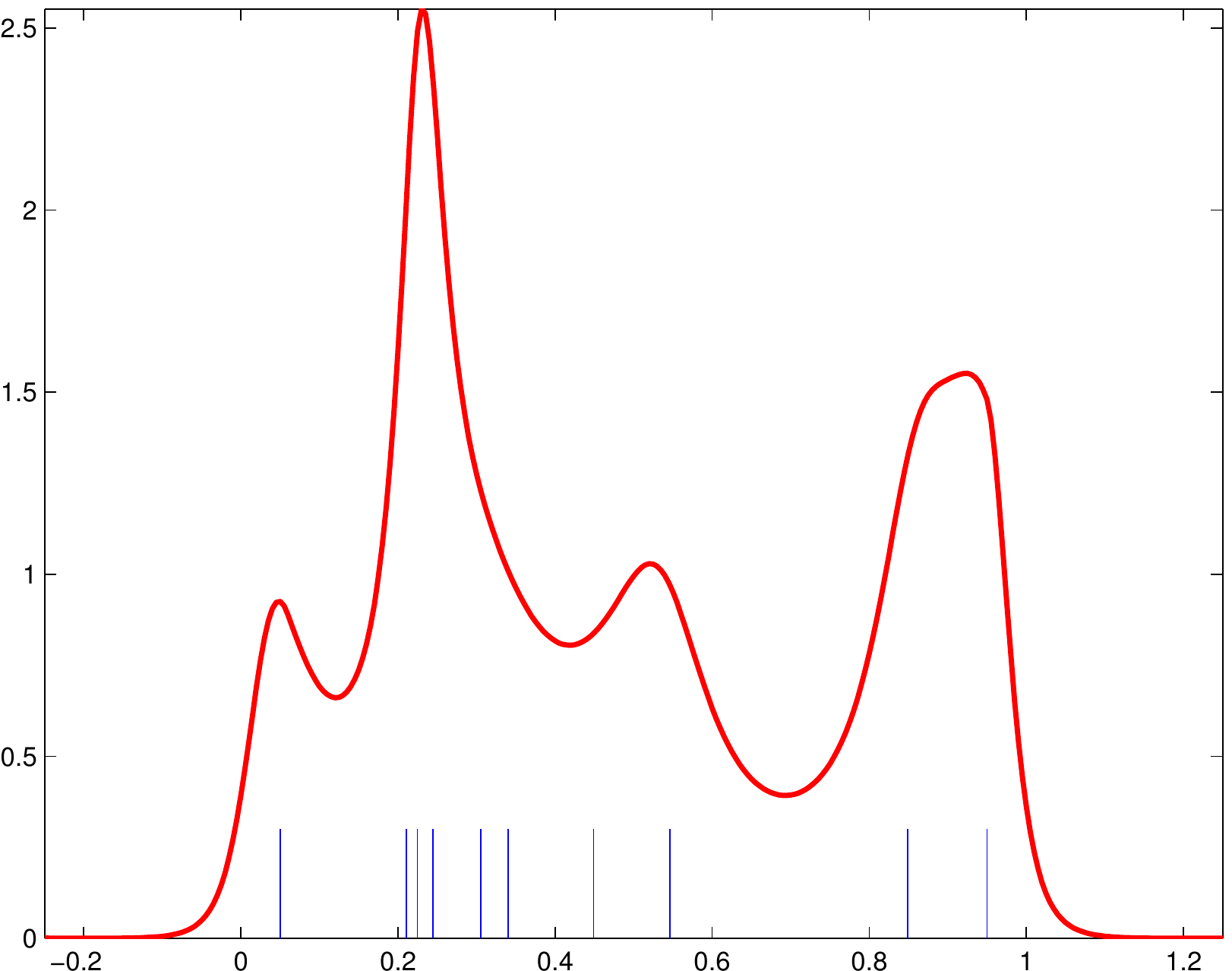}
\includegraphics[height = 2.2in]{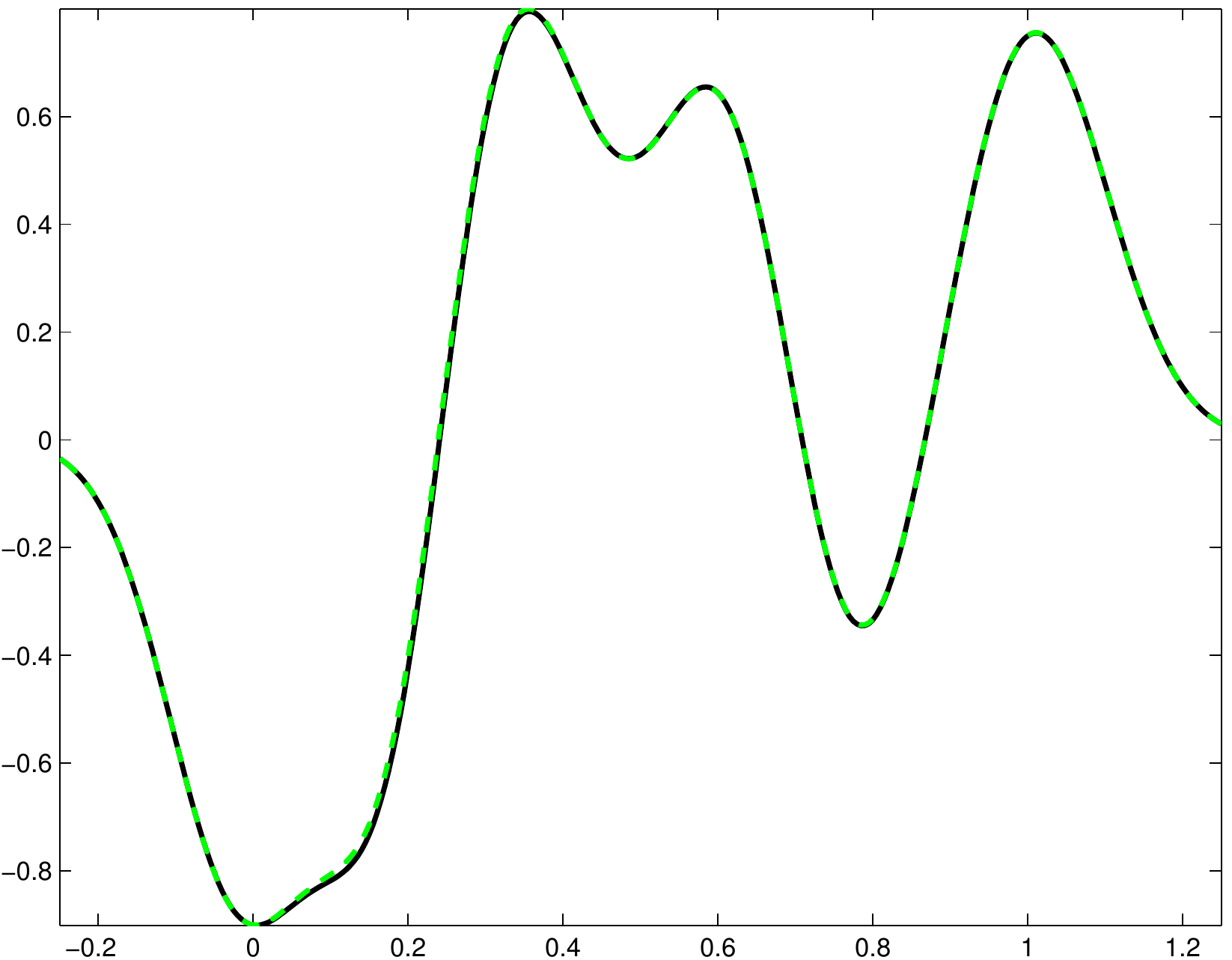}\\
\caption{In this example we compare two different knot configurations, in (\ref{knots}), for generating nonparametric density estimates using approximate solutions to the  Euler-Lagrange equation (\ref{ELeq}). The left column of images shows two different density estimates (red), based on the same data set (blue), using two different knot configurations (top-left uses $10$ knots, bottom-left uses $30$ knots). The right column of images show the corresponding  diagnostic curves which characterize the richness of the approximating subclass generated by the knots. The fact that the two diagnostic curves shown bottom-right are similar suggests that the $30$ knots used generate the approximating subclass by (\ref{knots}) is sufficiently rich to reach the stationary points of the penalized log likelihood $E_\lambda$ given in (\ref{energy1}).
See Section \ref{npe} for details.
 \label{f2}   
 }
\end{figure}

As a first illustration, we show that the na\"ive choice of initial knots obtained by setting $\{\kappa_1,\ldots, \kappa_N\}=\{ X_1,\ldots,X_n\}$ in (\ref{knots})  is {\em not} sufficient to solve  (\ref{ELeq}); then show how it can be easily fixed using the Euler-Lagrange methodology. 
 Our data set, shown with blue sticks in Figure~\ref{f2}, consists of $n=10$ independent samples from a mixture of two normals, truncated so the support is $[0,1]$.  Our target probability measure $\Bbb P$ is set to the uniform distribution on  $[0,1]$ (smoothly tapering to zero $0$ outside of $[0,1]$ for numerical convenience).
  For simplicity we choose the Gaussian kernel $R(x,y)=\exp\bigl(-\frac{(x-y)^2}{2\sigma^2}\bigr)$, with $\sigma=0.1$, to generate the RKHS $V$ and use the  penalty parameter $\lambda$ set to $10$. The top left plot in Figure~\ref{f2} shows the non-parametric density estimate $\hat f= e^{ H\circ \phi^{\hat v}_1}  |D\phi_1^{\hat v}| $ in red, generated by applying a gradient based optimization algorithm applied to the subclass (\ref{knots}) where the knots $\{\kappa_1,\ldots, \kappa_N\}=\{X_1,\ldots, X_n\}$ are kept fixed and the coefficients $\eta_1,\ldots, \eta_N$ are optimized by minimizing the penalized log likelihood function $E_\lambda (v)$ given in (\ref{energy1}).  
 To diagnose the richness of subclass (\ref{knots}) within the full Hilbert space we define the function $\mathcal D_t^v(x)$ for any $v\in V^{[0,1]}$ and any $t\in [0,1]$ as follows
 \begin{equation}
 \label{diag1}
 \mathcal D_t^v(x) \equiv    \frac{1}{ n}\sum_{k=1}^n \bigl[\beta_{k,t}^v \bigr]^T R(x, X_{k,t}^v )  +  \frac{1}{ n}\sum_{k=1}^n   \nabla_{y} R(x,y)\Bigr|_{y= X_{k,t}^v}
 \end{equation}
where $X_{k,t}^v\equiv \phi_t^{v} (X_k)$ and 
$ \beta_{k,t}^v\equiv   \nabla H(X^v_{k,1}) D\phi^{v}_{t1}(X_{k,t}^v)  +\nabla \log\det D\phi^{ v}_{t1} (X^v_{k,t})$. 
The function $\lambda v_t - \mathcal D_t^v$ serves as a diagnostic criterion in the sense that
  the Hilbert norm of $\lambda v_t - \mathcal D_t^v$ gives  the maximal rate of change of the penalized log-likelihood $E_\lambda(v)$, within the full infinite dimensional Hilbert space $V^{[0,1]}$. In particular, 
 \[   \biggl[\int_0^1\|\underbrace{ \lambda v_t-\mathcal D_t^v}_\text{\scriptsize diagnostic } \|_{V}^2dt\biggr]^{1/2} = \sup_\text{\small $\{ u\colon \| u \|_{V^{[0,1]}}= 1 \}$} \left[\frac{d}{d\epsilon} E_\lambda (v+\epsilon u)\right]_{\epsilon = 0} .   \]
 Therefore if   $\lambda v_t(x)-\mathcal D_t^v(x) = 0$ for all $t\in [0,1]$ and $x\in \Bbb R^d$, then $v$ satisfies the Euler-Lagrange equation. Discrepancies between $\lambda v_t(x)$ and $\mathcal D_t^v$ when optimizing over the subclass (\ref{knots}) indicates the subclass that is insufficient rich to reach the stationary points of $E_\lambda(v)$. 
 The diagnostic plots in this example, which correspond to our density estimate shown in the upper-left image of Figure~\ref{f2},  are shown in the upper-right plot of Figure~\ref{f2} where $\lambda v_0(x)$ is plotted in  black and $\mathcal D_0^v(x)$ is plotted as a dashed green line. The large amount of discrepancy between $\lambda v_0(x)$  and $\mathcal D_0^v(x)$ indicates that the knots $\{\kappa_1,\ldots, \kappa_N\}=\{ X_1,\ldots,X_n\}$  are insufficient.

\begin{figure}[t]
\centering
\includegraphics[height = 2.2in]{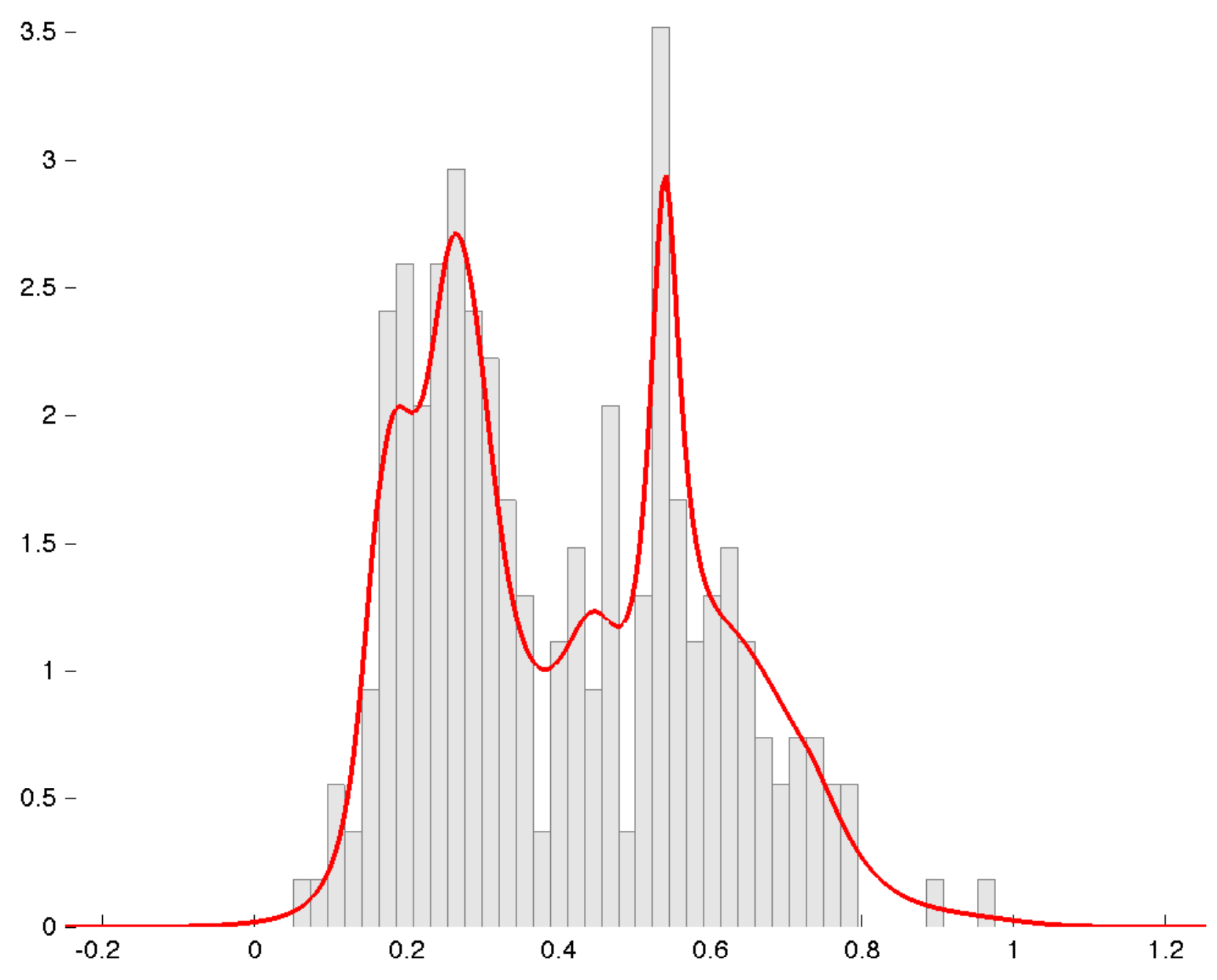}
\includegraphics[height = 2.2in]{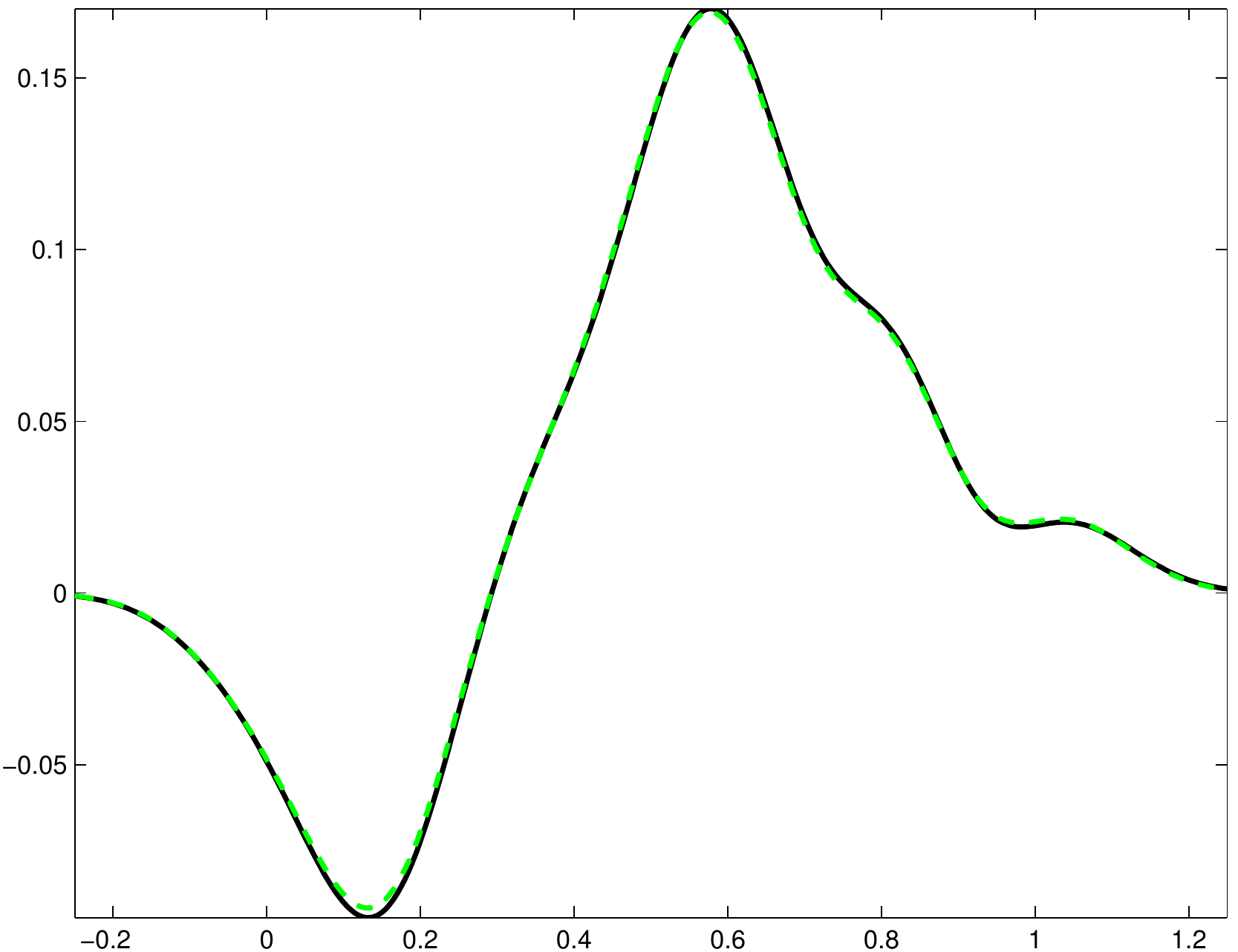}\\
\includegraphics[height = 2.2in]{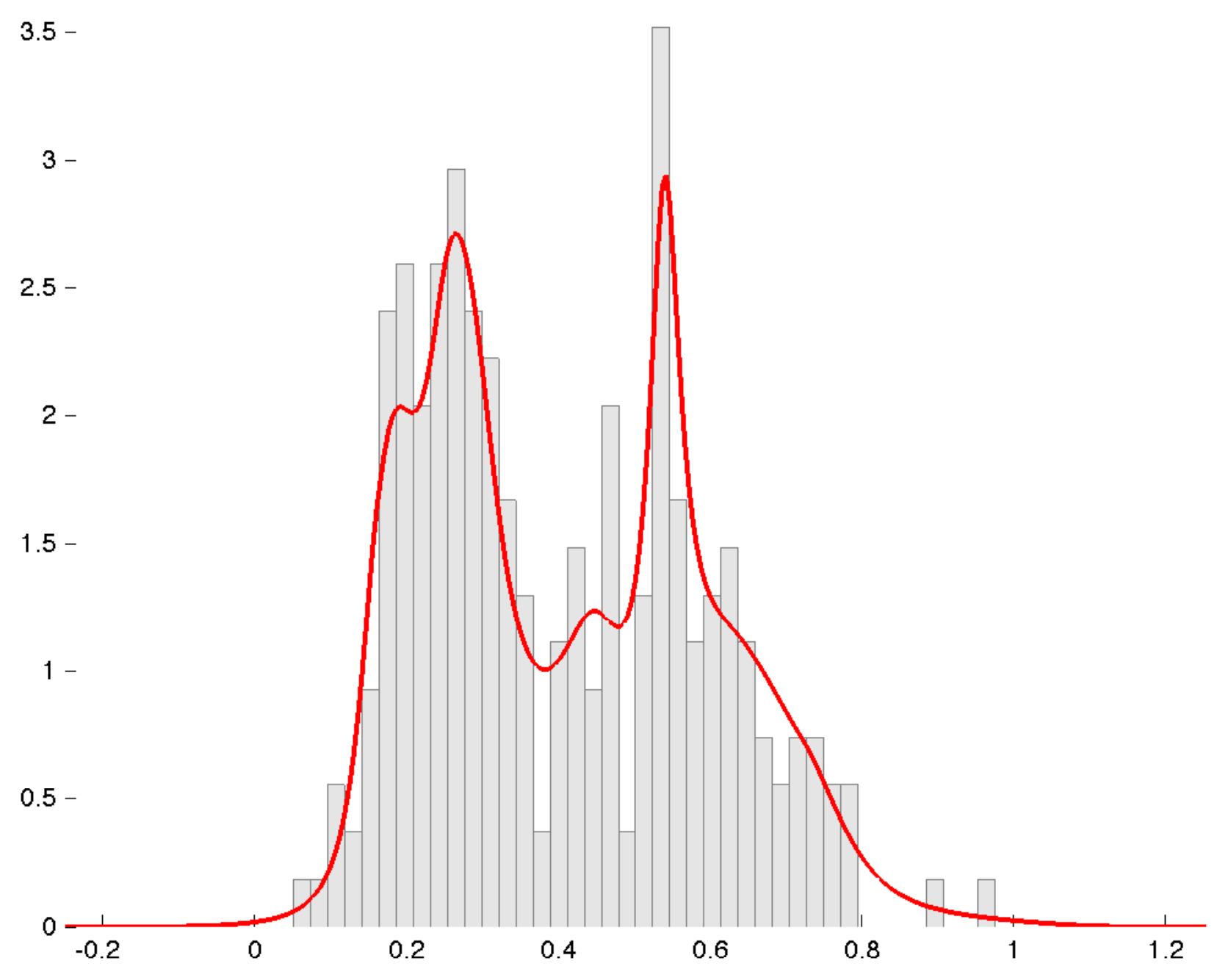}
\includegraphics[height = 2.2in]{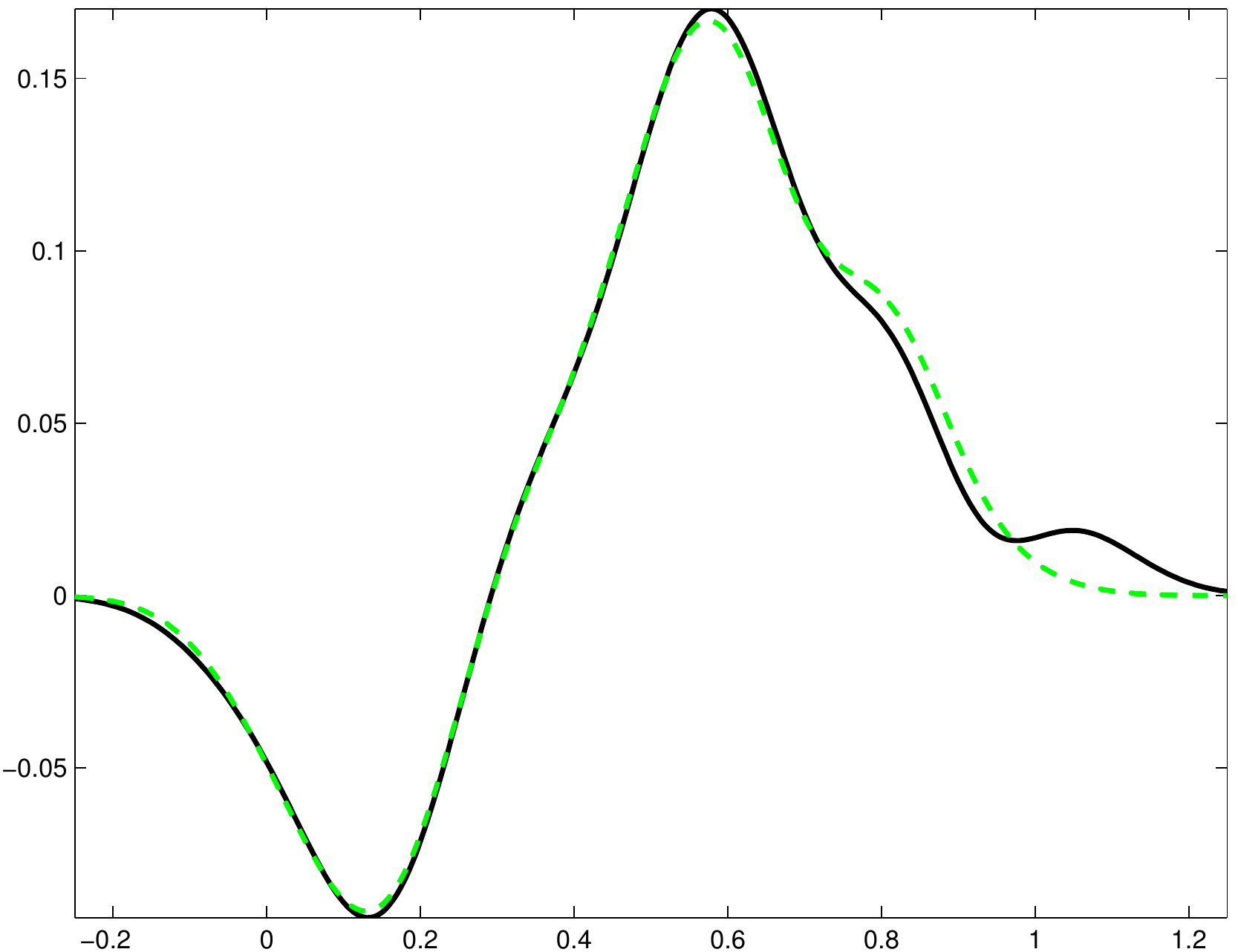}
\caption{ 
 \label{f3} 
 In this example we demonstrate that a small number of knots, in (\ref{knots}), can be enough to approximate solutions to the  Euler-Lagrange equation (\ref{ELeq}). The left column of images shows two different density estimates (red), based on the same data set  of size $n=240$ (grey histogram), using two different knot configurations (top-left uses $240$ knots, bottom-left uses $20$ knots). The right column of images show the corresponding  diagnostic curves which characterize the richness of the approximating subclass generated by the knots. The fact that the two diagnostic curves shown bottom-right are nearly identical suggests that the $20$ knots  used generate the approximating subclass by (\ref{knots}) is sufficiently rich to reach the stationary points of the penalized log-likelihood $E_\lambda$ given in (\ref{energy1}).
See Section \ref{npe} for details.
 }
 \end{figure}

To generate  knots which are sufficiently rich, in this first example,  we apply a discrete approximation at initial time $t=0$ to the gradient term $ \nabla_{y} R(x,y) $ appearing in  the Euler-Lagrange equation (\ref{ELeq}).
 For this approximation we use $N=3n$ knots in the pattern given by the following approximation
\begin{align}
 \hat v_0(x)&=  \frac{1}{\lambda n}\sum_{k=1}^n \beta_{k,0}^T R(x,X_{k})  +  \frac{1}{\lambda n}\sum_{k=1}^n   \nabla_{y} R(x,y)\Bigr|_{y= X_{k}} \\
 & \approx \frac{1}{\lambda n}\sum_{k=1}^n \beta_{k,0}^T R(x,\kappa_{k})  +  \frac{1}{\delta \lambda n}\sum_{k=1}^n   R(x,\kappa_{n+k})-R(x,\kappa_{2n+k}) \\
 & = \sum_{k=1}^N \eta_{k}^T R(x,\kappa_{k}) \label{eq:etakappa}
\end{align}
where 
$\kappa_{k}\equiv\begin{cases}
X_{k} & \text{if $k \in 1 \dots n$} \\
X_{k}+\frac{\delta}{2} & \text{if $k \in n+1 \dots 2n$} \\
X_{k}-\frac{\delta}{2} & \text{if $k \in 2n+1\dots 3n$} 
\end{cases}$ and $\eta_{k}\equiv\begin{cases}
\frac{1}{\lambda n} \beta_{k,0} & \text{if $k \in 1 \dots n$} \\
\frac{1}{\delta \lambda n} & \text{if $k \in n+1 \dots 2n$} \\
-\frac{1}{\delta \lambda n} & \text{if $k \in 2n+1 \dots 3n$} \\
\end{cases}$
with $\delta=10^{-4}$. 
 With this new set of knots, the resulting PMLE over the new class is show at bottom left in Figure~\ref{f2}. Notice that now the diagnostic function $\lambda v_0 -\mathcal D_0^v$ (the difference between the black and green line in the bottom-right plot of Figure~\ref{f2}) is much closer to zero. Indeed, for every $t\in [0,1]$ the diagnostic function $\lambda v_t -\mathcal D_t^v$ is similarly close to zero (not pictured). This implies that the maximal rate of change  of the penalized log-likelihood within the infinite dimensional Hilbert space $V^{[0,1]}$, at our estimate, is very small and hence our knots are sufficiently rich.

In the previous example we used $N=3n$ knots in (\ref{knots}) to construct a sufficiently rich class for solving the Euler-Lagrange equation (\ref{ELeq}). Now we demonstrate that with larger data sets and smaller smoothness penalties one can actually use a smaller set of knots, $N\ll n$, to approximate the solutions to Euler-Lagrange equation (\ref{ELeq}). The histograms in the left column of Figure~\ref{f3} show $n=240$ {\em iid}  samples from the same truncated mixture of normals used in the previous example. The resulting density estimates, shown in red,  use a smoothness penalty set to $\lambda=1/4$. The estimate shown top-left  utilizes  $N=n=240$ knots set at the data points whereas the estimate shown bottom-left  uses $N=20$ knots randomly selected from the data.  The right column 
shows the corresponding diagnostic plots ($\lambda v_0$ shown in black  and $\mathcal D_0^v$ shown in green). The relative agreement of the diagnostic curves in the bottom-right plot suggests that 20 knots are reasonably adequate for finding approximate solutions to the Euler-Lagrange equation. We expect this situation to improve as the number of data points increase. This has the potential to dramatically decrease the computational load when applying this estimate to extremely large data sets.

\section{Semiparametric example} 
\label{spe}

In this section we demonstrate how the PMLE $\phi_1^{\hat v}$ can be used to generate semiparametric estimation procedures obtained by assuming a parametric model on the target distribution $\Bbb P$ then introduce a nonparametric diffeomorphism to the target model. 
Indeed, any parametric model $\{\Bbb P_\theta\colon \theta\in \Theta\subset \Bbb R^m\}$ can  be extended to a   semiparametric class by considering diffeomorphisms of the data to the parametric target as follows: $\{\Bbb P_\theta \circ \phi^{v}_1 \colon \theta\in \Theta, v\in V^{[0,1]}\}$.
Since the  model $X_1,\ldots, X_n\overset{iid}\sim \Bbb P_\theta\circ \phi^v_1$ implies $\phi^v_1(X_1),\ldots, \phi_1^v(X_n)\overset{iid}\sim \Bbb P_\theta$ it is natural to alternate the optimization of $\theta$ and $\phi$ to compute the estimates $\hat \theta$ and $\hat \phi$ under this semiparametric model. This optimization routine is outlined explicitly in Algorithm \ref{alg2}.

\begin{algorithm}[h!]
\caption{Compute the semiparametric estimates $\hat\theta, \hat \phi$} 
\label{alg2} 
\begin{algorithmic}[1]
\STATE {{Set} $i=0$ and {initialize} $(\theta^0,\phi^0)$.}
\STATE { {Set}  $\phi^{i+1}$ to the PMLE of $\phi$ defined in Section \ref{pmle} under the model $X_1, \ldots, X_n \overset{iid}\sim \Bbb P_{\theta^i}\circ \phi$}.
\STATE {{Set} $\theta^{i+1}$ to the maximum likelihood estimate  of $\theta\in \Theta$ under the following model for the transformed data points:
\[ \phi^{i+1}(X_1), \ldots, \phi^{i+1}(X_n) \overset{iid}\sim \Bbb P_{\theta}\]}
\STATE {{If} $\theta^i\approx \theta^{i+1}$ and $\phi^i\approx \phi^{i+1}$ {then return} $(\hat\theta, \hat\phi)\leftarrow(\theta^{i+1},\phi^{i+1})$; {else
set} $i\leftarrow i+1$ and {return} to \mbox{step {\footnotesize 2}.}}
\end{algorithmic}
\end{algorithm}

\begin{figure}[t]
\centering
\includegraphics[height = 2.2in]{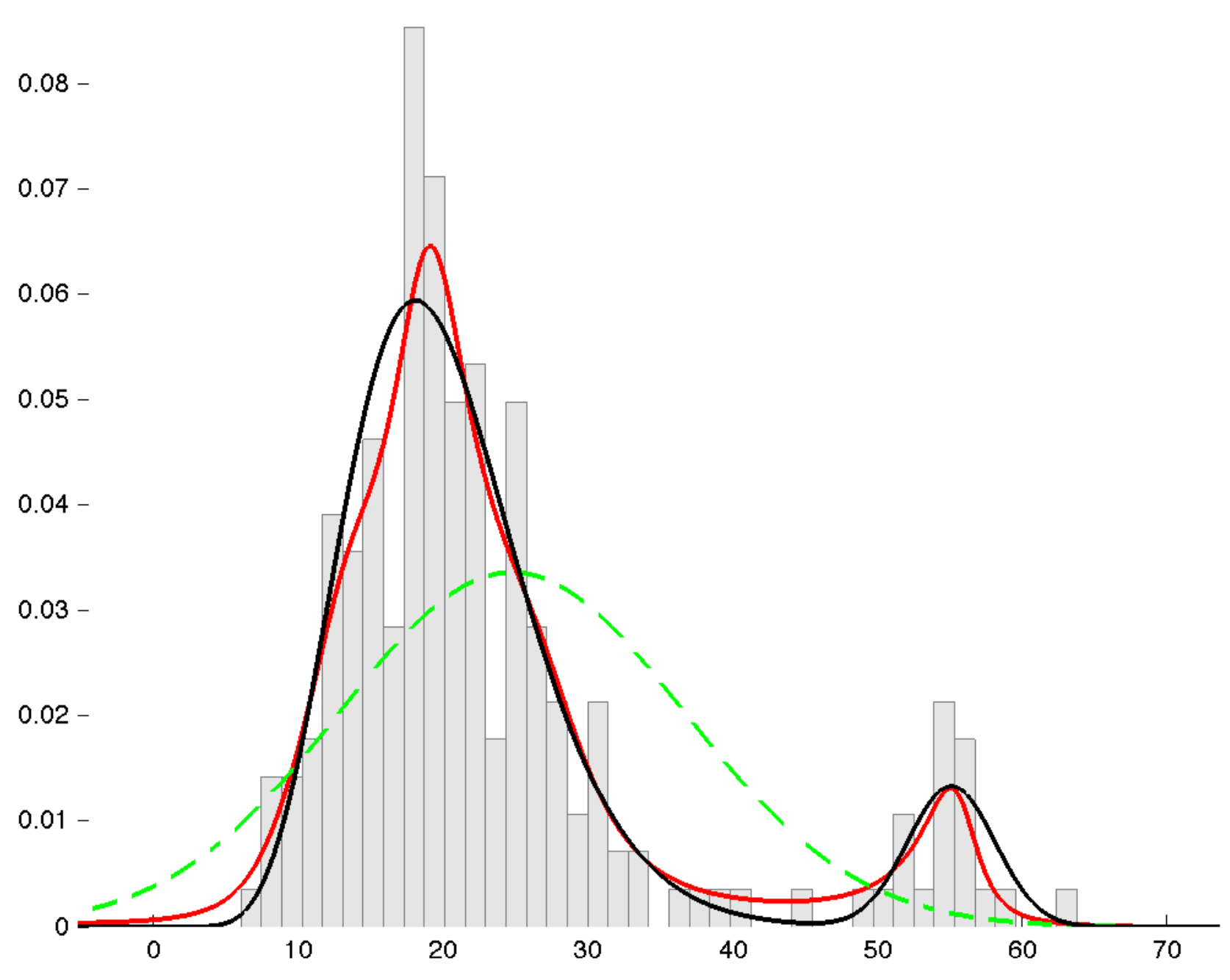}
\includegraphics[height = 2.2in]{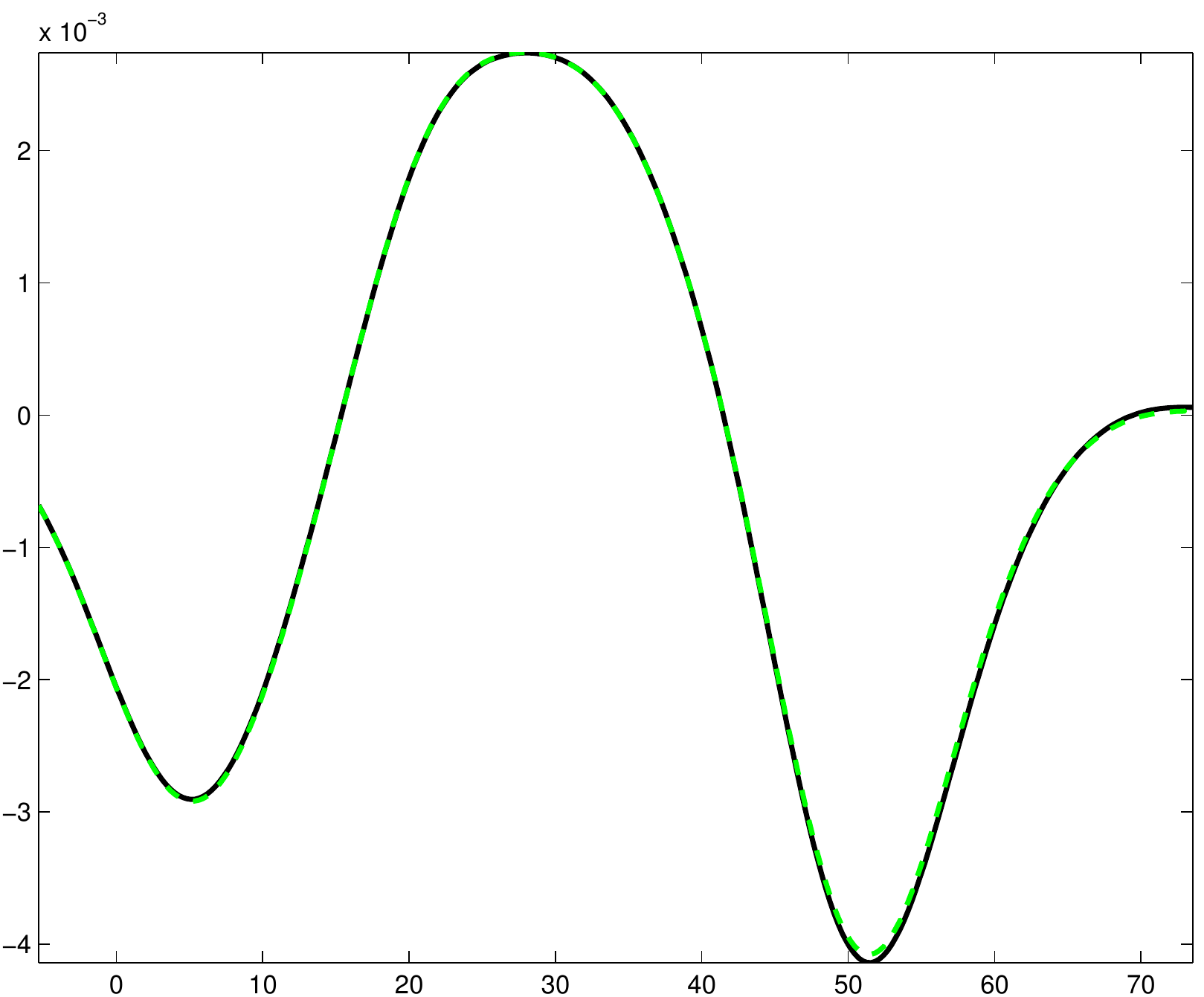}\\
\includegraphics[height = 2.2in]{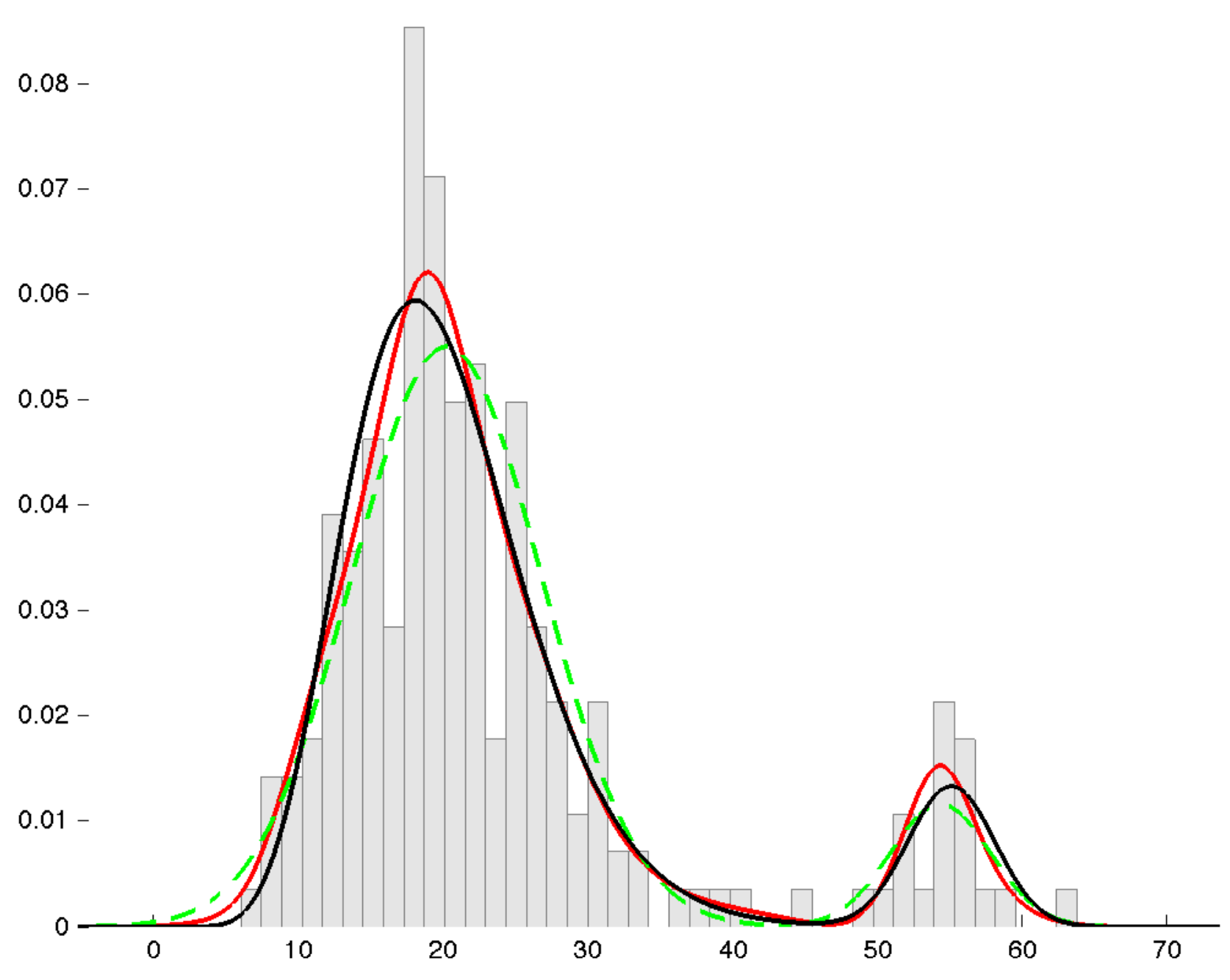}
\includegraphics[height = 2.2in]{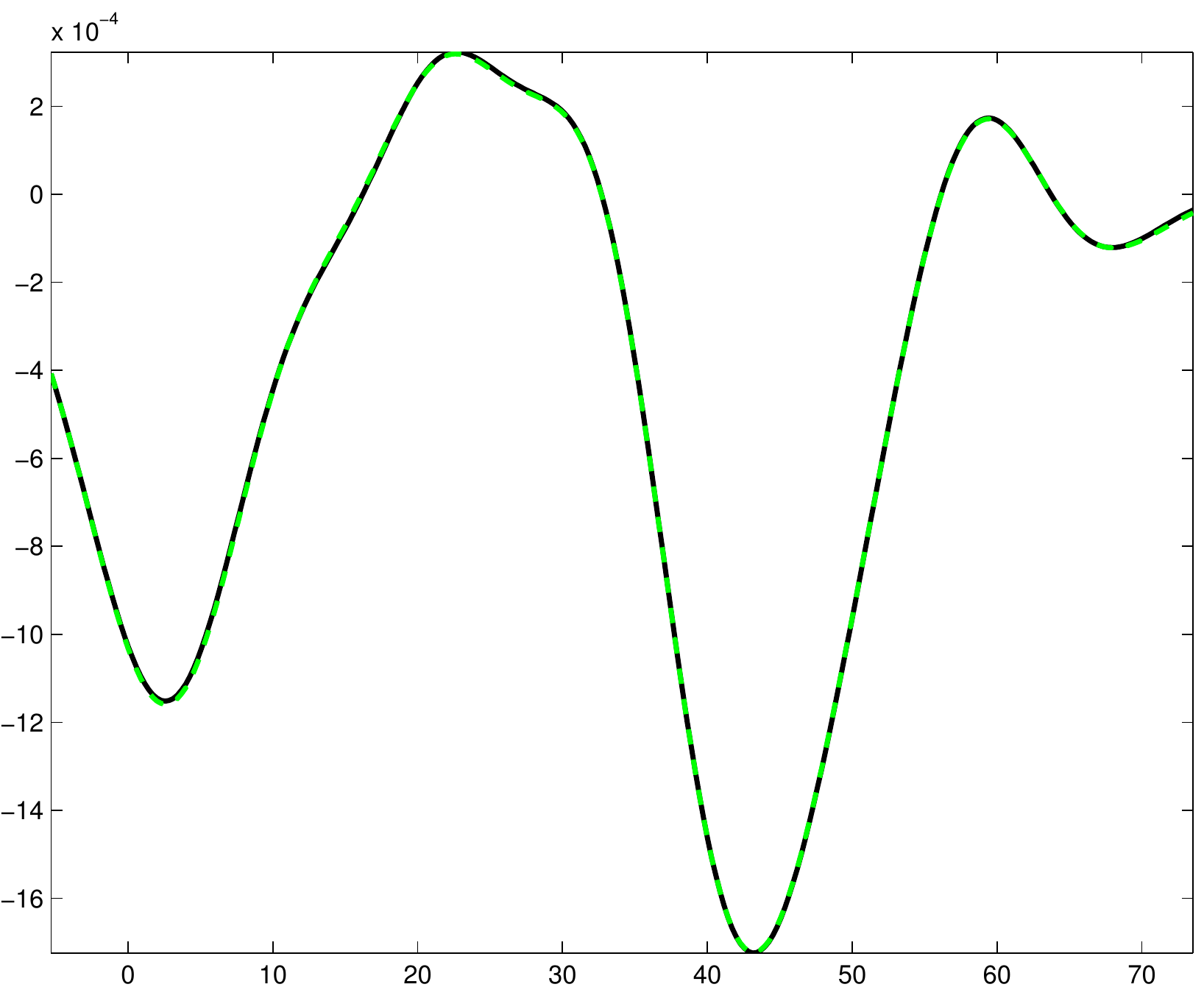}
\caption{ 
In this example we demonstrate that by parametrically modeling the target distribution one can  produce flexible semiparametric density estimates. 
The left column of histograms show the data (the same histogram plotted twice) sampled from the population density shown in black. Two semiparametric estimates are shown in red which correspond to different parametric targets. The estimated target distribution is shown in green on the left column of images. The right column of images show the corresponding  diagnostic curves which characterize the richness of the approximating subclass generated by the knots ($\lambda v_0$ is plotted in  black and $\mathcal D_0^v$ is plotted in dashed-green).  See Section \ref{spe} for details. 
 \label{f4} }
 \end{figure}

To illustrate the semiparametric nature of our estimate we sample from a population density which is a mixture of a $\chi^2$ density (with $20$ degrees of freedom) and a Gaussian density ($\mu = 55$ and $\sigma = 3$) shown in black on the left column of plots in Figure \ref{f4}. The data  comprises $n=200$ independent samples from this mixture, the histogram of which is shown on the left column of plots in Figure \ref{f4}.
 We consider two different semiparametric estimates of the population density. The first uses a basic  location-scale Gaussian family to the parametric target model
 $\{ \Bbb P_\theta \colon \theta \in\Theta \} \equiv \{ \Bbb G_{\mu,\sigma} \colon   \mu \in \Bbb R,  \sigma \in \Bbb R^+ \}$ where $\Bbb G_{\mu,\sigma}$ denotes the Gaussian measure centered at $\mu$ with variance $\sigma^2$.
 The second example uses a
 mixture of two Gaussian measures $\{ \Bbb P_\theta \colon \theta \in\Theta \} \equiv \{ \alpha\,\Bbb G_{\mu_1,\sigma_1} +(1-\alpha)\,\Bbb G_{\mu_2,\sigma_2} \colon  \mu_1, \mu_2 \in \Bbb R,  \sigma_1,\sigma_2 \in \Bbb R^+,   0< \alpha < 1 \}$.
As in Section \ref{npe} we use a Gaussian reproducing kernel to generate the Hilbert space $V$. In this example, however, we use a wider kernel, with standard deviation set to half the sample standard deviation of the data.  This is done to illustrate the flexibility in the estimated density obtained by simply changing the kernel width and the penalty parameter $\lambda$ (which is decreased to $1/2500$ in this example).
Wider kernels tend to produce estimates which have restricted local variability but can still have sufficient flexibility to model large amplitude variations over large spatial scales.

The estimate obtained from the basic location-scale Gaussian family is shown in red in the top-left plot of Figure \ref{f4}. Conversely, the estimated density which uses the mixture target model is shown in red in the bottom-left plot of Figure \ref{f4}.   The corresponding estimated target density $d\Bbb P_{\hat \theta}/dx$ is shown in green on the left two plots.
To numerically approximate the PMLE initial velocity field $\hat v_0$, needed in step 2 of Algorithm \ref{alg2},  we used the approximating subclass for the initial velocity field given in the form (\ref{knots}) with $200$ knots located at the data values.
The corresponding time zero diagnostic plots are shown in the right column of Figure \ref{f4} (green for $\lambda v_0$ and black for $\mathcal D_0^v$).  
 Notice that in both cases the semiparametric estimates do a good job at estimating the population density. 
In the case of the location-scale Gaussian target the estimated target density does a poor job of explaining the true density. However, the presence of a nonparametric diffeomorphism allows this model to fit nearly as well as a fit from a mixture model. Notice also that the semiparametric estimate based on the location-scale Gaussian target overestimates the true sampling density between the two modes. This seems due to the fact that the estimation procedure prefers an overly dispersed target density which  allows the estimated diffeomorphism to effectively add  mass around the smaller mode. The situation seem to be corrected when using a mixture.

\appendix

\section{Technical Details}
\label{TD}


This section serves to present some technical details which are used  in the proofs of Claim \ref{claim1} and Claim \ref{claim2}.   Some of these results can be found in the current literature (for example, Proposition \ref{Hspace}, most of Proposition \ref{PropDef} and equation (\ref{111}) can be found in \cite{you:10}).   However, the main goal of this section is to establish  equation (\ref{888}) in Proposition \ref{proo}  which is key to establishing Claim \ref{claim2}. 
We mention that all of the derivations presented in this section rely heavily on techniques developed by  Younes, Tr\'ouve, Miller and co-authors (see \cite{you:10} and references therein).

\if\Ver\LongVer{ 
{\flushleft\textcolor{blue}{$\downarrow$---------begin long version---------}}\newline

\begin{definition}
Here are some basic definitions and basic facts. Let $\Omega \subset \Bbb R^d$ be an open set. 
\begin{enumerate}
\item $C^0(\Omega)$ is the set of continuous functions on $\Omega$.
\item $C^0(\bar\Omega)$ is the set of continuous functions on $\bar\Omega$.
\item $C^k(\Omega)$ is the set of functions have all derivatives of order $\leq k$ continuous in $\Omega$.
\item $C^k(\bar\Omega)$ is the set of functions in $C^k(\Omega)$ all of whose derivatives of order $\leq k$ have continuous extensions to $\bar\Omega$.
\item The {\em support } of a function $f$ is the closure of the set on which $f\neq 0$.
\item A complete normed linear space is called a {\em Banach space}.
\item For $f\in C^k(\bar\Omega)$ set 
\[ \| f \|_{C^k(\bar \Omega)}\equiv \| f \|_{k,\infty}\equiv \sum_{j=0}^k \sup_{|\beta|=j}\sup_{\Omega} |D^\beta f|.\]
\item If $f\in C^\alpha(\bar\Omega)$ and $g\in C^\beta(\bar\Omega)$ then $fg \in C^{\beta\wedge \alpha}(\bar\Omega)$ and 
\[ \|fg  \|_{\beta\wedge \alpha,\infty}\leq c \|f  \|_{ \alpha,\infty}\|g  \|_{\beta,\infty}  \]
where $c$ only depends on $d,\alpha, \beta$.
\item  $C^k(\bar\Omega)$ are Banach spaces with respect to the norm $\| \cdot \|_{k,\infty}$.
\end{enumerate}
\end{definition}

{\flushleft\textcolor{blue}{$\uparrow$------------end long version---------}}\newline
} \fi

To set notation  let $C^k(\Omega, \Bbb R^d)$ denote the set of functions, mapping an open set $\Omega\subset \Bbb R^d$ into $\Bbb R^d$, which have continuous derivatives of order $\leq k$ (so that $C^0(\Omega, \Bbb R^d)$ is the continuous functions on $\Omega$ mapping into $\Bbb R^d$).
Also let $C^k(\bar\Omega,\Bbb R^d)$ denote the set of functions in $C^k(\Omega,\Bbb R^d)$ whose derivatives of order $\leq k$ have continuous extensions to $\bar\Omega$. Finally, $C_0^k(\Omega,\Bbb R^d)$ is the set of functions in $C^k(\bar\Omega,\Bbb R^d)$ whose derivatives of order $\leq k$ take the value $0$ on the boundary $\partial \Omega$. It is a well known fact that  $C^k(\bar\Omega,\Bbb R^d)$ is a Banach space with respect to the norm:  $ \| f \|_{C^k(\bar \Omega)}\equiv \| f \|_{k,\infty}\equiv \sum_{j=0}^k \sup_{|\beta|=j}\sup_{\Omega} |D^\beta f|$.

{\em Remark:} The norm $\| v\|_V\equiv \int_0^1 \| v_t\|^2_V dt $ given in Definition \ref{defV01} is technically only a semi-norm since one is free to change $v_t$  on a set of $t\in[0,1]$ with Lebesque measure zero (and not effect the norm on $V^{[0,1]}$). This is easily fixed by identifying  $V^{[0,1]}$ with the set of equivalence classes of measurable functions where $\{v_t\}_{t\in [0,1]}$ and $\{w_t\}_{t\in [0,1]}$ are said to be in the same equivalence class if $\| v_t-w_t \|_V=0$ for almost every $t\in [0,1]$. For the remainder of the paper we treat this identification as implicit with the understanding that $\{v_t\}_{t\in[0,1]}$  denotes a representer of the equivalence class to which is belongs. The following proposition establishes the Hilbert space structure of $V^{[0,1]}$ (stated without proof in 8.17 of \cite{you:10}).

\begin{proposition}
\label{Hspace}
If $V$ is a Hilbert space with inner product $\langle\cdot, \cdot \rangle_V$, then
$V^{[0,1]}$ is a Hilbert space with inner product  defined by $\langle v,h \rangle_{V^{[0,1]}} \equiv\int_0^1 \langle  v_t, h_t\rangle_V dt$.
\end{proposition}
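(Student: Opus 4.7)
The plan is to verify the four standard properties of a Hilbert-space inner product — well-definedness of the defining integral, bilinearity/symmetry, positive-definiteness, and completeness — by systematically reducing each to an analogous fact about the scalar-valued space $L^2([0,1])$ combined with the Hilbert structure of $V$. Most of this is routine; the only non-trivial piece is completeness.

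First I would check that $\langle v,h\rangle_{V^{[0,1]}}$ is well-defined. For $v,h\in V^{[0,1]}$, the scalar function $t\mapsto\langle v_t,h_t\rangle_V$ is measurable: it is the pointwise limit of simple functions obtained from the measurable $V$-valued maps $t\mapsto v_t$ and $t\mapsto h_t$ via the continuous bilinear form $\langle\cdot,\cdot\rangle_V$. Applying the pointwise Cauchy--Schwarz inequality in $V$ and then the Cauchy--Schwarz inequality in $L^2([0,1])$ gives
\begin{equation*}
\int_0^1|\langle v_t,h_t\rangle_V|\,dt\le\int_0^1\|v_t\|_V\|h_t\|_V\,dt\le\|v\|_{V^{[0,1]}}\|h\|_{V^{[0,1]}}<\infty,
\end{equation*}
so the integral converges absolutely. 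Bilinearity and symmetry of $\langle\cdot,\cdot\rangle_{V^{[0,1]}}$ are inherited pointwise in $t$ from $\langle\cdot,\cdot\rangle_V$ and then preserved by integration. Positive-definiteness follows because $\langle v,v\rangle_{V^{[0,1]}}=0$ forces $\|v_t\|_V=0$ for Lebesgue-a.e.\ $t\in[0,1]$, i.e.\ $v$ is the zero element in the quotient identification mentioned in the remark preceding the proposition.

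The main obstacle is completeness, and I would handle it by the standard subsequence trick used to prove completeness of $L^p$ spaces. Given a Cauchy sequence $\{v^n\}\subset V^{[0,1]}$, extract a subsequence $\{v^{n_k}\}$ with $\|v^{n_{k+1}}-v^{n_k}\|_{V^{[0,1]}}\le 2^{-k}$. Set $g_K(t)\equiv\sum_{k=1}^{K}\|v^{n_{k+1}}_t-v^{n_k}_t\|_V$ and $g(t)\equiv\lim_K g_K(t)\in[0,\infty]$. By Minkowski's inequality in $L^2([0,1])$, $\|g_K\|_{L^2([0,1])}\le\sum_{k\ge1}2^{-k}=1$, so monotone convergence gives $\|g\|_{L^2([0,1])}\le 1$; in particular $g(t)<\infty$ for a.e.\ $t$. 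For each such $t$ the series $\sum_k(v^{n_{k+1}}_t-v^{n_k}_t)$ is absolutely summable in the Hilbert space $V$, hence converges to some $\hat v_t\in V$, and one may set $\hat v_t\equiv 0$ on the null set where it fails. Measurability of $t\mapsto\hat v_t$ follows as an a.e.\ limit of measurable $V$-valued maps.

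Finally I would verify that $\hat v\in V^{[0,1]}$ and that $v^{n_k}\to\hat v$ in the $V^{[0,1]}$ norm. Pointwise a.e.\ we have $\|\hat v_t-v^{n_k}_t\|_V\le\sum_{j\ge k}\|v^{n_{j+1}}_t-v^{n_j}_t\|_V\le g(t)$, and the right-hand side is square-integrable; dominated convergence together with the pointwise bound $\|\hat v_t-v^{n_k}_t\|_V\to 0$ a.e.\ yields $\|v^{n_k}-\hat v\|_{V^{[0,1]}}\to 0$. Since $\{v^n\}$ was Cauchy, the whole sequence converges to $\hat v$, establishing completeness. The $V^{[0,1]}$ norm satisfied $\|\hat v\|_{V^{[0,1]}}<\infty$ a posteriori by the triangle inequality $\|\hat v\|_{V^{[0,1]}}\le\|\hat v-v^{n_1}\|_{V^{[0,1]}}+\|v^{n_1}\|_{V^{[0,1]}}$, so $\hat v$ does lie in $V^{[0,1]}$ as required.
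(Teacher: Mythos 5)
Your proof is correct and follows the same overall plan as the paper's: inherit the inner-product axioms from $V$ and Lebesgue integration, then prove completeness by an $L^2$-type argument. The one place where you genuinely diverge is the completeness step, and there your version is actually the more careful one. The paper asserts that for a Cauchy sequence $v^n$ in $V^{[0,1]}$ there is a Borel set of full measure on which $v^n_t$ converges in $V$ --- but a sequence that is Cauchy in the integrated norm need not be pointwise Cauchy for a.e.\ $t$ without first passing to a subsequence, and the paper also omits the dominated-convergence (or Fatou) step needed to upgrade pointwise convergence to convergence in $\|\cdot\|_{V^{[0,1]}}$. Your rapidly-convergent-subsequence argument with the summable majorant $g$ supplies exactly these missing pieces, so it both matches the paper's intent and repairs its gaps. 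The only point you share with the paper in glossing over is measurability: Definition~\ref{defV01} defines measurability jointly in $(t,x)$ rather than as strong (Bochner) measurability of $t\mapsto v_t$, so your simple-function argument for the measurability of $t\mapsto\langle v_t,h_t\rangle_V$ implicitly assumes a separability or strong-measurability hypothesis; the paper explicitly treats this as implicit in the definition, so it is not a defect of your argument relative to theirs.
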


\begin{proof}
Notice first that $\| v_t\|_V$ and $\langle  v_t, h_t\rangle_V$ are measurable functions of $t$ (this can be taken to be implicit in definitional requirement for membership in $V^{[0,1]}$: that $\int_0^1 \| v_t \|^2_Vdt <\infty$). Now, with the exception of completeness, all the properties of a Hilbert space inner product are inherited from $\langle  \cdot , \cdot\rangle_V$ and the linear properties of Lebseque integration over $[0,1]$. 
To show completeness  let $v^n\in V^{[0,1]}$ be a Cauchy sequence so that $\int_0^1 \| v_t^m - v_t^n \|^2_V dt\rightarrow 0$. By the completeness of $V$ there exists a Borel set $B\subset[0,1]$ such that for all $t\in B$ there exists a $v_t\in V$ such that $\|v_t^n - v_t\|_V\rightarrow 0$. On $t\in [0,1]\setminus B$ we are free to set $v_t\equiv 0$ (the zero element of $V$). For this $v_t$ we have that $\|v^n_t - v_t \|_{V^{[0,1]}}^2\equiv \int_0^1 \| v_t^n - v_t \|^2_V dt\rightarrow 0$. Therefore $V^{[0,1]}$ is complete.
\end{proof}

\begin{proposition} 
\label{PropDef} Let $\Omega$ be an open bounded subset of $\Bbb R^d$ and $V$ be a Hilbert space such that  $V\hookrightarrow C_0^1(\Omega,\Bbb R^d)$. If   $v\in V^{[0,1]}$ then there exists a unique  class of $C^1$ diffeomorphisms of $\Omega$,  $\{\phi_{t}^v\}_{t\in [0,1]}$, such that  $\phi_t^v(x)\in C^0 ([0,1]\times \overline \Omega,\Bbb R^d)$ and  which satisfy the ordinary differential equation $
 \partial_t \phi_t^v(x) = v_t(\phi_t^v(x)) $ with boundary condition $\phi_0^v(x)=x$, for all $x\in \Omega$. 
Moreover,
 \begin{equation}
 \label{div}
  \log\det D\phi_{st}^v(x)  = \int_s^t  \text{\rm div}\,  v_u (\phi_{su}^v(x))du. 
  \end{equation}
 \end{proposition}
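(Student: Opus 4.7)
The existence and uniqueness of the flow is essentially Theorem \ref{ExistFlow} applied in this weaker $C_0^1$ setting. Since $V\hookrightarrow C_0^1(\Omega,\Bbb R^d)$, for every $v\in V^{[0,1]}$ the map $t\mapsto \|v_t\|_{1,\infty}$ is in $L^2([0,1])$, and in particular in $L^1$, so each $v_t$ satisfies a Carath\'eodory Lipschitz condition with an integrable Lipschitz constant. The existence of a unique absolutely continuous solution to $\partial_t\phi_t^v(x)=v_t(\phi_t^v(x))$ with $\phi_0^v(x)=x$ then follows from the standard Carath\'eodory-Picard-Lindel\"of theorem, and the vanishing of $v_t$ on $\partial\Omega$ (from $C_0^1$) ensures $\phi_t^v(x)\in\Omega$ for all $x\in\Omega$. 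Continuous differentiability in $x$ and joint continuity of $(t,x)\mapsto\phi_t^v(x)$ on $[0,1]\times\bar\Omega$ follow from Gronwall estimates applied to the variational equation, and the inverse is constructed in the same way by integrating $v$ backwards in time from time $t$ to $s$. I would simply cite the essentially identical argument in Chapter 8 of \cite{you:10} (or Theorem 2.5 of \cite{dup:98}) to avoid reproducing it here.

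For the Jacobian formula, which is the real content of the proposition, I would first differentiate the flow equation with respect to the spatial argument. Writing $A_t(x):=D\phi_{st}^v(x)$ (partial derivatives in $x$ commute with $\partial_t$ by the $C^1$ regularity just established), one obtains the matrix-valued linear ODE
\begin{equation*}
\partial_t A_t(x) = Dv_t(\phi_{st}^v(x))\,A_t(x),\qquad A_s(x)=I_{d\times d},
\end{equation*}
to be understood in the Carath\'eodory sense: $t\mapsto A_t(x)$ is absolutely continuous and the identity holds for almost every $t$. This step requires only that $Dv_t$ is continuous (ensured by $V\hookrightarrow C_0^1$) and that $t\mapsto\|Dv_t\|_\infty$ is integrable.

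Next I would apply Jacobi's formula $\partial_t \det A_t = \det(A_t)\,\mathrm{tr}\bigl(A_t^{-1}\partial_t A_t\bigr)$, which combined with the variational ODE above yields
\begin{equation*}
\partial_t \det A_t(x) = \det\bigl(A_t(x)\bigr)\,\mathrm{tr}\bigl(Dv_t(\phi_{st}^v(x))\bigr)=\det\bigl(A_t(x)\bigr)\,\mathrm{div}\,v_t(\phi_{st}^v(x)).
\end{equation*}
Since $\det A_s(x)=1$ and the solution to this scalar linear ODE is $\det A_t(x)=\exp\bigl(\int_s^t \mathrm{div}\,v_u(\phi_{su}^v(x))\,du\bigr)>0$, one may take the logarithm and recover the identity (\ref{div}) by integrating.

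The main obstacle is the low time-regularity of $v$: all standard statements of Jacobi's formula and of the variational equation assume continuous time-dependence, whereas here $t\mapsto v_t$ is only Bochner-measurable with $L^2$ norm. Handling this cleanly requires interpreting $\partial_t$ in the almost-everywhere sense, justifying the interchange of spatial and temporal differentiation using the absolute continuity of $t\mapsto\phi_t^v(x)$ in $C^1(\bar\Omega,\Bbb R^d)$ (again via $V\hookrightarrow C_0^1$ and Gronwall), and then verifying Jacobi's formula for absolutely continuous matrix-valued functions, which follows from the scalar chain rule applied to the smooth function $\det(\cdot)$ composed with an absolutely continuous path. Once these regularity issues are addressed, the algebraic content of the formula is immediate.
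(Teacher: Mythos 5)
Your proposal is correct and follows essentially the same route as the paper: existence and uniqueness are delegated to the flow theorems of \cite{you:10} (Theorem 8.7) exactly as in the text, the variational equation $\partial_t D\phi_{st}^v = Dv_t(\phi_{st}^v)D\phi_{st}^v$ is the paper's citation of Proposition 8.8 of \cite{you:10}, and your use of Jacobi's formula followed by solving the exponential ODE and taking logarithms is just a rearrangement of the paper's direct computation of $\partial_t \log\det D\phi_{st}^v = \operatorname{div} v_t(\phi_{st}^v)$ followed by the fundamental theorem of calculus (the paper cites Rudin, Theorem 7.21, after checking measurability and integrability of the integrand, which is the same regularity issue you flag).
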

\begin{proof} 
%
\if\Ver\LongVer{ 
{\flushleft\textcolor{blue}{$\downarrow$---------begin long version---------}}\newline

We first show that there exists a $\delta>0$ and a solution $\phi^v_t(x)$   to the ordinary differential equation $ \partial_t \phi_t^v(x) = v_t(\phi_t^v(x)) $ on $(t,x)\in [0,\delta]\times\Omega$  with boundary condition $\phi_0^v(x)=x$ (for all $x\in \Omega$) which is in $C^0 ([0,\delta]\times \overline \Omega,\Bbb R^d)$.  For any map $\varphi_t(x)\in C^0 ([0,\delta]\times \overline \Omega,\Bbb R^d)$ we define the operator $\Gamma(\varphi)\equiv x + \int_0^t v_s(\varphi_s(x))ds $ (extending $v_s$ to $0$ beyond $\Omega$ if necessary). Notice  $\Gamma$ maps $C^0 ([0,\delta]\times \overline \Omega,\Bbb R^d)$ into $C^0 ([0,\delta]\times \overline \Omega,\Bbb R^d)$  since 
\begin{align*}
|x + \int_0^t v_s(\varphi_s(x))ds  - y - \int_0^w v_s(\varphi_s(y))ds |
&\leq |x-y| + \int_t^w | v_s(\phi_s(x))-v_s(\varphi_s(y))  |ds \\
&\leq |x-y| + c\int_t^w \| v_s \|_V|\phi_s(x)-\varphi_s(y)  |ds \\
&\rightarrow 0, \quad\text{by DCT as $t\rightarrow w$ and $x\rightarrow y$}
\end{align*}
and also by the fact that $|x + \int_0^t v_s(\varphi_s(x))ds| \leq \sup_{x\in \Omega}|x| +  c\int_0^1 \|v_s\|_V ds <\infty$.
We show that $\delta$ can be chosen to make $\Gamma$ a contraction with respect to sup norm on $C^0 ([0,\delta]\times \overline \Omega,\Bbb R^d)$
\begin{align}
\|\Gamma(\varphi) - \Gamma(\varphi^\prime) \|_{\infty} &\leq \int_0^t |v_s(\varphi_s(x))-  v_s(\varphi^\prime_s(x)) |ds \nonumber\\
&\leq \int_0^t c\|v_s\|_V|\varphi_s(x)-  \varphi^\prime_s(x) |ds\nonumber \\
&\leq \|\varphi-  \varphi^\prime \|_\infty  \int_0^\delta c\|v_s\|_V ds \label{Contract}
\end{align}
If we choose $\delta$ so that $\int_0^\delta c\|v_s\|_V ds < \gamma<1$ then the above inequality shows that $\Gamma$ is contraction operator on the Banach space $C^0 ([0,\delta]\times \overline \Omega,\Bbb R^d)$. In fact, for reasons that will become obvious later we notice that since $\int_0^{t} c\| v_s \|_Vds$ is uniformly continuous for all $t\in[0,1]$ we can choose a $\delta$ such that $c\int_{t}^{t+\delta} \|v_s\|_V ds < \gamma<1$ uniformly over all $t\in [0,1]$. Therefore there exists a unique fixed point $\phi_t(x) \in C^0 ([0,\delta]\times \overline \Omega,\Bbb R^d)$ which satisfies
\[ \phi_t(x) = x + \int_0^t v_s(\phi_s(x))ds. \]
Now to extend  the definition of $\phi_t(x)$ to $ t\in [0,2\delta]$ notice one can repeat the same argument to get the existence of $\phi_{\delta t}(x)$ which satisfies $\partial_t \phi_{\delta t}(x) = v_t(\phi_{\delta t}(x)) $ on $(t,x)\in [\delta,2\delta]\times\Omega$  with boundary condition $\phi_{\delta \delta}(x)=\phi_\delta(x)$.
In particular simply set 
\begin{equation}
\phi_t^v(x)\equiv \begin{cases}
\phi_t(x) & \text{when $t\in[0,\delta]$}\\
\phi_{\delta t}(x) & \text{when $t\in[\delta,2\delta]$}
\end{cases}
\end{equation}
Now repeat the argument over successive intervals $[k\delta, (k+1) \delta]$. Notice that this can lead to kinks at the endpoints $k\delta$, but this can be fixed by working with overlapping intervals and using the uniqueness of the fixed point.
This method then produces a unique   $\phi_t^v(x)\in C^0 ([0,1]\times \overline \Omega,\Bbb R^d)$ and  which satisfy the ordinary differential equation $
 \partial_t \phi_t^v(x) = v_t(\phi_t^v(x)) $ with boundary condition $\phi_0^v(x)=x$, for all $x\in \Omega$.

Now we need to show that for all $t\in [0,1]$, $\phi_t$ is a $C^1$ diffeomorphism of $\Omega$. 

{\flushleft\textcolor{blue}{$\uparrow$------------end long version---------}}\newline
} \fi

First note that if $v\in V^{[0,1]}$ and $V\hookrightarrow C_0^1(\Omega, \Bbb R^d)$ then $ \| v_t \|_{1,\infty} \leq c \| v_t \|_{V} $. Now by H\"older, $\int_0^1 \| v_t \|_{V} dt \leq \|v \|_{V^{[0,1]}}<\infty$  so that the arguments for  Theorem 8.7   in \cite{you:10} to apply to the class  $ V^{[0,1]}$. In particular, there exists a unique  class of $C^1$ diffeomorphisms of $\Omega$, $\phi_t^v(x)\in C^0 ([0,1]\times \overline \Omega,\Bbb R^d)$, which satisfy the ordinary differential equation $\partial_t \phi_t^v(x) = v_t(\phi_t^v(x))$
 with boundary condition $\phi_0^v(x)=x$, for all $x\in \Omega$. Moreover, by Proposition 8.8 in \cite{you:10} we have that
 \begin{align}
    \partial_t D\phi_{st}^v(x)  =  D v_t (\phi_{st}^v(x))  D\phi_{st}^v(x) \label{gg}
\end{align}
where $\det D\phi_{ss}^v(x)=Id_d$. 
Since $D\phi_{st}^v(x)$ is nonsingular and differentiable in $t$ we have that (see  (6.5.53) of \cite{hor:91}, for example)
\begin{align*}
\partial_t \log \det D\phi_{st}^v(x)&= \text{trace}\bigl\{ [D\phi_{st}^v(x)]^{-1}   \partial_t D\phi_{st}^v(x)  \bigr\} \\
& = \text{trace}\bigl\{ [D\phi_{st}^v(x)]^{-1}  D v_t (\phi_{st}^v(x))  D\phi_{st}^v(x) \bigr\} ,\,\,\text{by (\ref{gg})} \\
& = \text{trace}\bigl\{ D v_t (\phi_{st}^v(x))  \bigr\} \\
& = \text{div}\, v_t (\phi_{st}^v(x)).
\end{align*}
Therefore $\log \det D\phi_{st}^v(x)$ is differentiable everywhere on $t\in [0,1]$ with derivative given by $ \text{div}\, v_t (\phi_{st}^v(x))$.  

Since $v_t(x)$ is measurable with respect to both arguments $t$ and $x$ (by definition) and limits of measurable functions are measurable, the function $\text{div}\,v_t(x)$  is also measurable. Since $\phi^v_{st}(x)$ is continuous with respect to both $t$ and $x$,  $\text{div}\, v_t (\phi_{st}^v(x))$ is also measurable. 
Notice that $\text{div}\, v_t (\phi_{st}^v(x))$  is also Lebesque integrable since $|\text{div}\, v_t (\phi_{st}^v(x))| \leq c\| v_t \|_V$ by the embedding $V\hookrightarrow C_0^1(\Omega,\Bbb R^d)$ and the fact that that  $\int_0^1 \| v_t \|_{V} dt \leq \|v \|_{V^{[0,1]}}<\infty$.
Therefore by Theorem 7.21 of \cite{rud:66} we have that
\[\log \det D\phi_{st}^v(x) =  \int_s^t  \text{div}\, v_u (\phi_{su}^v(x))  du\]
since $\log \det D\phi_{ss}^v(x) = 0$. \end{proof}

\begin{lemma} If $V\hookrightarrow C_0^1(\Omega,\Bbb R^d)$ and $v,w\in V^{[0,1]}$, then
\begin{align}
 \|\phi_{st}^v - \phi_{st}^w    \|_{\infty}& \leq c \| v-w \|_\text{\tiny $V^{[0,1]}$} \exp{\left( c\| v \|_\text{\tiny $V^{[0,1]}$} \right)}
  \label{222}
\end{align}
where $c$ is a constant which does not depend on $v, w, s$ or $t$.
Moreover, if we additionally suppose $V\hookrightarrow C_0^2(\Omega, \Bbb R^d)$ then
\begin{align}
  \label{777}
\| \phi_{st}^v - \phi_{st}^w \|_{1,\infty}\leq \| v-w \|_\text{\tiny $V^{[0,1]}$}  F\bigl( {\|v\|_\text{\tiny $V^{[0,1]}$}},{\|w\|_\text{\tiny $V^{[0,1]}$}}\bigr)
\end{align}
where $F(\cdot,\cdot)$ is  a finite  function on $\Bbb R\times \Bbb R$, monotonically increasing in both arguments, which does not depend on $v, w, s$ or $t$. 
\end{lemma}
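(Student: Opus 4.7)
The plan is to prove both estimates by comparing the integral forms of the two flows and invoking Gronwall's inequality, once for each estimate. Using Proposition~\ref{PropDef} write
\[
\phi_{st}^v(x) = x + \int_s^t v_u(\phi_{su}^v(x))\,du, \qquad \phi_{st}^w(x) = x + \int_s^t w_u(\phi_{su}^w(x))\,du,
\]
and note that the embedding $V\hookrightarrow C_0^1(\Omega,\mathbb{R}^d)$ gives $\|v_u\|_{1,\infty}\le c\|v_u\|_V$ for every $u$.

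For inequality~(\ref{222}) I would subtract the two identities and, after inserting $v_u(\phi_{su}^w(x))$, split the difference into a Lipschitz piece and a drift piece:
\[
|\phi_{st}^v(x)-\phi_{st}^w(x)| \le \int_s^t c\|v_u\|_V\,\|\phi_{su}^v-\phi_{su}^w\|_\infty\,du + \int_s^t c\|v_u-w_u\|_V\,du.
\]
Cauchy--Schwarz bounds the second integral by $c\|v-w\|_{V^{[0,1]}}$ (uniformly in $s,t\in[0,1]$). Taking the supremum over $x$ and applying Gronwall's inequality to the function $u\mapsto\|\phi_{su}^v-\phi_{su}^w\|_\infty$ against the integrable weight $c\|v_u\|_V$ produces $\|\phi_{st}^v-\phi_{st}^w\|_\infty \le c\|v-w\|_{V^{[0,1]}}\exp\!\bigl(c\int_s^t\|v_u\|_V\,du\bigr)$, which is (\ref{222}) after using $\int_s^t\|v_u\|_V\,du\le\|v\|_{V^{[0,1]}}$.

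For inequality~(\ref{777}) I would differentiate the integral equation (using (\ref{gg}) from the proof of Proposition~\ref{PropDef}) to obtain
\[
D\phi_{st}^v(x) = I + \int_s^t Dv_u(\phi_{su}^v(x))\,D\phi_{su}^v(x)\,du,
\]
and similarly for $w$. A first application of Gronwall to this identity, combined with the bound $\|Dv_u\|_\infty \le c\|v_u\|_V$ from $V\hookrightarrow C_0^1$, gives the a priori bound $\|D\phi_{su}^w\|_\infty\le\exp(c\|w\|_{V^{[0,1]}})$. Subtracting the two identities and adding and subtracting the cross terms $Dv_u(\phi_{su}^v)D\phi_{su}^w$ and $Dv_u(\phi_{su}^w)D\phi_{su}^w$ decomposes $D\phi_{st}^v(x)-D\phi_{st}^w(x)$ into three integrals: one involving $D\phi_{su}^v-D\phi_{su}^w$ weighted by $Dv_u$; one involving the increment of $Dv_u$ between $\phi_{su}^v(x)$ and $\phi_{su}^w(x)$, which is controlled by $\|D^2v_u\|_\infty\le c\|v_u\|_V$ (this is where the stronger hypothesis $V\hookrightarrow C_0^2$ is used) times $\|\phi_{su}^v-\phi_{su}^w\|_\infty$; and one involving $Dv_u-Dw_u$ evaluated on the $w$-flow. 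Bounding the second term by (\ref{222}) and the third by Cauchy--Schwarz applied to $\|v_u-w_u\|_V$, and then applying Gronwall a second time, yields (\ref{777}) with an $F$ of the form $F(a,b)=c_1\exp(c_2 a)(1+c_3 b\,e^{c_4 a})$, manifestly finite and monotone in each argument. The sup-norm bound on $\phi^v-\phi^w$ follows by combining this with (\ref{222}), and the sup-norm bound on the identity piece is absorbed into $F$.

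The main obstacle is purely bookkeeping: one must carefully track the nested Gronwall constants so that the final $F$ depends only on $\|v\|_{V^{[0,1]}}$ and $\|w\|_{V^{[0,1]}}$ and is uniform in $s,t\in[0,1]$. The key technical input, beyond standard Gronwall and Cauchy--Schwarz, is the use of the stronger embedding $V\hookrightarrow C_0^2$ to furnish a uniform Lipschitz constant for $Dv_u$, without which the second of the three integrals above cannot be closed.
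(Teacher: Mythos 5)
Your proposal follows essentially the same route as the paper's proof: for (\ref{222}) the identical add-and-subtract of $v_u(\phi_{su}^w(x))$ followed by Cauchy--Schwarz and Gronwall, and for (\ref{777}) the same differentiated flow identity, the same a priori bound on $\|D\phi_{su}^w\|$, the same decomposition into a Gronwall-able term, a second-derivative term controlled via (\ref{222}) and the $C_0^2$ embedding, and a $\|v_u-w_u\|_V$ term, closed by a second application of Gronwall. The only quibble is that your explicit formula $F(a,b)=c_1e^{c_2a}(1+c_3b\,e^{c_4a})$ should carry an $e^{cb}$ factor coming from the a priori bound $\|D\phi_{su}^w\|_\infty\leq e^{c\|w\|_{V^{[0,1]}}}$, but since the lemma only requires some finite monotone $F$ this is immaterial.
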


\begin{proof}
The inequality (\ref{222})  follows directly from Grownwell's lemma  applied to the following inequality
\begin{align*}
|\phi_{st}^v(x) - \phi_{st}^w(x)| &= \left| \int_{s}^t v_u (\phi_{su}^v(x)) -  w_u (\phi_{su}^w(x)) du\right|\\
&\leq \int_{s}^t |v_u (\phi_{su}^v(x)) -  v_u (\phi_{su}^w(x)) |du + \int_{s}^t |v_u (\phi_{su}^w(x)) -  w_u (\phi_{su}^w(x)) |du\\
& \leq \int_{s}^t c\|v_u\|_{V} |\phi_{su}^v(x) -  \phi_{su}^w(x)|du  +  c\|v - w  \|_{V^{[0,1]}}
\end{align*}
where the last inequality follows from the assumption $V\hookrightarrow C_0^1(\Omega, \Bbb R^d)$.
\if\Ver\LongVer{ 
{\flushleft\textcolor{blue}{$\downarrow$---------begin long version---------}}\newline
 Therefore 
 \[  |\phi_{st}^v(x) - \phi_{st}^w(x)| \leq  c_2\|v - w  \|_{V^{[0,1]}} \exp\left( \int_{s}^t \|v_u\|_{1,\infty}  du \right) \]
 by Grownwell's lemma.  This proves (\ref{222}).
 {\flushleft\textcolor{blue}{$\uparrow$------------end long version---------}}\newline
} \fi

 To prove (\ref{777}) notice that for any vector $h\in \Bbb R^d$ we have that $\partial_t D \phi_{st}^v(x)h= Dv_t (\phi_{st}^v(x))D \phi_{st}^v(x)h $ where $ D \phi_{ss}^v(x)h = h$ (by Proposition 8.8 in \cite{you:10} and also  \cite{dup:98}). Therefore
 \begin{align}
 \label{uugg}
 D \phi_{st}^v(x)h -  D \phi_{st}^w(x)h = \int_s^t \Bigl[  Dv_u (\phi_{su}^v(x))D \phi_{su}^v(x)h -  Dw_u (\phi_{su}^w(x))D \phi_{su}^w(x)h \Bigr]\,du
 \end{align}
 where we are using the fact that $D \phi_{st}^v(x)h$  is differentiable with respect to $t$ everywhere in $[0,1]$ and with  Lebseque integrable derivative   (and using Theorem 8.21 of \cite{rud:66}). 
 Now notice that the integrand of (\ref{uugg}) satisfies
  \begin{align*}
\bigl| &Dv_u (\phi_{su}^v(x))D \phi_{su}^v(x)h -  Dw_u (\phi_{su}^w(x))D \phi_{su}^w(x)h  \bigr| \leq I + I\!I
\end{align*}
where 
\begin{align*}
I &=  \left| Dv_u (\phi_{su}^v(x)) \Bigl\{ D \phi_{su}^v(x)h - D \phi_{su}^w(x)h \Bigr\}  \right| \leq   c\bigl\| v_u\bigr\|_{V} \Bigl|   D \phi_{su}^v(x)h - D \phi_{su}^w(x)h  \Bigr| 
\end{align*}
and
\begin{align*}
I\!I & =  \left|\Bigl\{ Dv_u (\phi_{su}^v(x)) -  Dw_u (\phi_{su}^w(x))  \Bigr\} D \phi_{su}^w(x)h  \right| \\
&\leq   \Bigl\{ \| v_u \|_{2,\infty} \| \phi_{su}^v - \phi_{su}^w \|_\infty   + \bigl\| v_u-w_u\bigr \|_{1,\infty} \Bigr\} \,  \bigl\| \phi_{su}^w\bigr\|_{1,\infty} \, |h |\\
&\leq \Bigl\{   c \| v_u \|_{V^{\phantom{[}}}  \|v-w\|_{V^{[0,1]}} \exp\left(c\| w \|_{V^{[0,1]}} \right)   + c\bigl\| v_u-w_u\bigr \|_{V} \Bigr\}\,  \bigl\| \phi_{su}^w\bigr\|_{1,\infty} \, |h |
 \end{align*}
\if\Ver\LongVer{ 
{\flushleft\textcolor{blue}{$\downarrow$---------begin long version---------}}\newline
We are using the fact that for any $v_t\in V$ we have that $\bigl |Dv_t (x) h \bigr |\leq \| v_t \|_{1,\infty } | h| $ for any $ h,  x \in \Bbb R^d$.  Also we are using the fact that for any $v_t\in V$ we have that $\bigl |(Dv_t (x) - Dv_t(y))h \bigr |\leq \| v_t \|_{2,\infty }|x-y| | h| $ for any $ h,  x \in \Bbb R^d$.  
  {\flushleft\textcolor{blue}{$\uparrow$------------end long version---------}}\newline
} \fi
where the last inequality follows by (\ref{222}). To bound $I\!I$ further notice $
  \|\phi_{st}^v\|_{1,\infty} \leq c_1\exp{\left( c\| v \|_\text{\tiny $V^{[0,1]}$} \right)}$. To see why,
apply Gronwall's lemma to the following inequality
\begin{align*}
|\phi_{st}^v(x) - \phi_{st}^v(y)| &= \left| x-y + \int_{s}^t v_u (\phi_{su}^v(x)) -  v_u (\phi_{su}^v(y)) du\right|\\
& \leq  |x-y| + \int_{s}^t c\, \|v_u\|_{V^{\phantom{[}}} |\phi_{su}^v(x) -  \phi_{su}^v(y)| du
\end{align*}
which yields $| \phi_{st}^v(x) - \phi_{st}^v(y) |\leq |x-y| \exp{\left( c\| v \|_\text{\tiny $V^{[0,1]}$} \right)}$.
  Since $\phi_{st}^v(x)$ is differentiable with respect to $x$ everywhere in $\Omega$, for each multi-index $\beta$ such that $|\beta|=1$ there exists a direction $h^\beta\in \Bbb R^d$ (with $|h^\beta|=1$) such that 
 \begin{align*}
 |D^\beta \phi_{st}^v(x) | &= \lim _{\epsilon \downarrow 0} \frac{|\phi_{st}^v(x+\epsilon h^\beta) - \phi_{st}^v(x)|}{\epsilon}  \leq  \exp{\left( c\| v \|_\text{\tiny $V^{[0,1]}$} \right)}.
 \end{align*} 
 Combining the above inequality with the fact that $ \|\phi_{st}^v\|_\infty \leq \sup_{x\in \Omega} |x| $ gives  the desired inequality 
 $ \|\phi_{st}^v\|_{1,\infty} \leq c_1\exp{\left( c\,\| v \|_\text{\tiny $V^{[0,1]}$} \right)}$. Applying this to $ I\!I $ gives
 \begin{align*}
 I\!I &\leq \Bigl\{  c \| v_u \|_{V}  \|v-w\|_{V^{[0,1]}} \exp\left(c \| w \|_{V^{[0,1]}} \right)   + c \bigl\| v_u-w_u\bigr \|_{V} \Bigr\}\, c_1\exp{\left( c \| w \|_\text{\tiny $V^{[0,1]}$} \right)} |h|
 \end{align*}
 Therefore
 \begin{align*}
\int_s^t I\!I\, du 
&\leq c_1 c |h| \|v-w \|_{V^{[0,1]}}\Bigl\{\|v\|_{V^{[0,1]}} \exp\bigl( 2c\| w \|_{V^{[0,1]}} \bigr)  +  \exp\bigl( c\| w \|_{V^{[0,1]}} \bigr)  \Bigr\}\\
& =  |h|  \|v-w\|_{V^{[0,1]}}  F\left(\|v \|_{V^{[0,1]}} ,\| w\|_{V^{[0,1]}}\right) 
\end{align*}
where $F(x,y)$ is monotone and finite in both $x$ and $y$. Now by equation (\ref{uugg}) we have that 
\begin{align*} 
\bigl|D \phi_{st}^v(x)h -  D \phi_{st}^w(x)h\bigr | & \leq \int_s^t I du + \int_s^t I\!I du \\
&\leq  \int_s^t c\bigl\| v_u\bigl\|_{V} \bigl|   D \phi_{su}^v(x)h - D \phi_{su}^w(x)h  \bigr| du \\
&\qquad\qquad +  |h|  \|v-w\|_{V^{[0,1]}}  F\bigl(\|v \|_{V^{[0,1]}},\| w\|_{V^{[0,1]}}\bigr).
\end{align*}
By Gronwell's lemma we have that
\[\bigl|D \phi_{st}^v(x)h -  D \phi_{st}^w(x)h\bigr | \leq  |h|  \|v-w\|_{V^{[0,1]}}  F\bigl(\|v \|_{V^{[0,1]}},\| w\|_{V^{[0,1]}}\bigr) \exp\bigl ( c \bigl\| v\bigl\|_{V^{[0,1]}} \bigr) . \]
Now by taking a supremum over $x\in \Omega$, $|h|=1$ and combining with (\ref{222}) gives (\ref{777}), after redefining $F$ to accommodate the extra term $\exp ( c \bigl\| v\bigl\|_{V^{[0,1]}} )$.


\end{proof}

\begin{proposition} 
\label{proo}
If $v,h \in V^{[0,1]}$ and $V\hookrightarrow C_0^1(\Omega,\Bbb R^d)$ then for all $x\in \Omega$ and $s,t\in [0,1]$ 
\begin{align}
{\partial_\epsilon}  \phi^{ v+\epsilon h}_{st}(x)  &= \int_s^t  \bigl\{D\phi^{ v+\epsilon h}_{ut} h_u \bigr\}\circ{\phi^{ v+\epsilon h}_{su}(x)}\,   du. \label{111}
 \end{align}
\if\Ver\LongVer{ 
{\flushleft\textcolor{blue}{$\downarrow$---------begin long version---------}}\newline
If, in addition, $V\hookrightarrow C_0^2(\Omega,\Bbb R^d)$ then ${\partial_\epsilon}  \phi^{ v+\epsilon h}_t(x)  $ is locally Lipschitz continuous in $\epsilon$   (with respect to sup-norm over $x\in \Omega$). 
   {\flushleft\textcolor{blue}{$\uparrow$------------end long version---------}}\newline
} \fi
If, in addition,  $V\hookrightarrow C_0^3(\Omega,\Bbb R^d)$ then
\begin{align}
  \label{888}
\partial_\epsilon  \log \det D\phi_{1}^{v+\epsilon h }(x)\bigr|_{\epsilon = 0} & = \int_0^1  \Bigl[ h_u \cdot \nabla \log\det D\phi_{u1}^v  + \text{\rm div}\, h_u\Bigr] \circ  \phi^v_{u}(x)  \, du
\end{align}
\end{proposition}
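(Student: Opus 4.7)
The plan for (\ref{111}) is to linearize the defining flow ODE in the perturbation parameter $\epsilon$ and solve the resulting linear inhomogeneous equation by variation of parameters. Fixing $x \in \Omega$ and writing $\zeta_t(\epsilon) := \partial_\epsilon \phi_{st}^{v+\epsilon h}(x)$, I would differentiate $\partial_t \phi_{st}^{v+\epsilon h}(x) = (v_t + \epsilon h_t)(\phi_{st}^{v+\epsilon h}(x))$ in $\epsilon$ --- an exchange justified by the Lipschitz estimate (\ref{777}) and dominated convergence --- to obtain
\[
\partial_t \zeta_t = D(v_t+\epsilon h_t)(\phi_{st}^{v+\epsilon h}(x))\,\zeta_t + h_t(\phi_{st}^{v+\epsilon h}(x)), \qquad \zeta_s = 0.
\]
Since $D\phi_{st}^{v+\epsilon h}(x)$ is the fundamental matrix of the homogeneous part by equation (\ref{gg}) in Proposition~\ref{PropDef}, variation of parameters delivers $\zeta_t = D\phi_{st}^{v+\epsilon h}(x)\int_s^t [D\phi_{su}^{v+\epsilon h}(x)]^{-1} h_u(\phi_{su}^{v+\epsilon h}(x))\,du$. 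The semigroup/chain-rule identity $D\phi_{st}^{v+\epsilon h}(x)[D\phi_{su}^{v+\epsilon h}(x)]^{-1} = D\phi_{ut}^{v+\epsilon h}(\phi_{su}^{v+\epsilon h}(x))$ then recasts this as (\ref{111}).

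For (\ref{888}), the plan is to apply (\ref{div}) at $s=0, t=1$ to represent
\[
G(\epsilon) := \log\det D\phi_1^{v+\epsilon h}(x) = \int_0^1 \text{div}(v_u + \epsilon h_u)(\phi_u^{v+\epsilon h}(x))\,du,
\]
and differentiate under the $du$-integral to get
\[
G'(0) = \int_0^1 \text{div}\,h_u(\phi_u^v(x))\,du + \int_0^1 \nabla(\text{div}\,v_u)(\phi_u^v(x)) \cdot \partial_\epsilon \phi_u^{v+\epsilon h}(x)\bigr|_{\epsilon=0}\,du.
\]
Substituting (\ref{111}) with $s=0, t=u$ into the second term and swapping the order of integration by Fubini on $\{0 \le r \le u \le 1\}$ converts it to
\[
\int_0^1 h_r(\phi_r^v(x))\cdot \Bigl[\int_r^1 [D\phi_{ru}^v(\phi_r^v(x))]^T \nabla(\text{div}\,v_u)(\phi_u^v(x))\,du\Bigr] dr.
\]
The closing step is to recognize the bracketed quantity as $\nabla \log\det D\phi_{r1}^v(\phi_r^v(x))$: taking the spatial gradient of the identity $\log\det D\phi_{r1}^v(y) = \int_r^1 \text{div}\,v_u(\phi_{ru}^v(y))\,du$ from (\ref{div}), applying the chain rule, and using $\phi_{ru}^v(\phi_r^v(x)) = \phi_u^v(x)$ produces exactly this expression. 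Relabeling $r$ as $u$ then yields (\ref{888}).

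The main obstacle is rigorously justifying the differentiation under the integral in the computation of $G'(0)$, since $\epsilon$ enters both through the integrand $v_u + \epsilon h_u$ and through the flow $\phi_u^{v+\epsilon h}(x)$ at which the divergence is evaluated. The strengthened hypothesis $V\hookrightarrow C_0^3(\Omega,\Bbb R^d)$ is precisely what makes this go through: it ensures $\nabla\text{div}\,v_u$ is in $C^1$ with norm controlled by a multiple of $\|v_u\|_V$, so the composition difference-quotients in $\epsilon$ are dominated uniformly in $u$ by an $L^1([0,1])$ function through $\|h_u\|_V$ and $\|v_u\|_V$ (Cauchy--Schwarz in $u$). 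The Lipschitz bound (\ref{222}) supplies the uniform control on $\epsilon \mapsto \phi_u^{v+\epsilon h}(x)$ required for this domination, and the same bounds legitimize the Fubini interchange in the subsequent step.
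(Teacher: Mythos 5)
Your argument is correct, but it takes a genuinely different route from the paper's, most notably for (\ref{888}). For (\ref{111}) the paper simply cites Theorem 8.10 of \cite{you:10} for the derivative at $\epsilon=0$ and then obtains the formula at general $\epsilon$ by the reparameterization ${\partial_\epsilon}\phi^{v+\epsilon h}_{st}={\partial_\xi}\phi^{v+\epsilon h+\xi h}_{st}\bigr|_{\xi=0}$; you instead re-derive that theorem by linearizing the flow ODE and using variation of parameters. Be aware that differentiating the ODE in $\epsilon$ presupposes that $\partial_\epsilon\phi^{v+\epsilon h}_{st}$ exists, which is precisely the nontrivial content of the cited theorem; the airtight version constructs the candidate $\zeta_t$ as the solution of the linear variational equation and then shows via Gronwall that it is the derivative, rather than assuming differentiability up front. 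For (\ref{888}) the paper passes $D$ under the integral in (\ref{111}), establishes joint continuity of $D\partial_\epsilon\phi^{v+\epsilon h}_1(x)$ in $(x,\epsilon)$ so as to interchange $D$ and $\partial_\epsilon$, and then applies the Jacobi formula $\partial_\epsilon\log\det A=\mathrm{trace}(A^{-1}\partial_\epsilon A)$ together with the trace identity $\mathrm{trace}\bigl[\{D(D\phi^v_{u1}h_u)\}(D\phi^v_{u1})^{-1}\bigr]=h_u\cdot\nabla\log\det D\phi^v_{u1}+\mathrm{div}\,h_u$. You instead differentiate the representation $\log\det D\phi_1^{v+\epsilon h}(x)=\int_0^1\mathrm{div}(v_u+\epsilon h_u)(\phi_u^{v+\epsilon h}(x))\,du$ from (\ref{div}) directly in $\epsilon$, substitute (\ref{111}), apply Fubini on the triangle $\{0\le r\le u\le 1\}$, and identify the inner integral $\int_r^1[D\phi^v_{ru}(\phi^v_r(x))]^T\nabla(\mathrm{div}\,v_u)(\phi^v_u(x))\,du$ as $\nabla\log\det D\phi^v_{r1}$ evaluated at $\phi^v_r(x)$ by differentiating (\ref{div}) in the spatial variable; I have checked that this identification and the Fubini bookkeeping are correct and reproduce (\ref{888}) exactly. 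What your route buys is the avoidance of the delicate $D$--$\partial_\epsilon$ interchange (which in the paper forces an extension of (\ref{777}) to the $\|\cdot\|_{2,\infty}$ norm and a continuity argument in $(x,\epsilon)$); what it costs is two differentiations under the $du$-integral (one in $\epsilon$, one in $y$), for which your domination via $\|\mathrm{div}\,v_u\|_{1,\infty}\le c\|v_u\|_V$, the estimate (\ref{222}), and Cauchy--Schwarz in $u$ is adequate under the stated $C_0^3$ embedding.
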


\begin{proof} The assumption that $v,h \in V^{[0,1]}$ and $V\hookrightarrow C_0^1(\Omega,\Bbb R^d)$ are sufficient to apply
Theorem 8.10 of \cite{you:10} which gives
\begin{align*}
{\partial_\epsilon}  \phi^{ v+\epsilon h}_{st}(x) \bigr|_{\epsilon = 0} &= \int_s^t  \bigl\{D\phi^{ v}_{ut} h_u \bigr\}\circ{\phi^{ v}_{su}(x)}\,   du. 
 \end{align*}
 Now since ${\partial_\epsilon}  \phi^{ v+\epsilon h}_{st}(x) = {\partial_\xi}  \phi^{ v+\epsilon h+\xi h}_{st}(x)\bigr|_{\xi = 0} $ one immediately obtains (\ref{111}). 
 
\if\Ver\LongVer{ 
{\flushleft\textcolor{blue}{$\downarrow$---------begin long version---------}}\newline

    Now we can prove the local Lipschitz continuity of  ${\partial_\epsilon}  \phi^{ v+\epsilon h}_t  $ in $\epsilon$  (with respect to sup-norm). Let $\xi, \epsilon\in (-M, M)$ for some $0<M<\infty$ and fix $v,h\in V^{[0,1]}$ and write
\begin{align}
\left \| \partial_\epsilon  \phi^{ v+\epsilon h}_{st} -  \partial_\xi   \phi^{ v+\xi h}_{st}  \right\|_{\infty} & = \left \| \int_s^t  \bigl\{D\phi^{ v+\epsilon h}_{ut} h_u \bigr\}\circ{\phi^{v+\epsilon h}_{su}(x)}-  \bigl\{D\phi^{v+\xi h}_{ut} h_u \bigr\}\circ{\phi^{v+\xi h}_{su}(x)}\,   du    \right\|_{\infty}\nonumber \\
& \leq  \left\| \int_s^t  \bigl\{D\phi^{ v+\epsilon h}_{ut} h_u \bigr\}\circ{\phi^{ v+\epsilon h}_{su}(x)}-  \bigl\{D\phi^{ v+\xi h }_{ut} h_u \bigr\}\circ{\phi^{v+\epsilon h}_{su}(x)}\,   du    \right\|_{\infty}\label{hhhh1}  \\ 
& \qquad+ \left\| \int_s^t  \bigl\{D\phi^{v+\xi h}_{ut} h_u \bigr\}\circ{\phi^{v+\epsilon h}_{su}(x)}-  \bigl\{D\phi^{v+\xi h}_{ut} h_u \bigr\}\circ{\phi^{ v+\xi h}_{su}(x)}\,   du    \right\|_{\infty}  \label{hhhh2}
\end{align}
The integrand of the first  term (\ref{hhhh1}) can be bounded by noticing that   $V\hookrightarrow C_0^{2}(\Omega, \Bbb R^d)$ implies
\begin{align}
  \Bigl| \bigl\{D\phi^{ v+\epsilon h}_{ut} h_u \bigr\}\circ{\phi^{ v+\epsilon h}_{su}(x)}- & \bigl\{D\phi^{ v+\xi h }_{ut} h_u \bigr\}\circ{\phi^{v+\epsilon h}_{su}(x)}\Bigr| \nonumber\\
  &\leq \|  (D\phi_{ut}^{ v+\xi h} -D \phi_{ut}^{ v+\epsilon h})h_u   \|_\infty  \nonumber\\
  &\leq  \|  \phi_{ut}^{ v+\xi h} - \phi_{ut}^{v+\epsilon h}  \|_{1,\infty}  \| h_u\|_{\infty} \nonumber\\
&\leq |\xi -\epsilon| \|  h  \|_{V^{[0,1]}}  F\bigl( {\|v+\xi h\|_\text{\tiny $V^{[0,1]}$}},{\|v+\epsilon h\|_\text{\tiny $V^{[0,1]}$}}\bigr)   \| h_u\|_{V},\,\,\text{by (\ref{777})} \nonumber\\
&\leq c_1 |\xi -\epsilon|   \| h_u\|_{V}
\label{uuuu}
\end{align}
where $c_1$ is a finite constant which depends on $v$ and $h$ but not on   $\xi$ or $\epsilon$.
The integrand of the second term  (\ref{hhhh2}) can be bounded as follows:
\begin{align}
 \Bigr|\bigl\{D\phi^{v+\xi h}_{ut} h_u \bigr\}\circ{\phi^{v+\epsilon h}_{su}(x)}-  &\bigl\{D\phi^{v+\xi h}_{ut} h_u \bigr\}\circ{\phi^{ v+\xi h}_{su}(x)}\Bigl| \nonumber   \\
  &\leq   \bigl\| D\phi^{v+\xi h}_{ut} h_u \bigr\|_{1,\infty} \bigl| \phi^{v+\epsilon h}_{su}(x)-  \phi^{ v+\xi h}_{su}(x)  \bigr| \nonumber   \\
 &\leq  c_2 |\xi -\epsilon|  \| D\phi^{v+\xi h}_{ut} h_u \|_{1,\infty} ,\,\,\text{by (\ref{222})} \nonumber   \\
&\leq  c_2 |\xi -\epsilon|   \| h_u \|_{1,\infty} \| \phi^{v+\xi h}_{ut} \|_{2,\infty}  \nonumber   \\
&\leq  c_2 c_3|\xi -\epsilon|   \| h_u \|_{V}   \label{fine} 
\end{align}
where $c_2, c_3$  are finite constants which depends on $v$ and $h$ but not on   $\xi$ or $\epsilon$ (the existence of $c_3$ follows again from  Theorem 8.9 of \cite{you:10}). Combining (\ref{uuuu}) and (\ref{fine}) with (\ref{hhhh1}) and (\ref{hhhh2}) shows that $ \partial_\epsilon  \phi^{ v+\epsilon h}_{st}$  is locally Lipschitz in $\epsilon$.

   {\flushleft\textcolor{blue}{$\uparrow$------------end long version---------}}\newline
} \fi

 To show (\ref{888}) notice  that partial derivatives on $x$ can pass under the integral in (\ref{111})  to compute $D {\partial_\epsilon}  \phi^{ v+\epsilon h}_{1}$. This follows by first noticing that  $V\hookrightarrow C_0^{2}(\Omega, \Bbb R^d)$ implies
  \begin{align}
  \sup_{u\in[0,1]} \bigl\| \phi_{ut}^{v+\epsilon h} \bigr\|_{2,\infty}&\leq c_1 \exp\left({c_2 \|  {v\|_{V^{[0,1]}}+M\| h} \|_{V^{[0,1]}}}\right) \label{811}
  \end{align}
  for all $|\epsilon|< M$, by equation (8.11) of \cite{you:10}.
\if\Ver\LongVer{ 
{\flushleft\textcolor{blue}{$\downarrow$---------begin long version---------}}\newline
Here is the above inequality with an extra step.
 \begin{align}
  \sup_{u\in[0,1]} \| \phi_{ut}^{v+\epsilon h} \|_{p+1,\infty}&\leq c_1 \exp\left({c_2 \|  {v+\epsilon h} \|_{V^{[0,1]}}}\right)\\
  &\leq c_1 \exp\left({c_2 \|  {v\|_{V^{[0,1]}}+M\| h} \|_{V^{[0,1]}}}\right) <\infty \label{811}
  \end{align}
 Note that   (8.11) of  \cite{you:10} can be proved using
 \begin{align}
 \label{rrmark}
  \| \phi_{st}^v(x) - \phi_{ss}^v(x) \|_{p,\infty}
  &\leq\int_s^t \|  v_u(\phi_{su}^v(x))   \|_{p,\infty} du
 \end{align}
 and then applying Proposition 8.4 of \cite{you:10} along with Gronwall's identity and induction. Notice that when $p\geq 1$ equation (\ref{rrmark}) can be obtained, not from the identity $\phi_{st}^v(x) - \phi_{ss}^v(x)= \int_s^t  v_u(\phi_{su}^v(x))    du$, but from the identity
 \[ D^\beta\phi_{st}^v(x) - D^\beta\phi_{ss}^v(x)= \int_s^t    D^\beta(v_u(\phi_{su}^v(x)))   du.\]
 which follows from Proposition 8.8 of  \cite{you:10}.
   {\flushleft\textcolor{blue}{$\uparrow$------------end long version---------}}\newline
} \fi
Therefore when fixing $v, h\in V^{[0,1]}$ the function $ \| D\phi^{v+\epsilon h}_{ut} h_u \|_{1,\infty}$ is bounded above by a finite constant over $(u,\epsilon)\in[0,1]\times (-M,M)$. With an additional application of Proposition 8.4 of \cite{you:10} we also have that
  $ \bigl\| \{D\phi^{v+\epsilon h}_{ut} h_u\}\circ \phi_{su}^{v+\epsilon h} \bigr\|_{1,\infty}<\infty$ uniformly over $(u,\epsilon)\in[0,1]\times (-M,M)$.
Therefore, indeed, partial derivatives on $x$ can pass under the integral in (\ref{111})  to obtain
\begin{equation}
\label{pass}
D {\partial_\epsilon}  \phi^{ v+\epsilon h}_{1}(x) = \int_0^1 D\bigl[ \{D\phi_{u1}^{v+\epsilon h} h_u  \}\circ \phi_u^{v+\epsilon h} (x)\bigr]  du. 
\end{equation}

Now we show that $D {\partial_\epsilon}  \phi^{ v+\epsilon h}_{1}(x)$ is continuous  over $(x,\epsilon)\in \Omega\times (-M,M)$. This will allow us to switch the order of $D$ and $\partial_\epsilon$ and establish (\ref{888}). The same reasoning which allows $D$ to pass under the integral in (\ref{111})  also allows us to pass limits on $x$ and $\epsilon$ under the integral in (\ref{pass}). Therefore it will be sufficient to show the integrand, $ D\bigl[ \{D\phi_{u1}^{v+\epsilon h} h_u  \}\circ \phi_u^{v+\epsilon h}(x)  \bigr]$,  in (\ref{pass}) is continuous  over $(x,\epsilon)\in \Omega\times (-M,M)$.
To see why   the integrand in (\ref{pass})  is continuous first note that $ \phi^{ v+\epsilon h}_{st}$ is a $C^2$ diffeomorphism (by a similar proof Theorem 8.7 in \cite{you:10}). Secondly, under the assumption $V\hookrightarrow C_0^3(\Omega, \Bbb R^d)$ one can extend (\ref{777}) to bound  $\| \phi_{st}^{v+\epsilon h} - \phi_{st}^{v+\xi h} \|_{2,\infty}$ by $c|\xi - \epsilon|$, where $c$ is a finite constant which may depend on $v,h$ but not on $\xi, \epsilon$. These two facts imply that $ D\bigl[ \{D\phi_{u1}^{v+\epsilon h} h_u  \}\circ \phi_u^{v+\epsilon h}(x)  \bigr]$ is indeed continuous in $(x,\epsilon)\in \Omega\times (-M,M)$ which implies that $D {\partial_\epsilon}  \phi^{ v+\epsilon h}_{1}(x)$ is also. 

The continuity of $D {\partial_\epsilon}  \phi^{ v+\epsilon h}_{1}(x)$ over $(x,\epsilon)\in \Omega\times (-M,M)$ implies  $ \partial_\epsilon D \phi^{ v+\epsilon h}_{st}$ exists and   $D {\partial_\epsilon}  \phi^{ v+\epsilon h}_{st}  =  {\partial_\epsilon} D \phi^{ v+\epsilon h}_{st}$ (see \cite{cou:36}, page 56). Then since $D\phi_{st}^{v+\epsilon h}$ is nonsingular (by the diffeomorphic property) and differentiable  with respect to $\epsilon$ we have that
\begin{align*}
\partial_\epsilon  \log \det D\phi_{st}^{v+\epsilon h }(x)&= \text{trace}\bigl\{ [D\phi_{st}^{v+\epsilon h}(x)]^{-1}   \partial_\epsilon D\phi_{st}^{v+\epsilon h}(x) \bigr\} \\
 &= \text{trace}\bigl\{ [D\phi_{st}^{v+\epsilon h}(x)]^{-1}  D \partial_\epsilon \phi_{st}^{v+\epsilon h}(x)  \bigr\} .
\end{align*}
Therefore, by (\ref{pass}),
\begin{align*}
\partial_\epsilon  \log \det D\phi_{1}^{v+\epsilon h }(x)\bigr|_{\epsilon = 0} 
& =\text{trace}\left\{  \bigl[ D\phi_1^v(x) \bigr]^{-1} \int_0^1 D\bigl[ \{D\phi_{u1}^v h_u  \}\circ \phi_{1u}^v\circ \phi^v_1(x)  \bigr]  du  \right\} \\
& =\text{trace}\left\{  \bigl[ D\phi_1^v(x) \bigr]^{-1} \int_0^1 D\bigl[ \{D\phi_{u1}^v h_u  \}\circ \phi_{1u}^v(y)\bigr]\Bigr|_{y=\phi^v_1(x)}     du\, D\phi^v_1(x)  \right\} \\
& = \int_0^1\text{trace}\left\{  D\bigl[ \{D\phi_{u1}^v h_u  \}\circ \phi_{1u}^v(y)\bigr]\Bigr|_{y=\phi^v_1(x)}   \right\}    du .  
\end{align*}
Now notice that
\begin{align*}
\text{trace}\, D \bigl[ \{  D\phi_{u1}^v h_u\}\circ \phi^v_{1u}\bigr] 
&= \text{trace}\, \bigl[ \{  D (D\phi_{u1}^v h_u)\}\circ \phi^v_{1u} D(\phi^v_{1u})\bigr] \\
&= \text{trace}\, \bigl[    \{D  (D\phi_{u1}^v  h_u) \} (D\phi_{u1}^v )^{-1}\bigr]\circ \phi^v_{1u} \\
 &=\bigl\langle  h_u\circ  \phi^v_{1u} ,(\nabla \log\det D\phi_{u1}^v )\circ  \phi^v_{1u} \bigr\rangle_d + (\text{div}\, h_u) \circ  \phi^v_{1u} .
\end{align*}
The last line follows from the identity: $\text{trace}\, \bigl[    \{D  [D\phi_{u1}^v h_u] \} (D\phi_{u1}^v)^{-1}\bigr]= \bigl\langle  h_u,\nabla \log\det D\phi_{u1}^v \bigr\rangle_d + \text{trace}(D h_u)$. 
\if\Ver\LongVer{ 
{\flushleft\textcolor{blue}{$\downarrow$---------begin long version---------}}\newline
$\text{trace}\, \bigl[    \{D  D\phi h \} (D\phi)^{-1}\bigr]= \bigl\langle  h,\nabla \log\det D\phi \bigr\rangle_d + \text{div}\, h$. This is easily seen to be true using a symbolic mathematical program. For example the following {\sc{Matlab}} code works
\begin{quote}
\tt{syms x y;\\
f1=sym('f1(x,y)');\\
f2=sym('f2(x,y)'); \\
h1=sym('h1(x,y)'); \\
h2=sym('h2(x,y)'); \\
F=[f1;f2]; \\
h=[h1;h2]; \\
DF=jacobian(F,[x y]);\\
LHS=trace( jacobian(DF*h,[x y])*inv(DF) );\\
RHS=(jacobian(log(det(DF)),[x y])  )*h + trace(jacobian(h,[x y]));\\
simplify(LHS-RHS)\\
\%Note that this simplifies to zero!!!!!}
\end{quote}
{\flushleft\textcolor{blue}{$\uparrow$------------end long version---------}}\newline
} \fi
Therefore
 \begin{align*}
\partial_\epsilon  \log \det D\phi_{1}^{v+\epsilon h }(x)\bigr|_{\epsilon = 0} & = \int_0^1  \Bigl[ h_u \cdot \nabla \log\det D\phi_{u1}^v   + \text{div}\, h_u  \Bigr]\circ  \phi^v_{u}(x)  du.
\end{align*}

\end{proof}
\if\Ver\LongVer{ 
{\flushleft\textcolor{blue}{$\downarrow$---------begin long version---------}}\newline
\section{Radial kernel derivatives}
The following lemma shows how to simplify the evolution of $q,m, A$ and $b$ in the above equations when the reproducing kernel $R(x,y)$ is a radial kernel $R(|x-y|)$.
\begin{lemma}
Let $R$ be a real valued function defined on $[0,\infty)$ such that  $R(|\cdot|)\in C^4(\Bbb R)$. If $x\neq y$ are in $\Bbb R^d$, then
\begin{align}
\nabla_x R(|x-y|)&=R^\prime(|x-y|)\frac{(x-y)}{|x-y|} \label{yy1}\\
\nabla_y \otimes \nabla_x R(|x-y|)&=(x-y)\otimes (y-x) \left[ \frac{R^{\prime\prime}(|x-y|)}{|x-y|^2}  -  \frac{R^{\prime}(|x-y|)}{|x-y|^3} \right]  - \frac{R^{\prime}(|x-y|) }{|x-y|} \text{\rm Id}_d  \label{yy2} \\
\nabla_{y} [\nabla_y \cdot \nabla_x R(|x-y|)]&= \frac{y-x}{|x-y|}\left[-R^{\prime\prime\prime}(|x-y|)+ (1-d)\frac{R^{\prime\prime}(|x-y|)}{|x-y|} -   (1-d)\frac{R^{\prime}(|x-y|)}{|x-y|^2}  \right].  \label{yy3}
\end{align}
When $x=y$,
\begin{align}
\nabla_x R(|x-y|)\Bigr|_{x=y}&=0   \label{yy4}\\
\nabla_y \otimes \nabla_x R(|x-y|)\Bigr|_{x=y}&= -\text{\rm Id}_d  R^{\prime\prime}(0)  \label{yy5}\\
\nabla_{y} [\nabla_y \cdot \nabla_x R(|x-y|)]\Bigr|_{x=y}&=0.    \label{yy6}
\end{align}
Finally notice that $\nabla_y \otimes \nabla_x R(|x-y|)=-\nabla_x \otimes \nabla_x R(|x-y|)$.
\end{lemma}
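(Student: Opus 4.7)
The plan is to set $r \equiv |x-y|$ and exploit the basic chain-rule identities $\nabla_x r = (x-y)/r$ and $\nabla_y r = (y-x)/r = -\nabla_x r$, which are valid away from the diagonal $x=y$. All three equations (\ref{yy1})--(\ref{yy3}) will then come from mechanical differentiation, and the three equations (\ref{yy4})--(\ref{yy6}) will come from Taylor expanding $R$ at $0$, using the fact that $R(|\cdot|) \in C^4(\mathbb R^d)$ forces $R$ (after reflecting to an even function on $\mathbb R$) to have $R'(0)=R'''(0)=0$.

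First I would derive (\ref{yy1}) from $\nabla_x R(r) = R'(r)\,\nabla_x r$. Next, to obtain (\ref{yy2}), I would apply $\nabla_y$ componentwise to the vector field $R'(r)(x-y)/r$, using the product rule:
\begin{equation*}
\partial_{y_j}\!\left[R'(r)\,\tfrac{x_i-y_i}{r}\right]
= R''(r)\,\tfrac{y_j-x_j}{r}\,\tfrac{x_i-y_i}{r}
+ R'(r)\!\left[-\tfrac{\delta_{ij}}{r} + \tfrac{(x_i-y_i)(x_j-y_j)}{r^3}\right],
\end{equation*}
and then rewrite the result in coordinate-free form using the outer product $(x-y)\otimes(y-x)$. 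For (\ref{yy3}), I would first take the trace of (\ref{yy2}), using $(x-y)\cdot(y-x)=-r^2$, to get the scalar identity $\nabla_y\!\cdot\!\nabla_x R(|x-y|) = -R''(r) + (1-d)R'(r)/r$. Then I would apply $\nabla_y$ to this scalar using $\nabla_y r = (y-x)/r$ and $\nabla_y(1/r) = (x-y)/r^3$, collecting terms to factor out $(y-x)/r$.

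For the diagonal values (\ref{yy4})--(\ref{yy6}), the key observation is that the hypothesis $R(|\cdot|) \in C^4(\mathbb R)$ implies $R$ extends to an even $C^4$ function, so $R'(0) = R'''(0) = 0$ and hence $R'(r) = R''(0)\,r + O(r^3)$, $R''(r) = R''(0) + O(r^2)$, $R'''(r) = O(r)$. Equation (\ref{yy4}) then follows immediately since $R'(r)(x-y)/r \sim R''(0)(x-y) \to 0$. For (\ref{yy5}) I would expand the bracket $R''/r^2 - R'/r^3 = (rR''-R')/r^3$: a Taylor computation gives $rR''(r)-R'(r) = r^3 R''''(0)/3 + O(r^5)$, so this bracket stays bounded as $r\to 0$, while $(x-y)\otimes(y-x) = O(r^2)$ kills the first term, leaving only $-R'(r)/r\cdot \text{Id}_d \to -R''(0)\text{Id}_d$. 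For (\ref{yy6}) the same Taylor estimate shows that the bracket inside (\ref{yy3}) is $O(r)$, and multiplying by the unit-length factor $(y-x)/r$ produces a limit of zero.

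Finally, the relation $\nabla_y\otimes\nabla_x R(|x-y|) = -\nabla_x\otimes\nabla_x R(|x-y|)$ is immediate from translation invariance: writing $u=x-y$, we have $\partial_{y_j} = -\partial_{u_j} = -\partial_{x_j}$ when acting on any function of $u$ alone. The main (only) obstacle is the diagonal analysis, since the expressions (\ref{yy1})--(\ref{yy3}) are formally singular at $r=0$; however, once I have the Taylor estimate $rR''(r)-R'(r) = \tfrac{1}{3}r^3 R''''(0)+O(r^5)$ in hand, all three limits become straightforward cancellations. No delicate analysis beyond tracking the vanishing of odd derivatives of $R$ at the origin is needed.
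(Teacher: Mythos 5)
Your proof is correct and follows essentially the same route as the paper's: chain and product rule for the two off-diagonal gradient formulas, the trace identity $\nabla_y\cdot\nabla_x R(|x-y|)=-R''(r)+(1-d)R'(r)/r$ for the third, and Taylor expansion exploiting $R'(0)=R'''(0)=0$ (from evenness of $t\mapsto R(|t|)$) for the diagonal values. The only immaterial deviations are that the paper establishes the last diagonal identity by a first-principles difference quotient of the scalar divergence rather than by your limit of the off-diagonal formula, and that your remainder in $rR''(r)-R'(r)=\tfrac13 r^3R''''(0)+O(r^5)$ should read $o(r^3)$ under the stated $C^4$ hypothesis (boundedness of the bracket is all you actually use, so nothing breaks).
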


\begin{proof}
The equation (\ref{yy1}) holds by noticing that $\nabla_x |x-y|= (x-y)/|x-y|$. Then equation  (\ref{yy4}) holds since $R^{\prime}(|x-y|)\rightarrow 0$ as $|x-y|\rightarrow 0$. The second equation (\ref{yy2}) follows since
\[ \nabla_y \otimes \nabla_x R(|x-y|)=  \nabla_y  [(x-y) F(|x-y|)]=-e_i F(|x-y|)+ F^\prime(|x-y|) \frac{(y-x)\otimes (x-y)}{|x-y|}\]
where $F(y)=R^\prime(y)/y$ and $e_i=(0,\ldots,1,\ldots,0)$ (with the $1$ in the $i^\text{th}$ coordinate) so that
\[ F^\prime(y)=\frac{R^{\prime\prime}(y)y- R^{\prime}(y)}{y^2} =\frac{R^{\prime\prime}(y)}{y}-\frac{ R^{\prime}(y)}{y^2}.\]
Also notice that $ F^\prime(|x-y|) \frac{(y-x)\otimes (x-y)}{|x-y|}\rightarrow 0$ as $|x-y|\rightarrow 0$ since each coordinate is bounded by $F^\prime(|x-y|) |x-y|= R^{\prime\prime}(|x-y|)- R^\prime(|x-y|)/|x-y|\rightarrow R^{\prime\prime}(0)- R^{\prime\prime}(0)=0$. Now since $F(|x-y|)\rightarrow R^{\prime\prime}(0)$ as $|x-y|\rightarrow 0$, we also get equation (\ref{yy5}).

Finally notice that (\ref{yy3}) follows since
\[ \nabla_y \cdot \nabla_x R(|x-y|)=\text{trace}\bigl\{\nabla_y \otimes \nabla_x R(|x-y|) \bigr\}=-R^{\prime\prime}(|x-y|)+(1-d)\frac{R^\prime(|x-y|)}{|x-y|}\]
Now when $x\neq y$
\[ \nabla_y [ \nabla_y \cdot \nabla_x R(|x-y|)]=-R^{\prime\prime\prime}(|x-y|)\frac{y-x}{|x-y|}+(1-d) F^\prime(|x-y|) \frac{y-x}{|x-y|}. \]
To study the case when $x=y$ we can form the difference quotient to compute  $\frac{\partial }{\partial y_i} \nabla_y \cdot \nabla_x R(|x-y|)$
\begin{align}
\left[\frac{\partial }{\partial y_i} \nabla_y \cdot \nabla_x R(|x-y|)\right]_{x-y=0}&=\frac{ \nabla_y \cdot \nabla_x R(|\epsilon e_i|)-\nabla_y \cdot \nabla_x R(0) }{\epsilon} \\
&=\frac{-R^{\prime\prime}(\epsilon)+(1-d){R^\prime(\epsilon)}/{\epsilon}  +d R^{\prime\prime}(0) }{\epsilon} \\
&=\frac{-R^{\prime\prime}(0)+ O(\epsilon^2)+(1-d){R^{\prime\prime}(0)} + O(\epsilon^2)  +d R^{\prime\prime}(0) }{\epsilon}\label{IamSmall} \\
&=O(\epsilon)
\end{align}
where (\ref{IamSmall}) following since $R^\prime(0)=R^{\prime\prime\prime}(0)=0$.
\end{proof}
{\flushleft\textcolor{blue}{$\uparrow$------------end long version---------}}\newline
} \fi

\if\Ver\LongVer{ 
{\flushleft\textcolor{blue}{$\downarrow$---------begin long version---------}}\newline

\section{Numerical techniques: geodesic and transpose flows}
Here we explicitly show how to generate Geodesic flows.

\subsection{Geodesic flow  when  $v^t(x)=\sum_{j=1}^N \eta^t_j R(x,\kappa^t_j) $ on $\Bbb R^1$}
Suppose $v^t(x)$ is a vector field on $\Bbb R$ such that $v^t(x)=\sum_{j=1}^N \eta^t_j R(x,\kappa^t_j)$.
We give formal arguments that  show how to generate the full path of coefficients $\{\eta^t_j: t\in [0,1], j=1,\ldots, N\}$ and knots $\{\kappa^t_j: t\in [0,1], j=1,\ldots, N\}$ from the $t=0$ values when the vector field flow minimizes $\frac{1}{2}\int_0^1 \| v^t  \|^2_R dt$ under the endpoint constraints $\kappa_j^{0}=X_j$ and $\kappa_j^{1}=\phi^1(X_k)$. {\em Imagine restricting all the flows to having this form, then minimizing the PMLE. The minimum fixes $\phi_\text{optim}^1$ and the flow vector field also minimizes  $\frac{1}{2}\int_0^1 \| v^t  \|^2_R dt$ under the endpoint constraints $\kappa_j^{0}=X_j$ and $\kappa_j^{1}=\phi_\text{optim}^1(X_k)$}

Start by noticing that $\frac{1}{2}\int_0^1 \| v^t  \|^2_R dt = \frac{1}{2}\int_0^1 \eta^T R \eta dt $ where $\eta\equiv (\eta_1,\ldots,\eta_N)^T$ and $R=(R(\kappa_i,\kappa_j))_{i,j = 1}^N$ (where we are suppressing the time variable $t$ dependence on $\eta_j$ and $x_j$). Notice $\dot \kappa = R \eta$ where $\kappa = (\kappa_1,\ldots, \kappa_N)^T$ so that 
\[ \frac{1}{2}\int_0^1 \| v^t  \|^2_V dt =  \frac{1}{2}\int_0^1 \dot\kappa^T R^{-1} \dot\kappa\, dt = \int_0^1 L(\kappa,\dot\kappa) dt   \]
where $L(\kappa,\dot\kappa) =\dot\kappa^T R^{-1} \dot\kappa/2$.
 Now take a time varying perturbation $\kappa^t +\epsilon h^t$ which does not perturb the endpoints. In particular $h^0(\kappa^0_k)=h^1(\kappa^1_k)=0$ for all $k=1,\ldots, N$. Now formally since vector field flow minimizes $\frac{1}{2}\int_0^1 \| v^t  \|^2_R dt$ under the endpoint constraints $\kappa_j^{0}=X_j$ and $\kappa_j^{1}=\phi_1(X_k)$ we get 
\begin{align*}
0&=\frac{d}{d\epsilon}\left[\int_0^1 L(\kappa+\epsilon h,\dot \kappa+\epsilon \dot h)dt\right]_{\epsilon = 0} 
=\int_0^1 \Bigl[ L^{(1,0)}(\kappa,\dot \kappa)\cdot h + L^{(0,1)}(\kappa,\dot \kappa)\cdot \dot h\Bigr]  dt  \\
&=\int_0^1 \left[L^{(1,0)}(\kappa,\dot \kappa) - \frac{d}{dt}L^{(1,0)}(\kappa,\dot \kappa)\right] \cdot h  \, dt 
\end{align*}
where the last line follows by integration by parts and the assumption that $h$ is zero at the endpoints (we define  $L^{(0,1)}(x,y)$ to be the row  vector with $i^\text{th}$ element $\frac{\partial L}{\partial y_i}$ and  $L^{(1,0)}(x,y)$ to be the row  vector with $i^\text{th}$ element $\frac{\partial L}{\partial x_i}$). Since the perturbation direction $h$ was arbitrary (with the endpoint constraints) we get Euler-Lagrange equation
\begin{equation}
\label{EL}
 \frac{d}{dt}L^{(1,0)}(\kappa,\dot \kappa) = L^{(1,0)}(\kappa,\dot \kappa)
\end{equation}
Now, notice  $ L^{(0,1)}(\kappa,\dot \kappa) = R^{-1}\dot \kappa = \eta$ and the $i^\text{th}$ value of $L^{(0,1)}(\kappa,\dot \kappa) $ can be computed as follows
\begin{align*}
L^{(1,0)}(\kappa,\dot \kappa)_i & = \frac{\partial}{ \partial \kappa_i} \frac{ \dot\kappa^T R^{-1} \dot\kappa}{2}  = -\frac{1}{2} \dot\kappa^T R^{-1} \frac{\partial R}{\partial \kappa_i}  R^{-1} \dot\kappa  =   -\frac{1}{2} \eta^T \frac{\partial R}{\partial \kappa_i}  \eta \\
&= -\frac{\eta_i}{2}\Bigl( \sum_{j=1}^N \eta_j R^{(1,0)}(\kappa_i,\kappa_j) + \sum_{j=1}^N \eta_j R^{(0,1)}(\kappa_j,\kappa_r)   \Bigr) \\
&= -{\eta_i} \sum_{j=1}^N \eta_j R^{(1,0)}(\kappa_i,\kappa_j) 
\end{align*}
where the last line follows when $R(x,y) = R(|x-y|)$ so that $R^{(1,0)}(x,y)= - R^{(0,1)}(x,y)$, $R^{(1,0)}(x,y)= - R^{(1,0)}(y,x)$ and $R^{(1,0)}(x,y)=  R^{(0,1)}(y,x)$.
Therefore we can simplify the Euler-Lagrange equation (\ref{EL}) to the following update equation for $\eta$
\begin{equation}
\label{updateEta} 
\frac{d\eta_i}{dt} = -{\eta_i} \sum_{j=1}^N \eta_j R^{(1,0)}(\kappa_i,\kappa_j). \end{equation}
We also trivially have the update equation for the knots
\begin{equation}
\label{updateKnots}
\frac{d\kappa_i}{dt}=\sum_{j=1}^N \eta_j R(\kappa_i,\kappa_j)  .
\end{equation}

\subsection{Perturbing $\eta$ and $\kappa$ at $t=0$, in $\Bbb R^1$}
Notice that since the full path of $\eta$ and $\kappa$ is determined at time $t=0$ one can perturb these values at time zero  $\eta^0+\epsilon \delta \eta^0$ and $\kappa^0+ \epsilon \delta \kappa^0$. If $\epsilon$ is infinitesimal, this results in perturbations $\delta \kappa$, $\delta \eta$ at all times. 
Taking the derivatives (with respect to $\epsilon$) on both sides of (\ref{updateEta}) and (\ref{updateKnots}) one gets the following linear ODE characterization of $\delta\eta$
\begin{align}
\frac{d\delta \eta_i}{dt} &= -\sum_{j=1}^N \delta\eta_i \,  \eta_j R^{(1,0)}(\kappa_i,\kappa_j) +  {\delta\eta_j}\,  \eta_i R^{(1,0)}(\kappa_i,\kappa_j) \nonumber \\
&\qquad - \sum_{j=1}^N \delta\kappa_i\, {\eta_i}  \eta_j R^{(2,0)}(\kappa_i,\kappa_j)+ \delta\kappa_j\, {\eta_i}  \eta_j R^{(1,1)}(\kappa_i,\kappa_j).
\end{align} 
Similarly one gets the following linear ODE characterization of $\delta\kappa$
\begin{equation}
\frac{d\delta\kappa_i}{dt}=\sum_{j=1}^N \delta\eta_j\, R(\kappa_i,\kappa_j) +   \delta \kappa_i\,\eta_j R^{(1,0)}(\kappa_i,\kappa_j) +  \delta \kappa_j\,\eta_j R^{(0,1)}(\kappa_i,\kappa_j).
\end{equation}
At the map level one also gets perturbations $\delta \phi$ and $\delta D\phi$ with initial conditions $\delta \phi^0 \equiv 0$ and  $\delta D \phi^0 \equiv 0$
\begin{align}
\frac{d}{dt}\phi(x) &= \sum_{j=1}^N \eta_j R(\phi(x),\kappa_j) \\
\frac{d}{dt}D\phi(x) &= D\phi(x)\sum_{j=1}^N \eta_j R^{(1,0)}(\phi(x),\kappa_j) \\
\frac{d}{dt}\delta\phi(x) &= \sum_{j=1}^N \delta\eta_j R(\phi(x),\kappa_j) + \delta \phi (x)\, \eta_j R^{(1,0)}(\phi(x),\kappa_j) +  \delta \kappa_j\, \eta_j R^{(0,1)}(\phi(x),\kappa_j)\\
\frac{d}{dt}\delta D\phi(x) &= \sum_{j=1}^N  \delta D\phi(x)\,  \eta_j R^{(1,0)}(\phi(x),\kappa_j) +  \delta  \eta_j \, D\phi(x) R^{(1,0)}(\phi(x),\kappa_j) \\
& \qquad +\delta \phi(x)\, D\phi(x) \eta_j R^{(2,0)}(\phi(x),\kappa_j)  +\delta \kappa_j \, D\phi(x) \eta_j R^{(1,1)}(\phi(x),\kappa_j)
\end{align}

Notice that these are linear equation so that we can {\em stack} the perturbation variables 
\begin{equation}
\frac{d}{dt} \left[\begin{array}{c} \delta\eta\\ \delta \kappa \\ \delta\phi \\ \delta D\phi   \end{array}\right]
= U  \left[\begin{array}{c} \delta\eta\\ \delta \kappa \\ \delta\phi \\ \delta D\phi   \end{array}\right]
\end{equation}
where $U$ is a matrix which depends on $\eta, \kappa, \phi(x) $ and $D\phi(x)$ but not on  $\delta\eta, \delta\kappa, \delta\phi(x), \delta D\phi(x)$.

To see the effect of the rate of change on the energy notice that when perturbing $\eta$ and $\kappa$ one gets a perturbation  $\phi^t(X_i) +\epsilon \delta \phi^t(X_i)$. Notice that $\delta \phi$ is a Eulerian specification of a flow field.  To switch to Lagrangian specification on defines $u^t(x)$ as satisfying the identity
\[ u^t (\phi^t(x))=\delta \phi^t(x)\]
Earlier we derive the rate of change of the log likelihood at time $t=1$ with respect to Lagrangian coordinates as follows
\[  \frac{1}{n} \sum_{i=1}^n \text{div}\, u (\phi^1(X_i)) + \langle \nabla H (\phi^1(X_i)), u^1(\phi^1(X_i))\rangle.  \]
 To switch to the Eulerian specification notice that $D\delta \phi^t (x) =  Du^t (\phi^t(x)) D\phi^t(x) $. Therefore $\text{trace} ([D\delta \phi^t (x)][ D\phi^t(x)]^{-1} ) = \text{trace}\bigl( Du^t (\phi^t(x))\bigr)  =  \text{div}\, u (\phi^1(x))$ and hence the rate of change of the log likelihood can be computed as
\[ \frac{1}{n} \sum_{i=1}^n \text{trace} ([ D\phi^t(X_i)]^{-1} [D\delta \phi^t (X_i)]) + \langle \nabla H (\phi^1(X_i)), \delta \phi^1(X_i)\rangle  \]

Also notice that the rate of the hamiltonian ....

\subsection{Transpose flow when  $v^t(x)=\sum_{j=1}^N \eta^t_j R(x,\kappa^t_j) $ on $\Bbb R^1$}

{\flushleft\textcolor{blue}{$\uparrow$------------end long version---------}}\newline
} \fi

\bibliography{refs}

\end{document}